\edef\normalE{\the\mathcode`E}
\theoremstyle{thmstyleone}%
\newtheorem{theorem}{Theorem}[section]
\newtheorem{lemma}[theorem]{Lemma}
\newtheorem{prop}[theorem]{Proposition}
\theoremstyle{thmstyletwo}%
\newtheorem{remark}{Remark}%
\theoremstyle{thmstylethree}%
\newtheorem{definition}{Definition}%
\begin{document}

\title[Article Title]{Morphological Sampling Theorem and its Extension to Grey-value Images}


\author*[1]{\fnm{Vivek} \sur{Sridhar}}\email{sridharvivek95@b-tu.de}

\author[1]{\fnm{Michael} \sur{Breu{\ss}}}\email{breuss@b-tu.de}


\affil*[1]{\orgdiv{Institute for Mathematics}, \orgname{BTU Cottbus-Senftenberg}, \orgaddress{\street{Platz der Deutschen Einheit 1}, \city{Cottbus}, \postcode{03046}, \state{Brandenburg}, \country{Germany}}}




\abstract{Sampling is a basic operation in image processing. 
In classic literature, 
a morphological sampling theorem has been established, which shows 
how sampling interacts by morphological operations with image reconstruction. 
Many aspects of morphological sampling have been investigated for binary 
images, but only some of them have been explored for grey-value imagery. 

With this paper, we make a step towards completion of this open matter. 
By relying on the umbra notion, we show how to transfer classic theorems 
in binary morphology about the interaction of sampling with 
the fundamental morphological operations
dilation, erosion, opening and closing, to the grey-value setting. 
In doing this we also extend the theory relating the morphological operations
and corresponding reconstructions
to use of non-flat structuring elements. 
We illustrate the theoretical developments at hand of examples.
}

\keywords{Sampling theorem, Mathematical morphology, Dilation, Erosion, Opening, Closing, Non-flat morphology, Max-pooling}



\maketitle

\section{Introduction}\label{Sec:Intro}

Mathematical morphology is a very successful approach in image  processing, cf.\ \cite{Serra-Soille,Najman-Talbot,Roerdink-2011} for an account. Morphological filters make use of a so called structuring element (SE). The SE is characterised by shape, size and centre location. There are two types of SEs, flat and non-flat \cite{r1}. A flat SE defines a neighbourhood of the centre pixel where morphological operations take place.
A non-flat SE may additionally contain finite values used as additive offsets. 
The basic morphological operations are dilation and erosion. In a discrete setting as discussed in this work, these operations are realized by setting 
a pixel value to the maximum or minimum of the discrete image function within the SE centred upon it, respectively. The fundamental building blocks dilation and erosion
may be combined to many morphological processes of practical interest, 
like e.g.\ opening, closing or top hats. 

Sampling is a basic operation in signal and image processing. 
The celebrated Nyquist-Shannon sampling theorem relates the 
bandwidth of a continuous-scale signal
to its reconstruction via equidistant sampled values, cf.\ \cite{Shannon-1998}
for an account. 
Turning to morphological filters, the classic sampling theorem
has an analogon within the framework of discrete sets and lattices. 
In this setting, the proceeding is based on image reconstruction 
by using samples together with the standard morphological processes 
of dilation and erosion as well as their combinations;
see the classic works of Haralick and co-authors \cite{r2, Haralick88}
as well as previous developments in \cite{r1}.
In these works, sampling on the image grid was put in relation 
with image reconstruction via dilation and closing,
and formulated in \cite{r2} as the 
digital morphological sampling theorem.

Considering the literature that followed the seminal work \cite{r2} on morphological sampling, we are not aware of further elaborations on sampling issues related to morphological filters.
To the best of our knowledge the results documented in \cite{r2} have been cited for giving a
theoretical basis for different developments, but they have not been continued at exactly that point. 
However, in the mentioned work several mathematical
assertions related to sampling and its interaction
with the basic morphological processes dilation, 
erosion opening and closing have been addressed only for 
binary images, while they have not been carried over to 
the setting of grey value imagery. More precisely, 
this open issue refers to the situation when filtering
morphologically in the sampled domain, i.e., making use
of a sampled account of a given image.
This is from a computational point of view an interesting
setting since the image dimension and thus the amount
of necessary filtering operations may be reduced by sampling
considerably.

In the same line of classic works, the relation between 
morphological operations and reconstruction of grey-value images 
was explored in the context of max-pooling in \cite{r3}. 
Let us note that the max-pooling operation as introduced 
in \cite{ZC-88} is often used in convolutional neural 
networks \cite{Goodfellow-et-al-2016}.

Let us elaborate a bit more at this point. 
While \cite{r3} presents relations between
operating morphologically before sampling and
morphologically operating in the sampled domain, 
there are several limitations by the proceeding in \cite{r3},
so that it represents an important step in the investigation
of morphological operations in context of sampling and reconstruction,
but it is not a complete theory. 
First, the theory in \cite{r3} is limited to a particular 
type of sampling, i.e.\ max-pooling. 
Max-pooling is the dilation by a square flat SE followed 
by sampling, i.e., dilation is used in a first 
step as a filter before the sampling step. 
Reconstruction is also obtained in \cite{r3} by dilation. 
Since dilation is not an idempotent operation, it is seldom used 
directly as a filter in real world applications, and a corresponding 
sampling and reconstruction setting as in \cite{r3} bears 
considerable restrictions. 
Let us also note that the work \cite{r3} assumes that the 
SE is already a subset of the sampled domain, i.e., 
technically the SE is affected by and acts on only those 
pixels of the image which are sampled.
Finally, the work in \cite{r3} is limited to flat filters 
and SEs. The discussions are limited to the lattice-algebraic 
framework \cite{r4}, which lacks the tools to work with 
non-flat morphology. 

Let us also elaborate a little more on the latter aspect in order 
to clarify the nature of our proposed extensions upon aforementioned 
classic works. The lattice algebraic theory provides a rich framework to study 
mathematical morphology, see e.g.~\cite{r4, r5}.
Lattice theory examines morphological operations as transformations 
on the complete lattice group of images. The ordering within the lattice
is based on the inherent ordering, whether partial or total, of the 
pixel values, which, in the case of grey-value images, is the total 
order of the set of integers that make up the grey values. 
To summarize, lattice theory is largely based on \emph{tonal} relationship
between pixels. However, the lattice theory in itself lacks effective tools to deal 
with non-flat SEs and sampling.  In particular, it largely neglects 
the \emph{spatial} relationship between pixels of an image and the pixels of an SE 
as it moves across the image during morphological operations. The constraints 
imposed by lattice theory on the study of grey-value morphological sampling are 
apparent in \cite{r3}, where the authors attempt to employ this approach. 
Firstly, the filters and S.Es\ are restricted to be flat. Moreover, the SE is 
already in the sampled domain, that is, the action of SE on the pixels of the 
image which are sampled is unaffected by the image pixels which are not sampled.  

To overcome the aforementioned limitations, we employ the \emph{umbra formulation} of 
grey-value images. Umbra technique allows us to treat morphological operations 
on $N$-dimensional grey-value images (by flat or non-flat SEs) as binary morphological 
operations of their corresponding $(N+1)$-dimensional umbras \cite{r1}. 
Binary morphology is in turn founded on the \emph{spatial} arrangement of pixels. 
Its basic operations are set operations, where both the binary image and SE are treated 
as sets of positional vectors. Furthermore, \cite{r2} thoroughly examines the connection 
between binary morphology and sampling.

{\bf Our Contributions.}
In a previous conference paper \cite{sridhar-breuss-sampling-caip}, 
we have shown that 
it is possible to extend the work in \cite{r3} to non-flat 
filters and SEs using umbra formulation of morphological operations, 
and we have proposed a few results that can be derived from \cite{r2}. 
More precisely, we explored an alternative definition of
grey-scale opening and closing to prove reconstruction bounds
for the interaction of sampling with these operations.
In the current paper we build upon \cite{sridhar-breuss-sampling-caip}
and extend the classic work of Haralick et al.\ in \cite{r2} 
on digital morphological sampling of grey value images.
As the main point of our developments, we formulate and prove 
theorems relating morphological operations, sampling and 
image reconstruction by dilation and closing. Let us point out 
again clearly, that corresponding results have been derived 
in a relatively simple way for binary images in \cite{r2}, 
but they have not been extended up to now to grey-value imagery.
The theory we formulate is also used here to extend the work in 
\cite{r3} to non-flat morphology. In doing this, we give a theoretical 
foundation for specific uses of the max-pooling operation 
used in modern deep learning literature.
In total, compared to \cite{sridhar-breuss-sampling-caip},
we give a much more extensive account of the theoretical framework,
proving in this context several additional results.

We believe that especially the extension to non-flat morphology
may be an interesting point with respect to recent developments 
in incorporating 
morphological layers in neural networks, 
see for instance \cite{Shen-2019, Franchi-2020, Kirszenberg-2021, Hermary-2022}. 
There the learned morphological filters within the layers
are usually non-flat. Thus we give a theoretical foundation
of this recent machine learning technique by the current paper.

{\bf Paper Organisation.} In the next section we briefly recall the
classic notions from mathematical morphology for binary as well 
as grey value images, see for instance
\cite{r1,Matheron-75,Serra-82,Serra-88,Soille-2003}
for a corresponding account of the field and the 
basic notions underlying our work.
In addition, we will briefly recall the digital morphological 
sampling theorem in the
binary and grey value setting, respectively. 
The third section contains the main part of our new results. 
In the fourth section, we use the developed results to extend 
the theory in \cite{r3} to non-flat structuring elements.
We visualize the meaning of theoretical developments by 
some experiments within the text.
The paper is finished by concluding remarks.

\section{Morphological Operations}
\label{Sec:Fundamentals}

As indicated we now recall formal definitions and 
some fundamental properties of morphological operations
that help to assess the later developments.

\subsection{Morphological Notions for Binary Images}

Let $E$ denote the set of integers used to index the rows and column of the image. $E^N$ is a $N$-tuple of $E$.  A (two dimensional)binary image $A$ is a subset of $E^2$.  That is, if a vector $x\in A\subseteq E$, then the position at  $x$ is a \textit{white} dot, where the default background is \textit{black}. For sake of generality, we consider the image as $A\subseteq E^N$, $N\in \mathbb{N}$ \cite{r1}.

\begin{definition}{\textbf{Translation, Dilation and Erosion, Reflection, Duality.}}
\label{def:bdilation}
Let $A$, $B$ be subsets of $E^N$. For $x \in E^N$, the translation of $A$ by $x$ is
written as $(A)_x $ $=$ $\{ c\in E^N \vert c=a+x \text{ for some }  a\in A \} $.
The dilation of $A$ by $B$ is defined as
\begin{align}
\begin{split}
  A\oplus B &= \{ c\in E^N  \vert  c=a+b \text{ for some }  a\in A, b\in B\} \\
  &=  \bigcup_{b\in B} (A)_b
\end{split}
  \label{dilation}
\end{align}
The erosion of set $A$ by $B$ is defined
as
\begin{align}
\begin{split}
  A\ominus B &= \{x  \vert  x+b\in A \text{ for each } b\in B\} \\
  &=  \{ x\in E^N  \vert  (B)_x \subseteq A \}\\
  &=  \bigcap_{b\in B} (A)_{-b}
\end{split}
  \label{erosion}
\end{align}
In addition, the reflection of a set $B$ is denoted by $\breve{B} = \{x  \vert  \text{ for some } b\in B \text{, } x=-b \} $.
Moreover, it holds duality in the sense $(A\ominus B)^c = A^c \oplus \breve{B} $.
\end{definition}
Opening and Closing as described below can be employed to erase image details smaller than the structuring element without
distorting unsuppressed geometric features, see e.g.\ \cite{r1}. They can easily be generalised to the grey value setting.

\begin{definition}{\textbf{Opening and Closing, Duality of Opening and Closing.}}
\label{binary-opening-closing}
The opening of $B \subseteq E^N$ by structuring element $K$ is denoted by $B\circ K$ and is
defined as $B\circ K= (B\ominus K)\oplus K$. Analogously, opening is denoted as
$B\bullet K= (B\oplus K)\ominus K$. The operations are dual i.e.\ $(A\bullet B)^c =A^c \circ B$.
\end{definition}
We note that there exist the following alternative definitions of opening and closing which
are useful in proofs of various results:
\begin{align}
\begin{split}
  A\circ B  &= 
  \{ x\in A  \vert  \text{ for some } y \text{, } x\in B_y \subseteq A \} \\
  &=  \bigcup _{\{y \vert B_y\subseteq A \}} B_y
\end{split}
  \label{alt-open}
\end{align}
and
\begin{align}
\begin{split}
  A\bullet B &= 
  \{ x  \vert  x \in \breve{B_y} \text{ implies } \breve{B_y} \cap A \neq \emptyset \}
 \end{split}
  \label{alt-close}
\end{align}
Let us now give some comments on the meaning of the binary digital morphological sampling theorem at hand of an example, see Figure \ref{fig:KBS}. As the corresponding sampling theorem will be recalled in detail for the grey-value setting, we refrain from giving a more detailed exposition here.

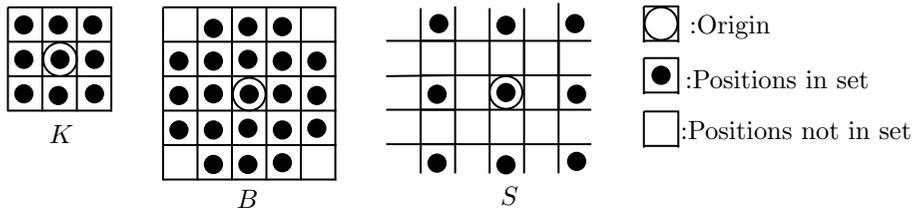
\begin{figure*}[!htbp]
\centering

\tikzset{every picture/.style={line width=0.75pt}} 

\begin{tikzpicture}[x=0.65pt,y=0.65pt,yscale=-1,xscale=1]

\draw  [draw opacity=0] (11,10) -- (71.5,10) -- (71.5,70.48) -- (11,70.48) -- cycle ; \draw   (11,10) -- (11,70.48)(31,10) -- (31,70.48)(51,10) -- (51,70.48)(71,10) -- (71,70.48) ; \draw   (11,10) -- (71.5,10)(11,30) -- (71.5,30)(11,50) -- (71.5,50)(11,70) -- (71.5,70) ; \draw    ;
\draw  [draw opacity=0] (101,10) -- (201.5,10) -- (201.5,110.48) -- (101,110.48) -- cycle ; \draw   (101,10) -- (101,110.48)(121,10) -- (121,110.48)(141,10) -- (141,110.48)(161,10) -- (161,110.48)(181,10) -- (181,110.48)(201,10) -- (201,110.48) ; \draw   (101,10) -- (201.5,10)(101,30) -- (201.5,30)(101,50) -- (201.5,50)(101,70) -- (201.5,70)(101,90) -- (201.5,90)(101,110) -- (201.5,110) ; \draw    ;
\draw  [draw opacity=0] (250.5,29.48) -- (349.5,29.48) -- (349.5,106.48) -- (250.5,106.48) -- cycle ; \draw   (250.5,29.48) -- (250.5,106.48)(270.5,29.48) -- (270.5,106.48)(290.5,29.48) -- (290.5,106.48)(310.5,29.48) -- (310.5,106.48)(330.5,29.48) -- (330.5,106.48) ; \draw   (250.5,29.48) -- (349.5,29.48)(250.5,49.48) -- (349.5,49.48)(250.5,69.48) -- (349.5,69.48)(250.5,89.48) -- (349.5,89.48) ; \draw    ;
\draw    (270.5,9.48) -- (270.5,29.48) ;

\draw    (290.5,10.48) -- (290.5,49.48) ;

\draw    (310.5,10.48) -- (310.5,49.48) ;

\draw    (330.5,7.48) -- (330.5,69.48) ;

\draw    (330.5,30.48) -- (330.5,69.48) ;

\draw    (250.5,10.48) -- (250.5,89.48) ;

\draw    (230.5,89.48) -- (270.5,89.48) ;

\draw    (230.5,69.48) -- (270.5,69.48) ;

\draw    (230.5,29.48) -- (270.5,29.48) ;

\draw    (230.5,49.98) -- (270.5,49.48) ;

\draw   (31.5,40.23) .. controls (31.5,34.85) and (35.86,30.48) .. (41.24,30.48) .. controls (46.62,30.48) and (50.98,34.85) .. (50.98,40.23) .. controls (50.98,45.61) and (46.62,49.97) .. (41.24,49.97) .. controls (35.86,49.97) and (31.5,45.61) .. (31.5,40.23) -- cycle ;
\draw   (141.5,60.47) .. controls (141.5,55.22) and (145.75,50.97) .. (151,50.97) .. controls (156.25,50.97) and (160.5,55.22) .. (160.5,60.47) .. controls (160.5,65.72) and (156.25,69.97) .. (151,69.97) .. controls (145.75,69.97) and (141.5,65.72) .. (141.5,60.47) -- cycle ;
\draw   (290.5,59.47) .. controls (290.5,54.22) and (294.75,49.97) .. (300,49.97) .. controls (305.25,49.97) and (309.5,54.22) .. (309.5,59.47) .. controls (309.5,64.72) and (305.25,68.97) .. (300,68.97) .. controls (294.75,68.97) and (290.5,64.72) .. (290.5,59.47) -- cycle ;
\draw  [fill={rgb, 255:red, 0; green, 0; blue, 0 }  ,fill opacity=1 ] (36.24,40.23) .. controls (36.24,37.47) and (38.48,35.23) .. (41.24,35.23) .. controls (44,35.23) and (46.24,37.47) .. (46.24,40.23) .. controls (46.24,42.99) and (44,45.23) .. (41.24,45.23) .. controls (38.48,45.23) and (36.24,42.99) .. (36.24,40.23) -- cycle ;
\draw  [fill={rgb, 255:red, 0; green, 0; blue, 0 }  ,fill opacity=1 ] (56.24,60.23) .. controls (56.24,57.47) and (58.48,55.23) .. (61.24,55.23) .. controls (64,55.23) and (66.24,57.47) .. (66.24,60.23) .. controls (66.24,62.99) and (64,65.23) .. (61.24,65.23) .. controls (58.48,65.23) and (56.24,62.99) .. (56.24,60.23) -- cycle ;
\draw  [fill={rgb, 255:red, 0; green, 0; blue, 0 }  ,fill opacity=1 ] (55.24,20.23) .. controls (55.24,17.47) and (57.48,15.23) .. (60.24,15.23) .. controls (63,15.23) and (65.24,17.47) .. (65.24,20.23) .. controls (65.24,22.99) and (63,25.23) .. (60.24,25.23) .. controls (57.48,25.23) and (55.24,22.99) .. (55.24,20.23) -- cycle ;
\draw  [fill={rgb, 255:red, 0; green, 0; blue, 0 }  ,fill opacity=1 ] (56.24,40.23) .. controls (56.24,37.47) and (58.48,35.23) .. (61.24,35.23) .. controls (64,35.23) and (66.24,37.47) .. (66.24,40.23) .. controls (66.24,42.99) and (64,45.23) .. (61.24,45.23) .. controls (58.48,45.23) and (56.24,42.99) .. (56.24,40.23) -- cycle ;
\draw  [fill={rgb, 255:red, 0; green, 0; blue, 0 }  ,fill opacity=1 ] (35.24,61.23) .. controls (35.24,58.47) and (37.48,56.23) .. (40.24,56.23) .. controls (43,56.23) and (45.24,58.47) .. (45.24,61.23) .. controls (45.24,63.99) and (43,66.23) .. (40.24,66.23) .. controls (37.48,66.23) and (35.24,63.99) .. (35.24,61.23) -- cycle ;
\draw  [fill={rgb, 255:red, 0; green, 0; blue, 0 }  ,fill opacity=1 ] (15.24,60.23) .. controls (15.24,57.47) and (17.48,55.23) .. (20.24,55.23) .. controls (23,55.23) and (25.24,57.47) .. (25.24,60.23) .. controls (25.24,62.99) and (23,65.23) .. (20.24,65.23) .. controls (17.48,65.23) and (15.24,62.99) .. (15.24,60.23) -- cycle ;
\draw  [fill={rgb, 255:red, 0; green, 0; blue, 0 }  ,fill opacity=1 ] (15.24,40.23) .. controls (15.24,37.47) and (17.48,35.23) .. (20.24,35.23) .. controls (23,35.23) and (25.24,37.47) .. (25.24,40.23) .. controls (25.24,42.99) and (23,45.23) .. (20.24,45.23) .. controls (17.48,45.23) and (15.24,42.99) .. (15.24,40.23) -- cycle ;
\draw  [fill={rgb, 255:red, 0; green, 0; blue, 0 }  ,fill opacity=1 ] (35.24,20.23) .. controls (35.24,17.47) and (37.48,15.23) .. (40.24,15.23) .. controls (43,15.23) and (45.24,17.47) .. (45.24,20.23) .. controls (45.24,22.99) and (43,25.23) .. (40.24,25.23) .. controls (37.48,25.23) and (35.24,22.99) .. (35.24,20.23) -- cycle ;
\draw  [fill={rgb, 255:red, 0; green, 0; blue, 0 }  ,fill opacity=1 ] (15.24,20.23) .. controls (15.24,17.47) and (17.48,15.23) .. (20.24,15.23) .. controls (23,15.23) and (25.24,17.47) .. (25.24,20.23) .. controls (25.24,22.99) and (23,25.23) .. (20.24,25.23) .. controls (17.48,25.23) and (15.24,22.99) .. (15.24,20.23) -- cycle ;
\draw  [fill={rgb, 255:red, 0; green, 0; blue, 0 }  ,fill opacity=1 ] (145.24,21.23) .. controls (145.24,18.47) and (147.48,16.23) .. (150.24,16.23) .. controls (153,16.23) and (155.24,18.47) .. (155.24,21.23) .. controls (155.24,23.99) and (153,26.23) .. (150.24,26.23) .. controls (147.48,26.23) and (145.24,23.99) .. (145.24,21.23) -- cycle ;
\draw  [fill={rgb, 255:red, 0; green, 0; blue, 0 }  ,fill opacity=1 ] (105.24,41.23) .. controls (105.24,38.47) and (107.48,36.23) .. (110.24,36.23) .. controls (113,36.23) and (115.24,38.47) .. (115.24,41.23) .. controls (115.24,43.99) and (113,46.23) .. (110.24,46.23) .. controls (107.48,46.23) and (105.24,43.99) .. (105.24,41.23) -- cycle ;
\draw  [fill={rgb, 255:red, 0; green, 0; blue, 0 }  ,fill opacity=1 ] (105.24,61.23) .. controls (105.24,58.47) and (107.48,56.23) .. (110.24,56.23) .. controls (113,56.23) and (115.24,58.47) .. (115.24,61.23) .. controls (115.24,63.99) and (113,66.23) .. (110.24,66.23) .. controls (107.48,66.23) and (105.24,63.99) .. (105.24,61.23) -- cycle ;
\draw  [fill={rgb, 255:red, 0; green, 0; blue, 0 }  ,fill opacity=1 ] (105.24,81.23) .. controls (105.24,78.47) and (107.48,76.23) .. (110.24,76.23) .. controls (113,76.23) and (115.24,78.47) .. (115.24,81.23) .. controls (115.24,83.99) and (113,86.23) .. (110.24,86.23) .. controls (107.48,86.23) and (105.24,83.99) .. (105.24,81.23) -- cycle ;
\draw  [fill={rgb, 255:red, 0; green, 0; blue, 0 }  ,fill opacity=1 ] (125.24,101.23) .. controls (125.24,98.47) and (127.48,96.23) .. (130.24,96.23) .. controls (133,96.23) and (135.24,98.47) .. (135.24,101.23) .. controls (135.24,103.99) and (133,106.23) .. (130.24,106.23) .. controls (127.48,106.23) and (125.24,103.99) .. (125.24,101.23) -- cycle ;
\draw  [fill={rgb, 255:red, 0; green, 0; blue, 0 }  ,fill opacity=1 ] (125.24,81.23) .. controls (125.24,78.47) and (127.48,76.23) .. (130.24,76.23) .. controls (133,76.23) and (135.24,78.47) .. (135.24,81.23) .. controls (135.24,83.99) and (133,86.23) .. (130.24,86.23) .. controls (127.48,86.23) and (125.24,83.99) .. (125.24,81.23) -- cycle ;
\draw  [fill={rgb, 255:red, 0; green, 0; blue, 0 }  ,fill opacity=1 ] (125.24,60.23) .. controls (125.24,57.47) and (127.48,55.23) .. (130.24,55.23) .. controls (133,55.23) and (135.24,57.47) .. (135.24,60.23) .. controls (135.24,62.99) and (133,65.23) .. (130.24,65.23) .. controls (127.48,65.23) and (125.24,62.99) .. (125.24,60.23) -- cycle ;
\draw  [fill={rgb, 255:red, 0; green, 0; blue, 0 }  ,fill opacity=1 ] (125.24,41.23) .. controls (125.24,38.47) and (127.48,36.23) .. (130.24,36.23) .. controls (133,36.23) and (135.24,38.47) .. (135.24,41.23) .. controls (135.24,43.99) and (133,46.23) .. (130.24,46.23) .. controls (127.48,46.23) and (125.24,43.99) .. (125.24,41.23) -- cycle ;
\draw  [fill={rgb, 255:red, 0; green, 0; blue, 0 }  ,fill opacity=1 ] (125.24,22.23) .. controls (125.24,19.47) and (127.48,17.23) .. (130.24,17.23) .. controls (133,17.23) and (135.24,19.47) .. (135.24,22.23) .. controls (135.24,24.99) and (133,27.23) .. (130.24,27.23) .. controls (127.48,27.23) and (125.24,24.99) .. (125.24,22.23) -- cycle ;
\draw  [fill={rgb, 255:red, 0; green, 0; blue, 0 }  ,fill opacity=1 ] (145.24,40.23) .. controls (145.24,37.47) and (147.48,35.23) .. (150.24,35.23) .. controls (153,35.23) and (155.24,37.47) .. (155.24,40.23) .. controls (155.24,42.99) and (153,45.23) .. (150.24,45.23) .. controls (147.48,45.23) and (145.24,42.99) .. (145.24,40.23) -- cycle ;
\draw  [fill={rgb, 255:red, 0; green, 0; blue, 0 }  ,fill opacity=1 ] (165.24,60.23) .. controls (165.24,57.47) and (167.48,55.23) .. (170.24,55.23) .. controls (173,55.23) and (175.24,57.47) .. (175.24,60.23) .. controls (175.24,62.99) and (173,65.23) .. (170.24,65.23) .. controls (167.48,65.23) and (165.24,62.99) .. (165.24,60.23) -- cycle ;
\draw  [fill={rgb, 255:red, 0; green, 0; blue, 0 }  ,fill opacity=1 ] (185.24,80.23) .. controls (185.24,77.47) and (187.48,75.23) .. (190.24,75.23) .. controls (193,75.23) and (195.24,77.47) .. (195.24,80.23) .. controls (195.24,82.99) and (193,85.23) .. (190.24,85.23) .. controls (187.48,85.23) and (185.24,82.99) .. (185.24,80.23) -- cycle ;
\draw  [fill={rgb, 255:red, 0; green, 0; blue, 0 }  ,fill opacity=1 ] (145.24,80.23) .. controls (145.24,77.47) and (147.48,75.23) .. (150.24,75.23) .. controls (153,75.23) and (155.24,77.47) .. (155.24,80.23) .. controls (155.24,82.99) and (153,85.23) .. (150.24,85.23) .. controls (147.48,85.23) and (145.24,82.99) .. (145.24,80.23) -- cycle ;
\draw  [fill={rgb, 255:red, 0; green, 0; blue, 0 }  ,fill opacity=1 ] (165.24,21.23) .. controls (165.24,18.47) and (167.48,16.23) .. (170.24,16.23) .. controls (173,16.23) and (175.24,18.47) .. (175.24,21.23) .. controls (175.24,23.99) and (173,26.23) .. (170.24,26.23) .. controls (167.48,26.23) and (165.24,23.99) .. (165.24,21.23) -- cycle ;
\draw  [fill={rgb, 255:red, 0; green, 0; blue, 0 }  ,fill opacity=1 ] (165.24,100.23) .. controls (165.24,97.47) and (167.48,95.23) .. (170.24,95.23) .. controls (173,95.23) and (175.24,97.47) .. (175.24,100.23) .. controls (175.24,102.99) and (173,105.23) .. (170.24,105.23) .. controls (167.48,105.23) and (165.24,102.99) .. (165.24,100.23) -- cycle ;
\draw  [fill={rgb, 255:red, 0; green, 0; blue, 0 }  ,fill opacity=1 ] (145.24,101.23) .. controls (145.24,98.47) and (147.48,96.23) .. (150.24,96.23) .. controls (153,96.23) and (155.24,98.47) .. (155.24,101.23) .. controls (155.24,103.99) and (153,106.23) .. (150.24,106.23) .. controls (147.48,106.23) and (145.24,103.99) .. (145.24,101.23) -- cycle ;
\draw  [fill={rgb, 255:red, 0; green, 0; blue, 0 }  ,fill opacity=1 ] (146,60.47) .. controls (146,57.71) and (148.24,55.47) .. (151,55.47) .. controls (153.76,55.47) and (156,57.71) .. (156,60.47) .. controls (156,63.23) and (153.76,65.47) .. (151,65.47) .. controls (148.24,65.47) and (146,63.23) .. (146,60.47) -- cycle ;
\draw  [fill={rgb, 255:red, 0; green, 0; blue, 0 }  ,fill opacity=1 ] (165.24,81.23) .. controls (165.24,78.47) and (167.48,76.23) .. (170.24,76.23) .. controls (173,76.23) and (175.24,78.47) .. (175.24,81.23) .. controls (175.24,83.99) and (173,86.23) .. (170.24,86.23) .. controls (167.48,86.23) and (165.24,83.99) .. (165.24,81.23) -- cycle ;
\draw  [fill={rgb, 255:red, 0; green, 0; blue, 0 }  ,fill opacity=1 ] (165.24,41.23) .. controls (165.24,38.47) and (167.48,36.23) .. (170.24,36.23) .. controls (173,36.23) and (175.24,38.47) .. (175.24,41.23) .. controls (175.24,43.99) and (173,46.23) .. (170.24,46.23) .. controls (167.48,46.23) and (165.24,43.99) .. (165.24,41.23) -- cycle ;
\draw  [fill={rgb, 255:red, 0; green, 0; blue, 0 }  ,fill opacity=1 ] (165.24,41.23) .. controls (165.24,38.47) and (167.48,36.23) .. (170.24,36.23) .. controls (173,36.23) and (175.24,38.47) .. (175.24,41.23) .. controls (175.24,43.99) and (173,46.23) .. (170.24,46.23) .. controls (167.48,46.23) and (165.24,43.99) .. (165.24,41.23) -- cycle ;
\draw  [fill={rgb, 255:red, 0; green, 0; blue, 0 }  ,fill opacity=1 ] (185.24,61.23) .. controls (185.24,58.47) and (187.48,56.23) .. (190.24,56.23) .. controls (193,56.23) and (195.24,58.47) .. (195.24,61.23) .. controls (195.24,63.99) and (193,66.23) .. (190.24,66.23) .. controls (187.48,66.23) and (185.24,63.99) .. (185.24,61.23) -- cycle ;
\draw  [fill={rgb, 255:red, 0; green, 0; blue, 0 }  ,fill opacity=1 ] (185.24,41.23) .. controls (185.24,38.47) and (187.48,36.23) .. (190.24,36.23) .. controls (193,36.23) and (195.24,38.47) .. (195.24,41.23) .. controls (195.24,43.99) and (193,46.23) .. (190.24,46.23) .. controls (187.48,46.23) and (185.24,43.99) .. (185.24,41.23) -- cycle ;
\draw  [fill={rgb, 255:red, 0; green, 0; blue, 0 }  ,fill opacity=1 ] (255,60.47) .. controls (255,57.71) and (257.24,55.47) .. (260,55.47) .. controls (262.76,55.47) and (265,57.71) .. (265,60.47) .. controls (265,63.23) and (262.76,65.47) .. (260,65.47) .. controls (257.24,65.47) and (255,63.23) .. (255,60.47) -- cycle ;
\draw  [fill={rgb, 255:red, 0; green, 0; blue, 0 }  ,fill opacity=1 ] (256,19.47) .. controls (256,16.71) and (258.24,14.47) .. (261,14.47) .. controls (263.76,14.47) and (266,16.71) .. (266,19.47) .. controls (266,22.23) and (263.76,24.47) .. (261,24.47) .. controls (258.24,24.47) and (256,22.23) .. (256,19.47) -- cycle ;
\draw  [fill={rgb, 255:red, 0; green, 0; blue, 0 }  ,fill opacity=1 ] (295,59.47) .. controls (295,56.71) and (297.24,54.47) .. (300,54.47) .. controls (302.76,54.47) and (305,56.71) .. (305,59.47) .. controls (305,62.23) and (302.76,64.47) .. (300,64.47) .. controls (297.24,64.47) and (295,62.23) .. (295,59.47) -- cycle ;
\draw  [fill={rgb, 255:red, 0; green, 0; blue, 0 }  ,fill opacity=1 ] (336,99.47) .. controls (336,96.71) and (338.24,94.47) .. (341,94.47) .. controls (343.76,94.47) and (346,96.71) .. (346,99.47) .. controls (346,102.23) and (343.76,104.47) .. (341,104.47) .. controls (338.24,104.47) and (336,102.23) .. (336,99.47) -- cycle ;
\draw  [fill={rgb, 255:red, 0; green, 0; blue, 0 }  ,fill opacity=1 ] (335,19.47) .. controls (335,16.71) and (337.24,14.47) .. (340,14.47) .. controls (342.76,14.47) and (345,16.71) .. (345,19.47) .. controls (345,22.23) and (342.76,24.47) .. (340,24.47) .. controls (337.24,24.47) and (335,22.23) .. (335,19.47) -- cycle ;
\draw  [fill={rgb, 255:red, 0; green, 0; blue, 0 }  ,fill opacity=1 ] (336,60.47) .. controls (336,57.71) and (338.24,55.47) .. (341,55.47) .. controls (343.76,55.47) and (346,57.71) .. (346,60.47) .. controls (346,63.23) and (343.76,65.47) .. (341,65.47) .. controls (338.24,65.47) and (336,63.23) .. (336,60.47) -- cycle ;
\draw  [fill={rgb, 255:red, 0; green, 0; blue, 0 }  ,fill opacity=1 ] (295,20.47) .. controls (295,17.71) and (297.24,15.47) .. (300,15.47) .. controls (302.76,15.47) and (305,17.71) .. (305,20.47) .. controls (305,23.23) and (302.76,25.47) .. (300,25.47) .. controls (297.24,25.47) and (295,23.23) .. (295,20.47) -- cycle ;
\draw  [fill={rgb, 255:red, 0; green, 0; blue, 0 }  ,fill opacity=1 ] (295,101.47) .. controls (295,98.71) and (297.24,96.47) .. (300,96.47) .. controls (302.76,96.47) and (305,98.71) .. (305,101.47) .. controls (305,104.23) and (302.76,106.47) .. (300,106.47) .. controls (297.24,106.47) and (295,104.23) .. (295,101.47) -- cycle ;
\draw  [fill={rgb, 255:red, 0; green, 0; blue, 0 }  ,fill opacity=1 ] (255,100.47) .. controls (255,97.71) and (257.24,95.47) .. (260,95.47) .. controls (262.76,95.47) and (265,97.71) .. (265,100.47) .. controls (265,103.23) and (262.76,105.47) .. (260,105.47) .. controls (257.24,105.47) and (255,103.23) .. (255,100.47) -- cycle ;
\draw  [draw opacity=0] (380,39.48) -- (400.5,39.48) -- (400.5,60.48) -- (380,60.48) -- cycle ; \draw   (380,39.48) -- (380,60.48)(400,39.48) -- (400,60.48) ; \draw   (380,39.48) -- (400.5,39.48)(380,59.48) -- (400.5,59.48) ; \draw    ;
\draw  [fill={rgb, 255:red, 0; green, 0; blue, 0 }  ,fill opacity=1 ] (385,49.47) .. controls (385,46.71) and (387.24,44.47) .. (390,44.47) .. controls (392.76,44.47) and (395,46.71) .. (395,49.47) .. controls (395,52.23) and (392.76,54.47) .. (390,54.47) .. controls (387.24,54.47) and (385,52.23) .. (385,49.47) -- cycle ;
\draw  [draw opacity=0] (380,9.48) -- (400.5,9.48) -- (400.5,30.48) -- (380,30.48) -- cycle ; \draw   (380,9.48) -- (380,30.48)(400,9.48) -- (400,30.48) ; \draw   (380,9.48) -- (400.5,9.48)(380,29.48) -- (400.5,29.48) ; \draw    ;
\draw   (380.5,19.47) .. controls (380.5,14.22) and (384.75,9.97) .. (390,9.97) .. controls (395.25,9.97) and (399.5,14.22) .. (399.5,19.47) .. controls (399.5,24.72) and (395.25,28.97) .. (390,28.97) .. controls (384.75,28.97) and (380.5,24.72) .. (380.5,19.47) -- cycle ;
\draw  [draw opacity=0] (380,70.48) -- (400.5,70.48) -- (400.5,91.48) -- (380,91.48) -- cycle ; \draw   (380,70.48) -- (380,91.48)(400,70.48) -- (400,91.48) ; \draw   (380,70.48) -- (400.5,70.48)(380,90.48) -- (400.5,90.48) ; \draw    ;

\draw (42,83.48) node  [align=left] {$K$};
\draw (150,121.48) node  [align=left] {$B$};
\draw (302,119.48) node  [align=left] {$S$};
\draw (430,21.48) node  [align=left] {:Origin};
\draw (460,51.48) node  [align=left] {:Positions in set };
\draw (471, 81.48) node  [align=left] {:Positions not in set };

\end{tikzpicture}

\caption{The sets $K$, $S$ and $B$ used as an example; $K$ is the underlying
  structuring element, and $S$ is the sampling sieve.\label{fig:KBS} } 
\end{figure*}

We observe that the sampling sieve $S$ as in the figure will return every second grid
point after sampling. Fixing the centre point of the structuring element $K$ at the same pixel
as the centre of the sampling sieve, we see that the range of the structuring element is smaller
than the distance between grid points of the sampling sieve. This amounts for a correct sampling
and can be used systematically for image reconstructions as documented by an example given
in Figure \ref{fig:F1}.

\begin{figure*}[!htbp]
	\includegraphics[width=0.3\linewidth]{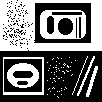}
\hfill
        \includegraphics[width=0.3\linewidth]{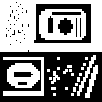}
 \hfill
 	\includegraphics[width=0.3\linewidth]{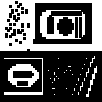}
        \vspace{1.0ex}
	\caption{Binary example image, of size $102\times 102$, $F_1$ (left), its maximal reconstruction after sampling, $(F_1\cap S)\oplus K$,  (centre) and its minimal reconstruction after sampling, $(F_1\cap S)\bullet K$, (right). For computing the maximal and minimal reconstruction, respectively, 
          the original image $F_1$ has first been sampled by $F_1\cap S$, which may reduce effectively the
          resolution (to a quarter of the original one)
          as only every second pixel is taken into account in both grid directions,
          see $S$ in Figure \ref{fig:KBS}.
          Then, by dilating with $K$ on the original grid, i.e.\ with the original resolution, we obtain
          as by the process of dilation an upper bound version of the original image called maximal reconstruction. The minimal reconstruction is obtained by closing the sample with $K$ on the original grid.
          }
	\label{fig:F1}
\end{figure*}

We now recall the \emph{binary version} of the \textit{Digital Morphological Sampling Theorem} from \cite{r2}.

\begin{theorem}{(Binary Digital Morphological Sampling Theorem)}
\label{thm:b1}
Let $F,\: K,\: S \in E^N$, where $F$ is the binary image, $S$ is the sampling sieve and $K$ is the structuring element used for filtering. Suppose $S$ and $K$ satisfy the sampling conditions \begin{enumerate}[I.]
\item \label{cond:1} $S\oplus S =S$
\item \label{cond:2} $S=\breve{S}$
\item	\label{cond:3} $K\cap S= \{0 \} $
\item \label{cond:4}	$K= \breve {K}$
\item \label{cond:5} $a\in K_b \Rightarrow K_a \cap K_b \cap S \neq \emptyset$
\end{enumerate}

Then,

\begin{enumerate}[I.]
\item $F\cap S\: =\: [(F\cap S)\bullet K]\cap S$
\item $F\cap S\: =\: [(F\cap S)\oplus K]\cap S$
\item $(F\cap S)\bullet K\: \subseteq  F\bullet K$
\item $(F\cap S)\oplus K\: \supseteq  F\circ K$
\item \label{res:5} If $F=F\circ K= F\bullet K$, then $(F\cap S) \bullet K\: \subseteq F\: \subseteq (F\cap S)\oplus K$
\item If $A=A\circ K$ and $F\cap S= A\cap S$, then $A\supseteq \: (F\cap S)\oplus K \: \Rightarrow A = \: (F\cap S)\oplus K$ 
\item If $A=A\bullet K$ and $F\cap S= A\cap S$, then $A\subseteq \: (F\cap S)\bullet K \: \Rightarrow A = \: (F\cap S)\bullet K$ 
\end{enumerate}

\end{theorem}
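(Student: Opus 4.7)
The seven assertions fall into three layers, and I would prove them in the order II, I, III, IV, and then V--VII as corollaries.

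The engine behind I and II is a small separation lemma: under the sampling conditions, if $x,s\in S$ and $s-x\in K$ then $s=x$. Indeed $s-x\in S\oplus\breve{S}=S\oplus S=S$ by the conditions $S=\breve{S}$ and $S\oplus S=S$, so $s-x\in K\cap S=\{0\}$. Given this, II is immediate: $\subseteq$ holds because $0\in K$ makes dilation extensive, and for $\supseteq$, writing $x=a+k$ with $x\in S$, $a\in F\cap S$, $k\in K$ forces $k=x-a=0$ by the lemma, so $x=a\in F\cap S$. For I, $\subseteq$ follows from extensivity of closing; for $\supseteq$, I would apply the alternative closing formula (\ref{alt-close}) to $x\in [(F\cap S)\bullet K]\cap S$, using $K=\breve{K}$ together with $0\in K$ to see that $x\in\breve{K}_x=K_x$. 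The formula then produces some $z\in K_x\cap(F\cap S)$, and the separation lemma forces $z=x\in F\cap S$.

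Assertion III is a one-line monotonicity argument: $F\cap S\subseteq F$ together with isotony of both dilation and erosion yields $(F\cap S)\bullet K=((F\cap S)\oplus K)\ominus K\subseteq (F\oplus K)\ominus K= F\bullet K$. Assertion IV is the geometric crux and is the step I expect to be the main obstacle, since it is the only place where condition V does real work. Given $x\in F\circ K$, the alternative opening formula (\ref{alt-open}) provides some $y$ with $x\in K_y\subseteq F$; applying the cross-covering condition $a\in K_b \Rightarrow K_a\cap K_b\cap S\neq\emptyset$ to the pair $(x,y)$ yields a sample point $s\in K_x\cap K_y\cap S$. Membership $s\in K_y\subseteq F$ places $s$ in $F\cap S$, while $K=\breve{K}$ turns $s\in K_x$ into the equivalent $x\in K_s$, exhibiting $x$ as an element of $(F\cap S)\oplus K$. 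The careful interplay between the symmetry of $K$ and the cross-covering property is the only non-formal step in the whole proof.

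The remaining assertions assemble III and IV. Statement V follows by applying IV under $F=F\circ K$ to obtain $F\subseteq (F\cap S)\oplus K$, and applying III under $F=F\bullet K$ to obtain $(F\cap S)\bullet K\subseteq F$. For VI, I would invoke IV with $F$ replaced by $A$: since $A=A\circ K$ this gives $A\subseteq (A\cap S)\oplus K$, and $A\cap S=F\cap S$ turns this into $A\subseteq (F\cap S)\oplus K$; combined with the hypothesis $A\supseteq (F\cap S)\oplus K$ this gives equality. Assertion VII is the exact dual via III applied to $A$: $(F\cap S)\bullet K=(A\cap S)\bullet K\subseteq A\bullet K=A$, which together with the hypothesis $A\subseteq (F\cap S)\bullet K$ closes the proof.
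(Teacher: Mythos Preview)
Your argument is correct. The separation lemma is exactly the right engine for I and II; your use of condition~V together with $K=\breve{K}$ in the proof of IV is the standard and correct way to pass from $x\in K_y\subseteq F$ to a sample point $s\in F\cap S$ with $x\in K_s$; and your derivations of V--VII from III and IV are clean. One very small remark: in the $\supseteq$ direction of I you implicitly use that $x\in K_x$, which follows from $0\in K$, which in turn is read off from condition~III ($K\cap S=\{0\}$); you might make this explicit.

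As for comparison with the paper: the paper does not actually prove this theorem. It is quoted verbatim from Haralick et al.\ \cite{r2} as background (``We now recall the binary version of the Digital Morphological Sampling Theorem from \cite{r2}''), and no proof is supplied here. So there is nothing to compare against in the present text; your proof stands on its own and matches the original argument in \cite{r2}.
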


The conditions \ref{cond:4} and \ref{cond:5} imply that $S\oplus K = E^N$. The condition \ref{cond:3} implies that $K$ is just smaller than two sampling intervals. The Morphological Sampling Theorem states how the image must be filtered (i.e. opened or closed by $K$) to preserve the relevant information after sampling and gives set bounding relationships on reconstruction of morphologically filtered images.

We now proceed by reproducing some results given in \cite{r3} on relation between sampling the binary image after performing morphological operations and morphologically operating in the sampled domain. These results will still be useful later in the grey value setting, as these are concerned with the underlying set on which the operations are performed.
 
Let us note that the set $B$ is the SE which is used to perform morphological operations on the image $F_1$. The example figures in this section demonstrate the relationship between morphologically operating on the image $F_1$ with $B$, sampling using sieve $S$ and filter $K$.

Some of the following results, e.g.\ Theorem \ref{thm:b4} or Theorem \ref{thm:b2}, require that $B=B\circ K$. This in essence means that $B$ does not have any details (\textit{white region}) finer than the filter $K$, and $B$ can be appropriately reconstructed from the sampled SE, $B\cap S$, as mentioned in the  Binary Digital Morphological Sampling Theorem \ref{thm:b1}, cf.~result \ref{res:5}.

\begin{prop}
\label{thm:pb14}
Let $B\subseteq E^N$ be the structuring element. Then
\begin{enumerate}[I.]
\item $(F\cap S)\oplus (B\cap S)\: \subseteq \: (F\oplus B)\cap S$
\item $(F\cap S)\ominus (B\cap S)\: \supseteq \: (F\ominus B)\cap S$
\end{enumerate}
\end{prop}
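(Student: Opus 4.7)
The plan is to unwind both inclusions directly from the set-theoretic definitions of dilation and erosion given in Definition \ref{def:bdilation}, leaning on the sampling condition \ref{cond:1}, namely $S\oplus S=S$, which is the only nontrivial ingredient.

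For part I, I would pick an arbitrary element $x\in(F\cap S)\oplus(B\cap S)$ and use the defining form of dilation to write $x=a+b$ with $a\in F\cap S$ and $b\in B\cap S$. Then two things follow at once: on the one hand $a\in F$ and $b\in B$ give $x\in F\oplus B$; on the other hand $a,b\in S$ together with condition \ref{cond:1} yield $x=a+b\in S\oplus S=S$. Combining the two, $x\in(F\oplus B)\cap S$, which establishes the claimed inclusion.

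For part II, I would reverse the direction and start from an arbitrary $x\in(F\ominus B)\cap S$. By the erosion definition, $x+b\in F$ for every $b\in B$; in particular this holds for every $b\in B\cap S$. Moreover, since $x\in S$ and $b\in S$, condition \ref{cond:1} gives $x+b\in S\oplus S=S$, so $x+b\in F\cap S$ for every $b\in B\cap S$. Applying the erosion definition once more (now with the smaller structuring set $B\cap S$ acting on $F\cap S$), this is exactly $x\in(F\cap S)\ominus(B\cap S)$, which is the desired inclusion.

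There is no real obstacle here; the only point that needs care is noticing that both statements reduce to the single observation that $S$ is closed under the Minkowski sum by condition \ref{cond:1}, and that neither reflection symmetry of $S$ nor any property of $K$ enters. It is also worth remarking (though not strictly part of the proof) that the inclusions are generally strict, because restricting $B$ to $B\cap S$ in part I loses translates by $b\in B\setminus S$, and symmetrically strengthens the erosion constraint in part II; this explains why equality fails in general and motivates the more refined theorems that follow.
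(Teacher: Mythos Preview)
Your argument is correct. The paper does not actually supply a proof of this proposition: it is quoted as a known result reproduced from \cite{r3} and then used as a black box in the proof of Proposition~\ref{thm:pb14a}. The elementary unwinding you give is precisely the standard one, and your observation that only the sampling condition $S\oplus S=S$ is needed (and neither $S=\breve{S}$ nor any property of $K$) is accurate.
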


Figures \ref{fig:bin1412} and \ref{fig:bin1411} illustrate the first part of above proposition. Figures \ref{fig:bin1422} and \ref{fig:bin1421} illustrate the second part of the proposition.

\begin{figure*}[!htbp]
\minipage{0.24\linewidth}
  \includegraphics[width=\linewidth]{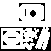}
  \caption{$(F_1\oplus B)\cap S$ \textcolor{white}{ABC} \textcolor{white}{ABCDE}}\label{fig:bin1412}
\endminipage\hfill
\minipage{0.24\linewidth}
  \includegraphics[width=\linewidth]{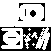}
  \caption{$(F_1\cap S)\oplus (B\cap S)$ \textcolor{white}{ABCDE}}\label{fig:bin1411}
\endminipage\hfill
\minipage{0.24\linewidth}
  \includegraphics[width=\linewidth]{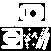}
  \caption{$ \{ [(F_1\cap S)\bullet K]\oplus B\} \cap S$}\label{fig:bin22}
\endminipage\hfill
\end{figure*}

\begin{figure*}[!htbp]
\minipage{0.24\linewidth}
  \includegraphics[width=\linewidth]{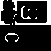}
  \caption{$(F_1\ominus B)\cap S$ \textcolor{white}{ABC} \textcolor{white}{ABCDE}}\label{fig:bin1422}
\endminipage\hfill
\minipage{0.24\linewidth}
  \includegraphics[width=\linewidth]{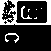}
  \caption{$(F_1\cap S)\ominus (B\cap S)$ \textcolor{white}{ABCDE}}\label{fig:bin1421}
\endminipage\hfill
\minipage{0.24\linewidth}
  \includegraphics[width=\linewidth]{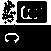}
  \caption{$ \{ [(F_1\cap S)\oplus K]\ominus B\} \cap S$}\label{fig:bin32}
\endminipage\hfill
\end{figure*}

\begin{lemma}
\label{thm:lema}
\begin{enumerate}[I.]
\item $(F\cap S)\oplus (B\cap S) \:=\: [F\oplus (B\cap S)]\cap S$
\item $(F\cap S)\ominus (B\cap S)\: =\: [F\ominus (B\cap S)]\cap S$
\end{enumerate}
\end{lemma}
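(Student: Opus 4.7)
The plan is to prove both equalities by showing two inclusions each, with the essential tool being that sampling conditions \ref{cond:1} ($S\oplus S=S$) and \ref{cond:2} ($S=\breve S$) together make $S$ behave like a subgroup under addition: whenever $s_1,s_2\in S$ we have $s_1+s_2\in S$ and $-s_1\in S$, hence also $s_1-s_2\in S$. Since every element of $B\cap S$ lies in $S$, this closure is exactly what lets me pull an intersection with $S$ in and out of a dilation or erosion by $B\cap S$.

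For part I, I would first establish $(F\cap S)\oplus(B\cap S)\subseteq[F\oplus(B\cap S)]\cap S$ by two separate observations: monotonicity of dilation gives $(F\cap S)\oplus(B\cap S)\subseteq F\oplus(B\cap S)$, while $F\cap S\subseteq S$ and $B\cap S\subseteq S$ together with condition \ref{cond:1} give $(F\cap S)\oplus(B\cap S)\subseteq S\oplus S = S$. For the reverse inclusion, I take $x\in[F\oplus(B\cap S)]\cap S$, write $x=f+b$ with $f\in F$ and $b\in B\cap S$, and use conditions \ref{cond:1} and \ref{cond:2} to conclude $f=x+(-b)\in S\oplus S=S$, so $f\in F\cap S$ and $x\in(F\cap S)\oplus(B\cap S)$.

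For part II, the reverse inclusion is the cleaner direction: if $x\in[F\ominus(B\cap S)]\cap S$ and $b\in B\cap S$, then $x+b\in F$ by definition, and $x+b\in S$ by condition \ref{cond:1} (since both $x,b\in S$), so $x+b\in F\cap S$ for every such $b$, i.e.\ $x\in(F\cap S)\ominus(B\cap S)$. For the forward inclusion, if $x\in(F\cap S)\ominus(B\cap S)$ then certainly $x+b\in F$ for every $b\in B\cap S$, so $x\in F\ominus(B\cap S)$; to check $x\in S$ I pick any $b\in B\cap S$ (nonempty in the contexts of interest, as $0$ typically lies in both $B$ and $S$), note $x+b\in S$, and use condition \ref{cond:2} to get $-b\in S$ and then condition \ref{cond:1} to conclude $x=(x+b)+(-b)\in S$.

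The proof is not deep; the main conceptual step, and the only one that might count as an obstacle, is recognising that the sampling conditions provide precisely the closure properties needed to commute intersection with $S$ past dilation or erosion by a subset of $S$. Once this is noticed, all four inclusions reduce to the same two-line manipulation, and the only mild care required is the non-emptiness of $B\cap S$ when anchoring the erosion argument.
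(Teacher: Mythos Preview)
Your proof is correct. Note, however, that the paper does not actually prove Lemma~\ref{thm:lema}: it is listed among results reproduced from the literature (the paper attributes it to \cite{r3}, though the original source is \cite{r2}), and no argument is given in the paper itself. So there is no proof in the paper to compare against; your argument supplies what the paper omits.

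The reasoning you give is the standard one and matches how this lemma is handled in the classical sources: conditions~\ref{cond:1} and~\ref{cond:2} make $S$ closed under addition and negation, so for any $b\in B\cap S$ the translation by $\pm b$ preserves membership in $S$, which is precisely what lets the intersection with $S$ commute with dilation or erosion by $B\cap S$. All four inclusions are verified correctly.

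One small remark on the caveat you already flagged: the forward inclusion in part~II does need $B\cap S\neq\emptyset$ to conclude $x\in S$. In the paper's setting this is not an issue, since the sampling conditions give $0\in S$ (from $K\cap S=\{0\}$), and in every place the lemma is invoked one has $0\in B$ as well (the structuring element $B$ satisfies $B=B\circ K$ with $0\in K$, or is otherwise assumed to contain the origin). You are right that, stated in full generality with $B\cap S=\emptyset$, the identity can fail: both erosions become all of $E^N$, and the two sides reduce to $E^N$ versus $S$. The paper silently assumes the non-degenerate case, and your proof is complete under that assumption.
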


\begin{lemma}
\label{thm:lemb}
Let $B=B\circ K$ . Then $[(F\cap S)\bullet K]\oplus B \subseteq \: [(F\cap S)\oplus K]\oplus (B\cap S)$
\end{lemma}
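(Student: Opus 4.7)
The plan is to reduce the left-hand side to an expression whose structuring element is already sampled, by invoking Theorem \ref{thm:b1} on $B$ itself. The hypothesis $B = B\circ K$ is precisely what is needed to apply result IV of that theorem with $B$ in place of $F$, which yields
\[
B \;=\; B\circ K \;\subseteq\; (B\cap S)\oplus K .
\]
This is the single nontrivial ingredient; everything else is algebraic bookkeeping with standard morphological identities.

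First, I would apply monotonicity of dilation to the above inclusion to obtain
\[
[(F\cap S)\bullet K]\oplus B \;\subseteq\; [(F\cap S)\bullet K]\oplus \bigl[(B\cap S)\oplus K\bigr].
\]
Then, using associativity and commutativity of binary dilation, the right-hand side regroups as
\[
\bigl([(F\cap S)\bullet K]\oplus K\bigr)\oplus (B\cap S) .
\]

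Next, I would unfold the closing according to Definition \ref{binary-opening-closing}:
\[
[(F\cap S)\bullet K]\oplus K \;=\; \bigl[\bigl((F\cap S)\oplus K\bigr)\ominus K\bigr]\oplus K \;=\; \bigl[(F\cap S)\oplus K\bigr]\circ K .
\]
Since opening is anti-extensive, this last set is contained in $(F\cap S)\oplus K$. A final application of monotonicity of dilation by $B\cap S$ then chains all the inclusions together into
\[
[(F\cap S)\bullet K]\oplus B \;\subseteq\; [(F\cap S)\oplus K]\oplus (B\cap S) ,
\]
which is exactly the claim.

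The only step that could require thought is the very first one: we must be sure that Theorem \ref{thm:b1}.IV is applicable to $B$ itself. But that theorem is a statement about arbitrary subsets of $E^N$ under the sampling conditions on $K$ and $S$, so no further hypotheses on $B$ beyond $B=B\circ K$ are needed. The remainder consists of monotonicity, associativity/commutativity of $\oplus$, and anti-extensivity of opening, all of which are standard.
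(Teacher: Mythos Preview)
Your argument is correct. The key step---applying result~IV of Theorem~\ref{thm:b1} to $B$ under the hypothesis $B=B\circ K$ to obtain $B\subseteq (B\cap S)\oplus K$---is exactly the right idea, and the rest is indeed routine bookkeeping with monotonicity, associativity, and anti-extensivity of opening.

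Note, however, that the paper does not actually supply its own proof of this lemma: it is stated in Section~\ref{Sec:Fundamentals} as one of several results reproduced from the classical literature (Haralick et al.~\cite{r2}), and is used later as a black box in the proofs of Lemma~\ref{thm:lemba} and Theorem~\ref{thm:2ba}. So there is no ``paper proof'' to compare against; your proposal simply fills in the omitted argument, and does so cleanly.
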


The following two theorems formulated in \cite{r2}, as indicated for binary images,
elaborate on interaction of sampling and dilation/erosion, respectively.  Let us note that these theorems serve as a motivation
for proving and validating corresponding theorems in the grey value setting later.

\begin{theorem}{Sample Dilation Theorem.}
\label{thm:b2}
Let $B=B\circ K$. Then $(F\cap S)\oplus (B\cap S) =\: \{ [(F\cap S)\bullet K]\oplus B\} \cap S $
\end{theorem}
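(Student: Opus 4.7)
The plan is to establish the two set inclusions separately. Both directions will make essential use of the sampling conditions from Theorem~\ref{thm:b1}, together with the auxiliary Lemmas~\ref{thm:lema} and~\ref{thm:lemb}, and will exploit the hypothesis $B = B\circ K$ only where unavoidable.

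For the inclusion $(F\cap S)\oplus(B\cap S) \subseteq \{[(F\cap S)\bullet K]\oplus B\}\cap S$, I would argue purely by monotonicity and the stability of $S$ under dilation by itself. Since closing is extensive, $F\cap S \subseteq (F\cap S)\bullet K$, and clearly $B\cap S\subseteq B$; monotonicity of $\oplus$ therefore yields $(F\cap S)\oplus(B\cap S) \subseteq [(F\cap S)\bullet K]\oplus B$. To upgrade this to an inclusion into the intersection with $S$, I would note that $F\cap S\subseteq S$ and $B\cap S\subseteq S$, so by sampling condition \ref{cond:1} we have $(F\cap S)\oplus(B\cap S)\subseteq S\oplus S = S$. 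Intersecting with $S$ leaves the left-hand side unchanged, which finishes this direction. Notably, this direction does not use $B=B\circ K$ at all.

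For the reverse inclusion $\{[(F\cap S)\bullet K]\oplus B\}\cap S \subseteq (F\cap S)\oplus(B\cap S)$, I would start from Lemma~\ref{thm:lemb}, which under the hypothesis $B=B\circ K$ gives
\[
  [(F\cap S)\bullet K]\oplus B \;\subseteq\; [(F\cap S)\oplus K]\oplus(B\cap S).
\]
Intersecting both sides with $S$, the goal reduces to showing $\{[(F\cap S)\oplus K]\oplus(B\cap S)\}\cap S = (F\cap S)\oplus(B\cap S)$. Here I would apply Lemma~\ref{thm:lema} with the set $(F\cap S)\oplus K$ playing the role of $F$, obtaining
\[
  \bigl\{[(F\cap S)\oplus K]\cap S\bigr\}\oplus(B\cap S) \;=\; \{[(F\cap S)\oplus K]\oplus(B\cap S)\}\cap S.
\]
Finally, Theorem~\ref{thm:b1}(II) tells us that $[(F\cap S)\oplus K]\cap S = F\cap S$, and substituting this collapses the left-hand side above to $(F\cap S)\oplus(B\cap S)$ exactly as required.

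The main obstacle, in my view, is not either inclusion in isolation but rather keeping straight which sampling condition is doing what. The $\supseteq$ direction is the substantive one: it leans on the assumption $B=B\circ K$ (via Lemma~\ref{thm:lemb}), on the identity that sampling commutes through dilation by $B\cap S$ (Lemma~\ref{thm:lema}), and on the reconstruction identity $[(F\cap S)\oplus K]\cap S = F\cap S$ from the binary sampling theorem, which in turn secretly depends on conditions \ref{cond:3}--\ref{cond:5}. By contrast, the $\subseteq$ direction is essentially formal once one notices that condition \ref{cond:1} forces the dilation of two subsets of $S$ to remain in $S$. I expect the write-up to be short once these ingredients are assembled, but the careful bookkeeping of where the hypothesis $B=B\circ K$ enters is what makes the theorem nontrivial.
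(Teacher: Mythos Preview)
Your proof is correct. Note, however, that the paper does not actually prove Theorem~\ref{thm:b2}: it is merely recalled from \cite{r2} as background in Section~\ref{Sec:Fundamentals}, so there is no proof in this paper to compare against directly.

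That said, your argument mirrors precisely the strategy the paper employs for the grey-value analogue, Theorem~\ref{thm:2ba}. There too the forward inequality comes from extensivity of closing together with the fact that dilating two subsets of $S$ stays in $S$ (via Lemma~\ref{thm:lemaa} in the grey-value case, which plays the role of your appeal to condition~\ref{cond:1}), while the reverse inequality is obtained by invoking Lemma~\ref{thm:lemba} (the grey-value counterpart of Lemma~\ref{thm:lemb}), then collapsing $[(F\cap S)\oplus K]\cap S$ back to $F\cap S$ using part~II of the sampling theorem. Your bookkeeping of where $B=B\circ K$ enters---only in the reverse inclusion, through Lemma~\ref{thm:lemb}---is also consistent with how the paper uses the corresponding hypothesis $b=b\circ k$ in the grey-value proof.
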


\begin{theorem} {Sample Erosion Theorem.}
\label{thm:b3}
Let $B=B\circ K$ Then $(F\cap S)\ominus (B\cap S)=\: \{[(F\cap S)\oplus K]\ominus B\}\cap S$.
\end{theorem}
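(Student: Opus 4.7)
The plan is to set $D := F\cap S$ and $G := D\oplus K$, so the claim reduces to $D\ominus (B\cap S) = (G\ominus B)\cap S$, and to prove the two inclusions directly, invoking Theorem~\ref{thm:b1}(II), sampling conditions I, II, IV, V and the alternative form of opening \eqref{alt-open}.

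The inclusion $\supseteq$ is the easier one. Given $x\in (G\ominus B)\cap S$ and any $s\in B\cap S$, the assumption $B_x\subseteq G$ together with closure of $S$ under addition gives $x+s\in G\cap S$, and Theorem~\ref{thm:b1}(II) collapses this to $F\cap S=D$. Hence $(B\cap S)_x\subseteq D$.

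For the inclusion $\subseteq$, fix $x\in D\ominus (B\cap S)$ and an arbitrary $b\in B$; the task is $x+b\in G$. Since $B=B\circ K$, the alternative form \eqref{alt-open} supplies some $y$ with $b\in K_y\subseteq B$. Sampling condition V then produces
\begin{equation*}
s\in K_b\cap K_y\cap S \;\subseteq\; B\cap S,
\end{equation*}
from which $x+s\in D$. Using $s\in K_b$ and $K=\breve K$, one writes $b=s+k'$ with $k'\in K$, so $x+b=(x+s)+k'\in D\oplus K=G$. Running over all $b\in B$ gives $B_x\subseteq G$, while $x\in S$ follows from any $x+s\in D\subseteq S$ by $S=\breve S$ and $S\oplus S=S$, which yield $x=(x+s)+(-s)\in S$.

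The main obstacle is the $\subseteq$ direction: since $b\in B$ need not itself be sampled, one has to replace it by a sampled proxy $s\in B\cap S$ that differs from $b$ by an element of $K$, so that $x+b$ can be recovered as a $K$-dilate of $x+s\in D$. The joint use of $B=B\circ K$ (which embeds $b$ in some $K$-translate contained in $B$) and condition V (which forces any such translate to meet $K_b\cap S$) is exactly what constructs this proxy, and the symmetry $K=\breve K$ supplies the needed element $k'\in K$ that transports $b$ back from $s$.
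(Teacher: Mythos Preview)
The paper does not supply its own proof of Theorem~\ref{thm:b3}; the statement is reproduced from \cite{r2} as background for the grey-value extensions, so there is no in-paper argument to compare against. Your proof is correct and is essentially the argument one finds in \cite{r2}: the inclusion $\supseteq$ follows at once from $G\cap S=D$ (Theorem~\ref{thm:b1}(II)) together with $S\oplus S=S$, and for $\subseteq$ the combination of $B=B\circ K$ with sampling condition~V is exactly the mechanism that produces a sampled proxy $s\in B\cap S$ with $b-s\in K$, after which $x+b=(x+s)+(b-s)\in D\oplus K$. The separate verification that $x\in S$ via $x=(x+s)+(-s)$ is also standard; note it tacitly uses $B\neq\emptyset$ (so that some $s\in B\cap S$ exists), a harmless nondegeneracy assumption that the original source also makes implicitly.
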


Notice that Figures \ref{fig:bin1412} and \ref{fig:bin22} are identical. These figures demonstrate Sample Dilation Theorem \ref{thm:b2}, meaning that dilation in sampled domain (here, $(F_1\cap S)\oplus (B\cap S)$) is equivalent to sampling after dilation of the minimal reconstruction (here, $\{ [(F_1\cap S)\bullet K]\oplus B\} \cap S$). 

Similarly, Figures \ref{fig:bin1421} and \ref{fig:bin32} are identical. This pair demonstrates Sample Erosion Theorem \ref{thm:b3}, meaning that erosion in sampled domain (here, $(F_1\cap S)\ominus (B\cap S)$) is equivalent to sampling after erosion of the maximal reconstruction (here, $\{ [(F_1\cap S)\oplus K]\oplus B\} \cap S$). 

\begin{prop}
\label{thm:pb16}
$[F\circ(B\cap S)]\cap S =\: (F\cap S)\circ (B\cap S)$
\end{prop}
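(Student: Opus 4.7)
The plan is to reduce the statement to two successive applications of Lemma \ref{thm:lema} after unfolding the opening via its standard expression as an erosion followed by a dilation. Since the structuring element $B\cap S$ is (by construction) a subset of $S$, Lemma \ref{thm:lema} supplies exactly the commutation rules between sampling and the two primitive operations when the SE used is $B\cap S$; the opening is built from these two primitives, so the proof should be essentially algebraic.

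Concretely, I would begin from the definition
\[
(F\cap S)\circ (B\cap S) \;=\; \bigl[(F\cap S)\ominus (B\cap S)\bigr]\oplus (B\cap S).
\]
Applying part II of Lemma \ref{thm:lema} to the inner erosion rewrites the bracketed set as $[F\ominus (B\cap S)]\cap S$. This leaves an expression of the form $\bigl[A\cap S\bigr]\oplus (B\cap S)$ with $A := F\ominus(B\cap S)$. Applying part I of Lemma \ref{thm:lema} to this expression (with $A$ in the role played by $F$ in the lemma) converts it to $[A\oplus(B\cap S)]\cap S$. Re-assembling the erosion-then-dilation inside the brackets gives $[F\circ(B\cap S)]\cap S$, which is the desired equality.

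The step that requires a moment's care is the second application of Lemma \ref{thm:lema}: one must note that the lemma is stated for an arbitrary set argument in the first slot, so substituting the set $F\ominus(B\cap S)$ there is legitimate. Apart from that observation, the argument is a direct two-line rewriting and involves no appeal to the sampling conditions I--V of Theorem \ref{thm:b1} beyond whatever is already embedded in Lemma \ref{thm:lema} itself. Consequently, no case analysis and no geometric reasoning about $K$ enters the proof at this level; the whole work is carried by the two commutation identities of Lemma \ref{thm:lema}, which is why I expect the argument to be short and mainly algebraic.
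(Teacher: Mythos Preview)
Your argument is correct. The paper does not actually supply its own proof of Proposition~\ref{thm:pb16}; it is listed among the binary results reproduced from \cite{r2} without proof, so there is nothing to compare against directly at this point.

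That said, it is worth noting that when the paper later proves the grey-value analogue, Proposition~\ref{thm:pb16a}, it does \emph{not} follow your algebraic route. There the authors argue via the alternative characterisation of opening as a union of translates (equation~\eqref{thm:p71}), establishing the two inequalities $\leq$ and $\geq$ separately by chasing a point $(x,(f\circ b\vert_S)(x))$ through the umbra. Your decomposition-based proof---unfold opening as erosion followed by dilation and apply the two parts of Lemma~\ref{thm:lema} in succession---is shorter and purely formal; it transfers verbatim to the grey-value setting once one has Lemma~\ref{thm:lemaa}. The translate-based argument, by contrast, is more geometric and makes visible why the sampled-domain opening cannot exceed the sampled opening, but it costs more bookkeeping. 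Either approach is valid; yours is the more economical one for the binary statement at hand.
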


\begin{prop}
\label{thm:pb17}
$[F\bullet (B\cap S)]\cap S =(F\cap S)\bullet (B\cap S)$
\end{prop}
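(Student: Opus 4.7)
The proof plan is to unfold the closing on the right-hand side into its constituent dilation and erosion, and then chain the two parts of Lemma \ref{thm:lema} in sequence to migrate the intersection with $S$ past each of these two operations. Concretely, writing $B' = B\cap S$ for brevity, I start from
\[
(F\cap S)\bullet B' = \bigl[(F\cap S)\oplus B'\bigr]\ominus B'
\]
by the definition of closing. The first use of Lemma \ref{thm:lema}, part I, rewrites the inner dilation as $(F\cap S)\oplus B' = (F\oplus B')\cap S$. This converts the expression into $\bigl[(F\oplus B')\cap S\bigr]\ominus B'$, which has exactly the shape required to apply Lemma \ref{thm:lema}, part II, this time with the set $F$ in that lemma instantiated to $F\oplus B'$. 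Applying it yields $\bigl[(F\oplus B')\ominus B'\bigr]\cap S$, and recognising the inner expression as $F\bullet B'$ produces the desired left-hand side.

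The argument therefore reduces to three routine rewrites, each of which is syntactic once the substitutions are identified: expand closing, pull intersection out of dilation (Lemma \ref{thm:lema}, part I), pull intersection out of erosion (Lemma \ref{thm:lema}, part II), and contract closing. The dual proposition (Proposition \ref{thm:pb16}) for opening can be obtained by the analogous chain, swapping the order of the two parts of Lemma \ref{thm:lema}.

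The only subtle point, and what I expect to be the main thing to verify, is that Lemma \ref{thm:lema} is formulated for an arbitrary image $F$ intersected with the sieve, so it is legitimate to substitute the auxiliary set $F \oplus B'$ for $F$ in part II; no further sampling-theorem conditions on $B'$ are needed beyond what is already built into $B\cap S\subseteq S$. Once that substitution is accepted, there are no inequalities to juggle and no case distinctions, so the proof is essentially a two-line computation.
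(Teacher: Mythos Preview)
Your argument is correct. Unfolding the closing, applying Lemma~\ref{thm:lema}~I to move $\cap S$ outside the dilation, then applying Lemma~\ref{thm:lema}~II (with $F\oplus(B\cap S)$ substituted for the free image variable) to move $\cap S$ outside the erosion, is exactly the right chain; the substitution is legitimate because the hypotheses of Lemma~\ref{thm:lema} constrain only the sieve $S$ (via $S\oplus S=S$, $S=\breve{S}$), not the image set.

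Note, however, that the paper does not actually supply a proof of Proposition~\ref{thm:pb17}: it is merely restated from the classical references \cite{r2,r3} as background for the grey-value extensions. So there is no ``paper's own proof'' to compare against here. That said, your derivation is the standard one and matches how the grey-value analogue (Proposition~\ref{thm:pb17a}) is later built on top of this result, so nothing needs to change.
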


Similarly to Theorems \ref{thm:b2}, \ref{thm:b3}, the following two theorems provide the interaction of two other fundamental morphological operations, closing and opening, with sampling. We will also extend the following result to the grey value setting. 

\begin{theorem}{Sample Opening and Closing Bounds Theorem.}
\label{thm:b4}
Suppose $B=B\circ K$, then 
\begin{enumerate}[I.]
\item $ \{ F\circ [(B\cap S)\oplus K]\} \cap S \subseteq \: (F\cap S)\circ (B\cap S) \subseteq \: \{[(F\cap S)\oplus K]\circ B\} \cap S$
\item $\{ [(F \cap S)\bullet K]\bullet B\} \cap S \subseteq \: (F\cap S)\bullet (B\cap S) \subseteq \: \{F\bullet [(B\cap S)\oplus K] \} \cap S$
\end{enumerate}
\end{theorem}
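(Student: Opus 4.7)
The plan is to reduce each of the four inclusions to tools already established in the excerpt: the Sample Dilation Theorem \ref{thm:b2}, the Sample Erosion Theorem \ref{thm:b3}, Lemma \ref{thm:lema}, and the sampling conditions of Theorem \ref{thm:b1}. The two \emph{outer} inclusions (the right inclusion in part I and the left inclusion in part II) are obtained by expanding opening or closing through its erosion--dilation decomposition, applying one of the sample theorems to the inner factor, pulling the outer $\cap S$ through using Lemma \ref{thm:lema}, and finishing with $B\cap S\subseteq B$ together with monotonicity. The two \emph{inner} inclusions are proved by direct set chasing, using the alternative form (\ref{alt-open}) of opening and the additive form $x\in A\bullet B\Leftrightarrow\forall b\in B\;\exists b'\in B:\;x+b-b'\in A$ of closing (obtained by unfolding $(A\oplus B)\ominus B$).

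Concretely, for the upper bound of part I I would expand $(F\cap S)\circ(B\cap S)=[(F\cap S)\ominus(B\cap S)]\oplus(B\cap S)$, substitute the inner erosion by $\{[(F\cap S)\oplus K]\ominus B\}\cap S$ via the Sample Erosion Theorem, pull the outer sampled dilation out with Lemma \ref{thm:lema}(I) to obtain $\{[(F\cap S)\oplus K]\ominus B\oplus(B\cap S)\}\cap S$, and bound this above by $\{[(F\cap S)\oplus K]\circ B\}\cap S$ using $B\cap S\subseteq B$ and monotonicity of dilation. The lower bound of part II is the dual maneuver: expand $(F\cap S)\bullet(B\cap S)=[(F\cap S)\oplus(B\cap S)]\ominus(B\cap S)$, replace the inner dilation by $\{[(F\cap S)\bullet K]\oplus B\}\cap S$ via the Sample Dilation Theorem, pull the outer sampled erosion out with Lemma \ref{thm:lema}(II), and use that erosion is anti-monotone in its second argument together with $B\cap S\subseteq B$ to reverse the inclusion and land at $\{[(F\cap S)\bullet K]\bullet B\}\cap S$.

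For the lower bound of part I, I would start from $x\in\{F\circ[(B\cap S)\oplus K]\}\cap S$ and apply (\ref{alt-open}) to produce a $y$ with $x\in[(B\cap S)\oplus K]_y\subseteq F$. Unfolding the dilation picks a particular $k\in K$ with $x\in(B\cap S)_{y+k}$; set $y':=y+k$. The inclusion $(B\cap S)_{y'}\subseteq F$ is immediate, while $(B\cap S)_{y'}\subseteq S$ follows by writing $y'=x-b_0$ for the $b_0\in B\cap S$ witnessing $x\in(B\cap S)_{y'}$ and using $\breve S=S$ together with $S\oplus S=S$ to keep every $y'+b=x-b_0+b$ in $S$. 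Then $x\in(B\cap S)_{y'}\subseteq F\cap S$ and (\ref{alt-open}) gives $x\in(F\cap S)\circ(B\cap S)$. The upper bound of part II is analogous via the additive form: for $x\in(F\cap S)\bullet(B\cap S)$ and $q=s+k\in(B\cap S)\oplus K$, the hypothesis yields $s'\in B\cap S$ with $x+s-s'\in F\cap S$, and $q':=s'+k\in(B\cap S)\oplus K$ gives $x+q-q'=x+s-s'\in F$; the constraint $x\in S$ is automatic from $(F\cap S)\bullet(B\cap S)\subseteq S$, which again reduces to $\breve S=S$ and $S\oplus S=S$.

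The main obstacle is the directional bookkeeping in the outer inclusions: after applying the sample theorems one must use Lemma \ref{thm:lema} in the right direction to move $\cap S$ past the outer operation, and then exploit $B\cap S\subseteq B$ in the orientation in which monotonicity of dilation (part I) or anti-monotonicity of erosion in the second argument (part II) yields the correct direction of inclusion. The hypothesis $B=B\circ K$ enters, somewhat invisibly, only through the Sample Dilation and Sample Erosion Theorems; the remaining sampling conditions of Theorem \ref{thm:b1} are what make the inner inclusions work.
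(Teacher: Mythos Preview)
Your argument is sound, but note that the paper does \emph{not} prove Theorem~\ref{thm:b4}: it is recalled from \cite{r2} as a known binary result, so there is no ``paper's own proof'' to compare against directly. That said, your strategy is clean and correct. The two outer inclusions go through exactly as you describe: substitute the inner factor via Theorem~\ref{thm:b2} or~\ref{thm:b3}, pull the outer $\cap S$ through with Lemma~\ref{thm:lema}, and finish with monotonicity of dilation (resp.\ anti-monotonicity of erosion) in the structuring element using $B\cap S\subseteq B$. The two inner inclusions are also fine; the only step worth a word of care is $(F\cap S)\bullet(B\cap S)\subseteq S$, which you reduce to $S=\breve S$ and $S\oplus S=S$. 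This works, and alternatively you could just cite Proposition~\ref{thm:pb17}.

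It is worth remarking that the paper \emph{does} prove the grey-value analogue, Theorem~\ref{thm:b4a}, and there the route is rather different from yours. For the lower bound of part~II, for instance, the paper sets $r=f\vert_S\bullet k$, first establishes the auxiliary fact $(r\oplus b)\circ k=r\oplus b$, then invokes the sampling theorem to bound $r\oplus b$ by its maximal reconstruction, and only afterwards brings in the Sample Erosion and Sample Dilation Theorems in sequence. Your binary approach---expand the closing, apply the Sample Dilation Theorem once, move $\cap S$ outward, and use anti-monotonicity---is more economical and would in fact streamline the grey-value argument as well (via Lemma~\ref{thm:lemaa} in place of Lemma~\ref{thm:lema}). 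The paper's longer path seems to be an artefact of working through umbras rather than a necessity.
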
    

Figures \ref{fig:bin411}-\ref{fig:bin413} demonstrate interaction of sampling with opening operation. We can observe that opening in the sampled domain is bounded above by sampling the opening of maximal reconstruction of image (here, $\{[(F_1 \cap S)\oplus K]\circ B\} \cap S$) and bounded below by sampling of opening by maximal reconstruction of filter (here, $\{ F_1\circ [(B\cap S)\oplus K]\} \cap S $).  

In a similar way, Figures \ref{fig:bin421}-\ref{fig:bin423} demonstrate interaction of sampling with closing operation. We see that closing in sampled domain is bounded by sampling of closing minimal reconstruction of the image (here, $\{ [(F_1 \cap S)\bullet K]\bullet B\} \cap S $) and sampling of closing by maximal reconstruction of the filter (here, $\{F_1 \bullet [(B\cap S)\oplus K] \} \cap S $).

\begin{figure*}[!htbp]
\minipage{0.24\linewidth}
  \includegraphics[width=\linewidth]{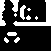}
  \caption{$\{ F_1\circ [(B\cap S)\oplus K]\} \cap S $ \textcolor{white}{ABCDE}}\label{fig:bin411}
\endminipage\hfill
\minipage{0.24\linewidth}
  \includegraphics[width=\linewidth]{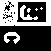}
  \caption{$ (F_1\cap S)\circ (B\cap S)$ \textcolor{white}{ABCDE}}\label{fig:bin412}
\endminipage\hfill
\minipage{0.24\linewidth}
  \includegraphics[width=\linewidth]{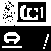}
  \caption{$ \{[(F_1 \cap S)\oplus K]\circ B\} \cap S$}\label{fig:bin413}
\endminipage\hfill
\end{figure*}

\begin{figure*}[!htbp]
\minipage{0.24\linewidth}
  \includegraphics[width=\linewidth]{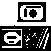}
  \caption{$\{ [(F_1 \cap S)\bullet K]\bullet B\} \cap S$}\label{fig:bin421}
\endminipage\hfill
\minipage{0.24\linewidth}
  \includegraphics[width=\linewidth]{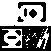}
  \caption{$(F_1 \cap S)\bullet (B\cap S)$ \textcolor{white}{ABCDE}}\label{fig:bin422}
\endminipage\hfill
\minipage{0.24\linewidth}
  \includegraphics[width=\linewidth]{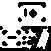}
  \caption{$\{F_1 \bullet [(B\cap S)\oplus K] \} \cap S$}\label{fig:bin423}
\endminipage\hfill
\end{figure*}

\begin{theorem} {Sampling Opening And Closing Theorem.}
\label{thm:b5}
Suppose $B=B\circ K$. \begin{enumerate}[I.]
\item If $F=(F\cap S)\oplus K$ and $B=(B\cap S)\oplus K$, then $(F\cap S)\circ (B\cap S) =\: (F\circ B)\cap S$
\item If $F=(F\cap S)\bullet K$ and $B=(B\cap S)\oplus K$, then $(F\cap S)\bullet (B\cap S) =\: (F\bullet B)\cap S$
\end{enumerate}
\end{theorem}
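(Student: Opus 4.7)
Both parts are essentially corollaries of the Sample Opening and Closing Bounds Theorem~\ref{thm:b4}: under the additional reconstruction hypotheses on $F$ and $B$, the two bounds of that theorem collapse onto a single expression. The strategy is therefore a straightforward sandwich argument — write down the double inclusion given by Theorem~\ref{thm:b4}, then substitute the assumed equalities to show both bounds coincide with $(F\circ B)\cap S$, respectively $(F\bullet B)\cap S$.

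\textbf{Part I.} Since $B=B\circ K$ by hypothesis, Theorem~\ref{thm:b4}.I is applicable and yields
\begin{equation*}
\{F\circ [(B\cap S)\oplus K]\}\cap S \;\subseteq\; (F\cap S)\circ (B\cap S) \;\subseteq\; \{[(F\cap S)\oplus K]\circ B\}\cap S.
\end{equation*}
I would now use $B=(B\cap S)\oplus K$ to rewrite the left-hand bound as $(F\circ B)\cap S$, and $F=(F\cap S)\oplus K$ to rewrite the right-hand bound as $(F\circ B)\cap S$. Both outer terms then coincide, forcing $(F\cap S)\circ(B\cap S) = (F\circ B)\cap S$.

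\textbf{Part II.} Analogously, Theorem~\ref{thm:b4}.II gives
\begin{equation*}
\{[(F\cap S)\bullet K]\bullet B\}\cap S \;\subseteq\; (F\cap S)\bullet (B\cap S) \;\subseteq\; \{F\bullet [(B\cap S)\oplus K]\}\cap S.
\end{equation*}
The assumption $F=(F\cap S)\bullet K$ reduces the left-hand bound to $(F\bullet B)\cap S$, and $B=(B\cap S)\oplus K$ reduces the right-hand bound to the same expression. The sandwich again forces the claimed equality.

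\textbf{Main obstacle.} At this stage there is no real technical difficulty; the content has been absorbed into Theorem~\ref{thm:b4}. What requires attention is the asymmetric role of the two hypotheses in each part: the hypothesis on $F$ collapses one side of the bound while the hypothesis on $B$ collapses the other, and neither can be dropped without breaking the sandwich. This also clarifies the conceptual message — morphology-then-sample agrees with sample-then-morphology precisely when both $F$ and $B$ are themselves faithful maximal (resp.\ minimal) reconstructions of their samples with respect to $K$, so that no information about them is destroyed by the preliminary sampling.
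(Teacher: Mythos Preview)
Your proposal is correct and matches the paper's approach: the paper does not spell out a proof of Theorem~\ref{thm:b5} (it is recalled from \cite{r2}), but for the grey-value analogue it states explicitly that the result ``directly follows'' from the corresponding bounds theorem, i.e.\ precisely the sandwich argument you give. Your substitution of the reconstruction hypotheses into the two sides of Theorem~\ref{thm:b4} is exactly the intended derivation.
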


\subsection{Morphological Notions for Grey-value Images} 

Let $E$ be the set of integers used for denoting the indices of the coordinates. 
A grey-value image is represented by a function $f: F \rightarrow L$,
$F \subseteq E^{N}$, $L=[0,l]$, where $l > 0$ is the upper limit for grey values 
at a pixel in grey-value image, and $N=2$ for two dimensional grey-value images.  

The SEs are of finite size. The morphological operations require taking 
$\max$ or $\min$ of grey values over finite sets (of pixels) in our setting. 
Therefore, the results are independent on whether $L$ is a discrete set 
(e.g., subset of integers) or a continuous set (e.g., sub-interval of real numbers).

The proposed extension of the previous notions to grey-value images requires to 
define the notions of top surfaces and the umbra of an image, compare \cite{r1}. 
Also see Figure \ref{fig:umbra} for the latter. In a first step we rely on both 
notions for defining dilation and erosion.

\begin{definition}{\textbf{Top Surface.}}
 \label{def:topsur}
 Let $A\subseteq E^{N} \times L$. and $F=\{x\in E^{N} \vert\: \text{for some }y\in L,\: (x,y)\in A\}$. Then the top surface of $A$ is denoted as $T[A]$ and defined as $T[A](x)\: =\: \max \{y\vert\:(x,y)\in A \}$.   
 \end{definition}

\begin{definition}{\textbf{Umbra of an Image.}}
\label{def:umbraImg}
Let $F\subseteq E^{N}$ and $f:F\rightarrow L$. The umbra of the image $f$ is denoted by $U[f]$ and is defined as $U[f]=\: \{(x,y)\vert\: x\in F \text{, } y\in L \text{ and } 0\leq y\leq f(x) \}$.
\end{definition}

 \begin{figure*}[!htbp]
  \includegraphics[width=0.3\linewidth]{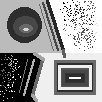}
  \hfill
  \includegraphics[width=0.6\linewidth]{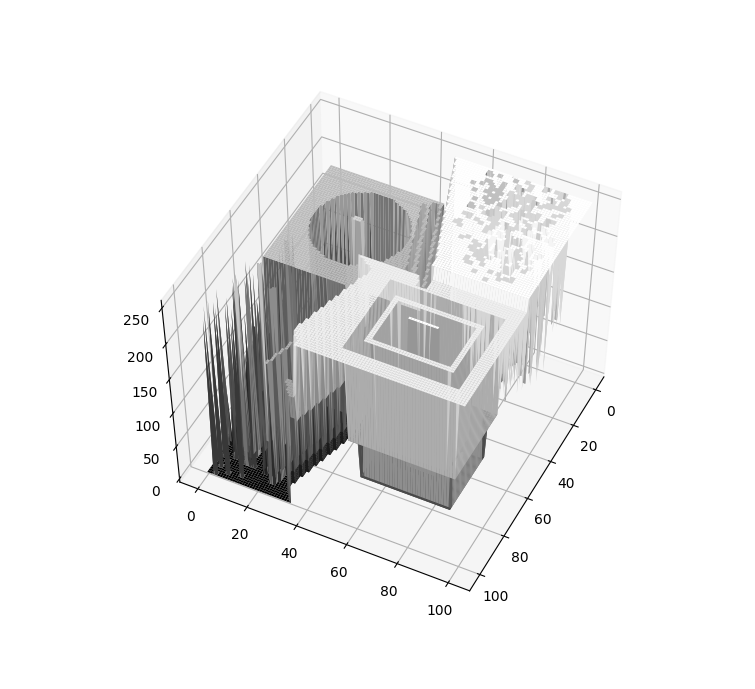}

  \vspace{1.0ex}
  \caption{\textbf{Left:} An example grey-value image, of size $102\times 102$, $f:F\rightarrow L$ used for examples. \textbf{Right:} Visualisation of the corresponding umbra.\label{fig:umbra}}
\end{figure*}

The umbra is useful for studying geometric relations between pixels, as it makes use of both 
spatial and tonal information. Thus, umbra approach provides important and convenient 
tools to deal with non-flat SEs and sampling, as already briefly discussed 
in the introduction. 
Using umbra allows us to describe grey-value morphological operations in terms of 
corresponding binary operations of umbra, making use of SE on umbra of the image.
Thereby, the SE by itself is also transferred to the umbra setting.
See Definition \ref{def:gdilation} for dilation, Definition \ref{def:gerosion} for 
erosion and Proposition \ref{prop:AltGrayClosing} for opening and closing. In this way
we are able to extend many morphological concepts developed for binary images directly 
to grey-value images.   

Our definition of umbra, compare Figure \ref{fig:umbra}, turns out to work perfectly well
for grey-value images, compare \cite{r1}. Note that we restrict ourselves here to 
discrete non-negative pixel values at discrete positions (pixels). The underlying concept cannot 
be directly extended to continuous domain or negative values, compare \cite{r6,r7}.

\begin{definition}{\textbf{Dilation for Grey Value Images.}}
\label{def:gdilation}
Let $F,\: K \subseteq E^{N} $ and $f:F\rightarrow L$, $k:K\rightarrow L$. The dilation of $f$ by $k$ is denoted by $f\oplus k:\: F\oplus K\rightarrow L $ and is defined as $f\oplus k=\: T[U[f] \oplus U[k]]$.
\end{definition}

\begin{definition}{\textbf{Erosion for Grey Value Images.}}
\label{def:gerosion}
Let $F,\: K \subseteq E^{N} $ and $f:F\rightarrow L$, $k:K\rightarrow L$. The erosion of $f$ by $k$ is denoted by $f\ominus k:\: F\ominus K\rightarrow L $ and is defined as $f\ominus k=\: T[U[f] \ominus U[k]]$.
\end{definition}

As already mentioned, the operations of opening and closing can easily be extended from binary images 
to the grey-value setting, following the same combination of dilation/erosion as in Definition \ref{binary-opening-closing}. 
That is, $f\circ k$ $=$ $(f\ominus k)\oplus k$ and $f\bullet k$ $=$ $(f\oplus k)\ominus k$.
Similarly to the binary versions, grey-value opening and closing are anti-extensive and extensive, 
respectively. Both grey-value opening and closing are idempotent as well as dual operations.

Propositions \ref{thm:p50} and  \ref{thm:prop52} are used to compute dilation and erosion 
for a grey-value image. 

\begin{prop}
\label{thm:p50}
 Let $F,\: K \subseteq E^{N} $ and $f:F\rightarrow L$, $k:K\rightarrow L$.  
 Then $f\oplus k:\: F\oplus K\rightarrow L $ can be computed by $(f\oplus k)(x) =\: \max_{u\in K,\: x-u\in F}  {\{f(x-u)\: +\: k(u) \}}$.
\end{prop}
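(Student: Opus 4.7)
The plan is to unfold the defining equation $f\oplus k = T[U[f]\oplus U[k]]$ in two stages, first applying the binary dilation definition to the two umbrae (which live in $E^{N+1}$), then applying the top surface definition to the resulting set, and finally simplifying the inner maximization over the $y,z$ coordinates.

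First, I would invoke Definition \ref{def:bdilation} applied to the subsets $U[f]$ and $U[k]$ of $E^{N+1}$. By Definition \ref{def:umbraImg}, a pair $(a,y)$ lies in $U[f]$ iff $a\in F$ and $0\le y\le f(a)$; analogously for $U[k]$. Hence
\begin{equation*}
U[f]\oplus U[k] \;=\; \bigl\{(a+u,\,y+z)\,\vert\, a\in F,\; u\in K,\; 0\le y\le f(a),\; 0\le z\le k(u)\bigr\}.
\end{equation*}

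Next, by Definition \ref{def:topsur}, for each $x$ in the spatial projection $F\oplus K$ we have
\begin{equation*}
(f\oplus k)(x) \;=\; T\bigl[U[f]\oplus U[k]\bigr](x) \;=\; \max\bigl\{w\,\vert\, (x,w)\in U[f]\oplus U[k]\bigr\}.
\end{equation*}
I would then reparametrise: a point $(x,w)$ lies in $U[f]\oplus U[k]$ precisely when there exist $u\in K$ with $x-u\in F$ and $y,z$ with $0\le y\le f(x-u)$, $0\le z\le k(u)$ such that $w=y+z$. The maximum over $w$ therefore splits as an outer maximum over admissible $u$ and an inner maximum over the box $[0,f(x-u)]\times[0,k(u)]$, giving
\begin{equation*}
(f\oplus k)(x) \;=\; \max_{\substack{u\in K\\ x-u\in F}}\,\max_{\substack{0\le y\le f(x-u)\\ 0\le z\le k(u)}} (y+z).
\end{equation*}
The inner maximum is attained at $y=f(x-u)$, $z=k(u)$, yielding $f(x-u)+k(u)$, and substituting back produces the claimed formula.

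There is really no main obstacle here beyond careful bookkeeping; the only small point worth noting is existence of the maxima. The inner one is attained because it is the supremum over a product of closed bounded intervals of a continuous function, and the outer one is attained because $K$ is finite by assumption on the SE. When $x\notin F\oplus K$, the set on the right is empty and $(f\oplus k)(x)$ is undefined, consistent with the domain declaration $f\oplus k: F\oplus K\to L$.
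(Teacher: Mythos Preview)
Your proof is correct and is precisely the standard derivation. Note, however, that the paper does not itself give a proof of this proposition: it is recalled without proof as background from \cite{r1}, where exactly this unfolding of $T[U[f]\oplus U[k]]$ is the classical argument. One cosmetic remark: the umbrae live in $E^N\times L$ rather than in $E^{N+1}$ (the paper allows $L$ to be a continuous interval), so your parenthetical ``which live in $E^{N+1}$'' is slightly imprecise, but the computation goes through verbatim.
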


\begin{prop}
\label{thm:prop52}
Let $F,\: K \subseteq E^{N} $ and $f:F\rightarrow L$, $k:K\rightarrow L$.  
Then $f\ominus k:\: F\ominus K\rightarrow L $ can be computed 
by $(f\ominus k)(x) =\: \min_{u\in K}  {\{f(x+u)\: -\: k(u) \}} $.
\end{prop}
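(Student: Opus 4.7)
The strategy is to unfold the definition $f\ominus k = T[U[f]\ominus U[k]]$ into a pointwise numerical computation, in direct parallel with the expected argument for the dilation formula of Proposition \ref{thm:p50}. First I would fix $x\in F\ominus K$ and rewrite $(f\ominus k)(x)$, via Definition \ref{def:topsur}, as the largest $y$ for which $(x,y)\in U[f]\ominus U[k]$. By the binary erosion formula of Definition \ref{def:bdilation}, this membership is equivalent to the statement: for every $(u,v)\in U[k]$, the translate $(x+u,y+v)$ lies in $U[f]$.

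Next, I would spell out each umbra membership by Definition \ref{def:umbraImg}. The pair $(u,v)$ belongs to $U[k]$ precisely when $u\in K$ and $0\le v\le k(u)$, while $(x+u,y+v)\in U[f]$ precisely when $x+u\in F$ and $0\le y+v\le f(x+u)$. The spatial condition $x+u\in F$ for every $u\in K$ is exactly the standing hypothesis $x\in F\ominus K$ and imposes no constraint on $y$. What remains is the pair of inequalities $0\le y+v\le f(x+u)$, holding uniformly in $u\in K$ and $v\in[0,k(u)]$.

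The key step is to identify which constraints are binding. For fixed $u$, the upper bound $y+v\le f(x+u)$ is tightest at the maximal admissible $v$, namely $v=k(u)$, giving $y\le f(x+u)-k(u)$; any $y$ meeting this bound automatically satisfies the inequality for all smaller $v$. Intersecting over $u\in K$ yields $y\le \min_{u\in K}\{f(x+u)-k(u)\}$. The lower bound $y+v\ge 0$ is weakest at $v=0$ and reduces to $y\ge 0$, which is built into the range $L$ and is non-competing. Taking the supremum then gives $(f\ominus k)(x)=\min_{u\in K}\{f(x+u)-k(u)\}$, as claimed.

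The only real subtlety is the bookkeeping that collapses the universal quantifier over $v$ to the single value $v=k(u)$, using monotonicity of the inequality in $v$; once this is observed, the argument is a routine translation of binary erosion into numerical inequalities. I therefore expect no substantive obstacle beyond ensuring the quantifier collapse and checking that the non-negativity of $\min_{u\in K}\{f(x+u)-k(u)\}$ is inherited from the well-definedness assumptions on $f$ and $k$, so that the resulting value indeed lies in $L$ and the top surface is attained.
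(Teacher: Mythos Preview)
The paper does not actually prove Proposition~\ref{thm:prop52}; it is recalled as a standard computational formula from \cite{r1}, with only the caveat immediately following it that the formula is valid precisely for those $x$ with $(x,y)\in U[f]\ominus U[k]$ for some $y$, and that one otherwise takes $\max\{0,\min_{u\in K}\{f(x+u)-k(u)\}\}$.

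Your derivation is the standard one and is correct. One small refinement: when you call the lower bound $y+v\ge 0$ ``non-competing'', that is only true when $\min_{u\in K}\{f(x+u)-k(u)\}\ge 0$. If this minimum is negative, the feasible set $\{y: 0\le y\le \min_u(\cdots)\}$ is empty and the top surface is undefined at $x$; this is exactly the situation the paper's remark after the proposition addresses. Your closing sentence already flags this, but it would be cleaner to state explicitly that the argument establishes the formula on the set $\{x\in F\ominus K: \min_{u\in K}(f(x+u)-k(u))\ge 0\}$, which is precisely the projection of $U[f]\ominus U[k]$ onto $E^N$, rather than suggesting non-negativity is automatic from the hypotheses.
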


Let us note that the method to compute  $(f\ominus k)(x) $ as given in 
Proposition \ref{thm:prop52} 
is valid only for $x$ such that $(x,y) \in U[f]\ominus U[k]$ for some $y$.
To extend the definition to all $x\in F\ominus K$, one may define  
$(f\ominus k)(x) =\: \max \{0, \min_{u\in K}  {\{f(x+u)\: -\: k(u) \}} \}$.

Suppose now $A$ and $B$ are umbras. Then $A\oplus B$ and $A\ominus B$ are umbras. 
The Umbra Homomorphism Theorem below makes this property precise. 

\begin{theorem}{\textbf{Umbra Homomorphism Theorem.}}
\label{thm:58}
Let $F,\: K\subset E^{N}$ and $f:F\rightarrow L$ and $k:K\rightarrow L$. Then \begin{enumerate}[I.]
\item $U[f\oplus k]=U[f]\oplus U[k]$
\item	$U[f\ominus k]= U[f]\ominus U[k]$
\end{enumerate}
\end{theorem}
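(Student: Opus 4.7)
The plan is to reduce both identities to two elementary facts about subsets of $E^N\times L$. Call a set $A\subseteq E^N\times L$ \emph{downward closed} (in the second coordinate) if $(x,y)\in A$ and $0\le y'\le y$ together imply $(x,y')\in A$. The two facts I would establish are: (a) the umbras are exactly the downward closed subsets, equivalently $U[T[A]]=A$ for every downward closed $A$; and (b) the binary Minkowski sum and difference \eqref{dilation}, \eqref{erosion} preserve downward closedness. Granted (a) and (b), the theorem is immediate: Definitions \ref{def:gdilation} and \ref{def:gerosion} give $U[f\oplus k]=U[T[U[f]\oplus U[k]]]$ and $U[f\ominus k]=U[T[U[f]\ominus U[k]]]$, and since $U[f]$ and $U[k]$ are umbras, the outer $U[T[\cdot]]$ collapses to the identity by (a) applied to the sets $U[f]\oplus U[k]$ and $U[f]\ominus U[k]$, which are downward closed by (b).

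To prove (a), I would observe that every non-empty fibre $\{y:(x,y)\in A\}$ of a downward closed $A$ is an interval $[0,T[A](x)]$, and Definition \ref{def:umbraImg} then directly yields $U[T[A]]=A$. For (b) in the dilation case, I would take $(x,y)\in U[f]\oplus U[k]$ written as $(x,y)=(x_1,y_1)+(x_2,y_2)$ with $(x_1,y_1)\in U[f]$ and $(x_2,y_2)\in U[k]$, and fix $0\le y'\le y=y_1+y_2$. Setting $y_1'=\min(y',y_1)$ and $y_2'=y'-y_1'$ gives $0\le y_1'\le y_1$ and $0\le y_2'\le y_2$; downward closedness of $U[f]$ and $U[k]$ then produces $(x_1,y_1')\in U[f]$ and $(x_2,y_2')\in U[k]$, hence $(x,y')=(x_1,y_1')+(x_2,y_2')\in U[f]\oplus U[k]$. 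For erosion, if $(x,y)\in U[f]\ominus U[k]$ and $0\le y'\le y$, then $(x+u,y+v)\in U[f]$ for every $(u,v)\in U[k]$, and downward closedness of $U[f]$ gives $(x+u,y'+v)\in U[f]$, whence $(x,y')\in U[f]\ominus U[k]$.

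The only step that carries real content is the additive decomposition $y'=y_1'+y_2'$ in the dilation case; this is what forces $U[f]\oplus U[k]$ to inherit downward closedness from its summands, and will be the main obstacle to get right. A minor technicality I would flag is that $T[\cdot]$ is defined only on the projection of its argument onto $E^N$, which matches the domains $F\oplus K$ and $F\ominus K$ of Definitions \ref{def:gdilation} and \ref{def:gerosion}; for the erosion identity one should further observe the non-negativity restriction already noted in the remark following Proposition \ref{thm:prop52}, which motivates the clipped extension $\max\{0,\min_{u\in K}\{f(x+u)-k(u)\}\}$ when the raw formula would go negative.
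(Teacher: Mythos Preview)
The paper does not supply its own proof of Theorem~\ref{thm:58}; it is recalled from \cite{r1} (the label ``58'' refers to the numbering there), so there is no in-paper argument to compare against. Your proof is correct and is essentially the standard one: reduce both identities to the observation that $U[T[A]]=A$ whenever $A$ is downward closed in the $L$-coordinate, and then verify that Minkowski addition and subtraction of umbras are again downward closed. Your additive split $y'=\min(y',y_1)+(y'-\min(y',y_1))$ in the dilation case is exactly what is needed, and the erosion case is immediate as you wrote.

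One small caveat worth making explicit in the write-up: with the paper's convention $L=[0,l]$, the raw Minkowski sum $U[f]\oplus U[k]$ may contain points $(x,y)$ with $y>l$, so strictly speaking it lives in $E^N\times E$ rather than $E^N\times L$; the identity $U[f\oplus k]=U[f]\oplus U[k]$ therefore presupposes either that one clips back to $L$ or that $f\oplus k$ is allowed to take values up to $2l$. This is the upper-end analogue of the clipping issue you already flagged at the lower end for erosion, and it does not affect the structure of your argument.
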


We will employ at some point the following notions.

\begin{definition}{\textbf{Reflection of an Image.}}
\label{def:RefImg}
The reflection of a grey-value image $f:F\rightarrow L$ is denoted by $\breve{f}:\breve{F} \rightarrow L$, and it
is defined as $\breve{f}(x)=f(-x)$ for each $x\in \breve{F}$.  
\end{definition}

\begin{definition}{\textbf{Negative of an Image.}}
\label{def:NegImg}
The negative of a grey-value image  $f:F\rightarrow L$ is denoted by $-f:F\rightarrow L$, and it is defined
as $(-f)(x) = l-f(x)$ for each $x\in F$, where $l > 0$ is again the upper limit for possible grey values.
\end{definition}

Let us note that Haralick \cite{r1} defines the negative of an image via $(-f)(x)=-f(x), x\in F$. Definition \ref{def:NegImg} 
as we propose above is more suitable for many purposes 
because the grey value at a pixel is not supposed to be negative.

We also recall the concept of \textit{boundedness} for grey-value 
images. Let $f:F\rightarrow L$ and $g:G\rightarrow L$ be two grey-value 
images. We say $f\leq g$ if $F\subseteq G$ and $f(x)\leq g(x)$ 
for each $x\in F$.

Let us now recite the grey-value morphological 
sampling theorem from \cite{r2}. 

\begin{theorem} {\textbf{The Grey Scale Digital Morphological Sampling Theorem.}}
\label{thm:b6}
Let $F,K,S \subseteq E^{N}$, $f:F\rightarrow L$ is the image,  $k:K\rightarrow L$ is the structuring element used for filtering. Let $K, S$, and $k:K\rightarrow L$ satisfy the following conditions:
\begin{enumerate}[I.]
\item $S\oplus S =S$
\item $S=\breve{S}$
\item $K\cap S =\{ 0 \}$
\item \label{cond:g4}$a\in K_b \Rightarrow K_a \cap K_b \cap S \neq \emptyset$
\item \label{cond:g5}$k=\breve{k}$
\item \label{cond:g6}$k(a) \leq k(a-b) +k(b) $, $\forall a, b\in K$ with $a-b \in K$ 
\item \label{cond:g7}$k(0)=0$
\end{enumerate}
Then, 
\begin{enumerate}[I.]
\item $f\vert_S =\: (f\vert_S \bullet k)\vert_S$
\item $f\vert_S =\: (f\vert_S \oplus k)\vert_S$
\item $f\vert_S \bullet k \leq \: f\bullet k$
\item $f \circ k \leq \: f\vert_S \oplus k$
\item \label{res:g5} If $f=f\circ k$ and $f=f\bullet k$, then  $f\vert_S \bullet k \leq \: f \leq \: f\vert_S \oplus k$
\item If $g=g\bullet k$ , $g\vert_S =f\vert_S$ and $g\leq f\vert_S \bullet k$ then $g=\: f\vert_S \bullet k$
\item If $g=g\circ k$ , $g\vert_S =f\vert_S$ and $g\geq f\vert_S \oplus k$ then $g=\: f\vert_S \oplus k$
\end{enumerate}
\end{theorem}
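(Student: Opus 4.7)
The plan is to reduce each of the seven assertions to the binary Digital Morphological Sampling Theorem (Theorem~\ref{thm:b1}) via the umbra transform. Let $\widetilde{S}$ denote the vertical cylinder $S \times \mathbb{Z}_{\geq 0}$ over the spatial sieve. The Umbra Homomorphism Theorem~\ref{thm:58}, together with the definitions $f \circ k = (f \ominus k) \oplus k$ and $f \bullet k = (f \oplus k) \ominus k$, iteratively yields $U[f \circ k] = U[f] \circ U[k]$ and $U[f \bullet k] = U[f] \bullet U[k]$. Since $U$ is order-preserving with $T \circ U = \mathrm{id}$, grey-value (in)equalities correspond to set (in)clusions of umbras; moreover the sampling restriction satisfies $U[f|_S] = U[f] \cap \widetilde{S}$.

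The first step is to verify that $(U[k], \widetilde{S})$ satisfies the binary sampling conditions (I)--(V) of Theorem~\ref{thm:b1} within the ambient umbra lattice. Conditions (I) and (II) on $\widetilde{S}$ follow from the spatial conditions on $S$ by appending the vertical coordinate (closed under non-negative sum and appropriately symmetric in the umbra lattice). Condition (III), $U[k] \cap \widetilde{S} = \{0\}$, combines $K \cap S = \{0\}$ with $k(0) = 0$ (grey conditions III and VII). Conditions (IV) and (V) translate from the grey-value symmetry $k = \breve{k}$ (grey V) and the spatial cover condition (grey IV); the subadditivity in grey condition VI is used to propagate the covering property across the vertical direction and, crucially, yields $k \oplus k = k$ on $K$ (with $b = 0$ giving equality in condition VI), which plays the role of $K \circ K = K$ in the binary proof.

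With the lifted conditions in hand, each grey-value conclusion follows by applying the corresponding binary conclusion to $(U[f], U[k], \widetilde{S})$ and taking top surfaces. Conclusions (I)--(V) are direct. For (VI) and (VII), the hypothesis $g = g \bullet k$ lifts to $U[g] = U[g] \bullet U[k]$, and $g|_S = f|_S$ lifts to $U[g] \cap \widetilde{S} = U[f] \cap \widetilde{S}$; the binary uniqueness statement then yields $U[g] = U[f|_S] \bullet U[k] = U[f|_S \bullet k]$, so $g = f|_S \bullet k$ after applying $T$, and symmetrically for the opening case.

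The principal obstacle is that $U[k]$ is not literally a reflection-symmetric subset of $E^{N+1}$, since $\breve{U[k]}$ lies in the negative-height half-space, so the binary condition $K = \breve{K}$ cannot be read off verbatim. I would resolve this by working throughout in the umbra lattice (where only non-negative heights are relevant and the correct notion of symmetry is $k = \breve{k}$) and re-examining each invocation of $K = \breve{K}$ in the binary proof of Theorem~\ref{thm:b1} to confirm it is subsumed by the conjunction of grey conditions V, VI and VII. Where this translation proves awkward, I would fall back to a direct pointwise verification using Propositions~\ref{thm:p50} and~\ref{thm:prop52}, in which the three conditions on $k$ serve the same structural roles as reflection-symmetry, idempotence of opening and the identity element, respectively, in the binary argument.
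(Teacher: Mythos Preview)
The paper does not prove Theorem~\ref{thm:b6}; it is quoted from Haralick et al.~\cite{r2} (see the sentence ``Let us now recite the grey-value morphological sampling theorem from \cite{r2}''), so there is no in-paper argument to compare against. In \cite{r2} the proof is carried out by direct pointwise computation with the formulae of Propositions~\ref{thm:p50} and~\ref{thm:prop52}; Proposition~\ref{thm:pb22}, which the present paper also quotes, is exactly the key estimate used there.

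Your primary route---lift to umbras and invoke the binary Theorem~\ref{thm:b1} as a black box---does not go through. You already flag that $U[k]$ violates binary condition~IV, but the lifted sieve fails too: $\breve{\widetilde{S}} = S \times \mathbb{Z}_{\le 0} \neq S \times \mathbb{Z}_{\ge 0} = \widetilde{S}$, so binary condition~II is also false. Thus nothing short of rerunning the entire binary proof inside the umbra lattice, replacing every appeal to $S=\breve{S}$ and $K=\breve{K}$ by an ad~hoc argument from grey conditions V--VII, would suffice; that is a different (and longer) undertaking than your write-up suggests. Separately, the claim ``$k\oplus k = k$ on $K$'' is false in general: for $K=\{-1,0,1\}$ with $k(x)=|x|$ (which satisfies V--VII), one has $(k\oplus k)(0)=2\neq 0=k(0)$. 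Subadditivity VI gives $k(a)\le k(a-b)+k(b)$, a \emph{lower} bound on the summands in the dilation, so it cannot force equality. What the grey conditions actually give you is $k\circ k = k$ and $k\bullet k = k$, which are the correct analogues of the idempotence used in the binary argument.

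Your fallback---direct verification from Propositions~\ref{thm:p50} and~\ref{thm:prop52}---is the sound plan and coincides with the approach of \cite{r2}.
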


The conditions \ref{cond:g5},  \ref{cond:g6} and \ref{cond:g7} are introduced in \cite{r2} to allow for proper sampling and reconstruction. The conditions \ref{cond:g5} and \ref{cond:g6} imply $k(y)\geq 0 \: \, \forall y\in K$. The class of flat SEs symmetric about the origin as well as the class of paraboloid SEs with $k(0)=0$ and 
symmetric about the origin, satisfy these conditions.

We recall one more proposition from \cite{r2}, which is employed in proofs of some results in the next section. 

\begin{prop}
\label{thm:pb22}
Let $F,K,S \subseteq E^{N}$, $f:F\rightarrow L, \: k:K\rightarrow L$. Suppose $k=\breve{k}$, $u\in K_v$ imply $S\cap K_u \cap K_v \neq \emptyset$ and $k(a) \leq k(a-b) +k(b) $, $\forall a, b\in K$ satisfying $a-b \in K$. 

Then for every $u\in K$, $f(x+z)-k(z) \leq \: (f\vert_S \oplus k)(x+u)-k(u)$ for each $z\in K $ satisfying $u-z \in K$.
\end{prop}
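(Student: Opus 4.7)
The plan is to select a specific admissible pair in the sampled‐dilation formula of Proposition \ref{thm:p50} and then close the gap by a single application of the subadditivity hypothesis on $k$. By Proposition \ref{thm:p50},
$$(f\vert_S \oplus k)(x+u) \;=\; \max_{\substack{w \in K\\ x+u-w \,\in\, F\cap S}} \bigl\{f(x+u-w)+k(w)\bigr\},$$
so any admissible pair $(p,w)$ with $p\in F\cap S$, $w\in K$ and $p+w=x+u$ furnishes a lower bound on the left side.

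The key observation is that the displacement from $x+z$ to $x+u$ equals $u-z$, which already lies in $K$ by hypothesis. I would therefore take $w:=u-z$ and $p:=x+z$; provided $x+z\in F\cap S$ (so that $(p,w)$ is genuinely admissible) this yields the preliminary bound
$$(f\vert_S \oplus k)(x+u)\;\ge\;f(x+z)+k(u-z).$$
Next I would invoke the subadditivity hypothesis with $a:=u$ and $b:=z$: since $a-b=u-z\in K$ the hypothesis gives $k(u)\le k(u-z)+k(z)$, equivalently $k(u-z)\ge k(u)-k(z)$. Chaining with the previous display,
$$(f\vert_S \oplus k)(x+u)\;\ge\;f(x+z)+k(u)-k(z),$$
which rearranges exactly to the claimed $f(x+z)-k(z)\le(f\vert_S \oplus k)(x+u)-k(u)$.

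The main obstacle is the case in which $x+z$ is not itself a sample point, so that the direct pair $(x+z,u-z)$ is not admissible in the sampled dilation. To handle it I would invoke the sampling hypothesis at the translated pair $(x+u,x+z)$: since $(x+u)-(x+z)=u-z\in K$, the set $S\cap K_{x+u}\cap K_{x+z}$ is non-empty, and $k=\breve{k}$ forces $K=\breve{K}$ so that both $\alpha:=(x+u)-s$ and $\beta:=(x+z)-s$ lie in $K$ for any $s$ in that intersection, with $\alpha-\beta=u-z\in K$. The argument is then rerouted through the admissible pair $(s,\alpha)$ and subadditivity is applied to $(\alpha,\beta)$ in place of $(u,z)$, together with the earlier bound $k(u-z)\ge k(u)-k(z)$. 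Ensuring that the sample value $f(s)$ sits correctly in this extended chain of inequalities is the delicate point on which the general statement hinges.
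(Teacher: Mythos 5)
First, a point of reference: the paper does not prove Proposition \ref{thm:pb22} at all --- it is recalled verbatim from \cite{r2} and used as a black box --- so there is no in-paper argument to compare yours against. Your first step is correct and is, in substance, the entire proof under the reading that makes the statement true: with $p:=x+z\in F\cap S$ and $w:=u-z\in K$, Proposition \ref{thm:p50} gives $(f\vert_S\oplus k)(x+u)\geq f(x+z)+k(u-z)$, and the subadditivity hypothesis applied to $a=u$, $b=z$ (legitimate since $u,z,u-z\in K$) gives $k(u-z)\geq k(u)-k(z)$, which chains to the claim.

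The genuine gap is your second half, and it is not a removable technicality. The quantity $(f\vert_S\oplus k)(x+u)$ only sees values of $f$ at points of $F\cap S$, so any admissible pair $(s,\alpha)$ contributes $f(s)+k(\alpha)$, and no hypothesis of the proposition relates $f(s)$ to $f(x+z)$ when $x+z\notin S$; your proposed chain would require a lower bound on $f(s)$ in terms of $f(x+z)$ that simply is not available. (Separately, $s\in S$ does not give $s\in F$, which you would also need for $(s,\alpha)$ to be admissible.) In fact the unrestricted statement is false: take $k$ flat and symmetric, $f$ vanishing on $F\cap S$ and equal to $l$ at a single point $x+z\notin S$; then the left side is $l$ while the right side is at most $\max_{w\in K}k(w)=0$. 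So the case you defer as ``delicate'' is an unprovable step, and the proposition must carry the implicit requirement $x+z\in F\cap S$ --- i.e.\ the left-hand side is really $f\vert_S(x+z)$ --- under which your first paragraph already finishes the argument (and, notably, the hypothesis $u\in K_v\Rightarrow S\cap K_u\cap K_v\neq\emptyset$ is then not needed for this particular statement). For the precise formulation intended, one has to consult \cite{r2} directly.
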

 

\section{New Extensions of the Classic Morphological Sampling Theorem} 

Before discussing sampling in the forthcoming subsection, 
let us introduce the notion of the reflection of the umbra useful
in our setting. Furthermore, we prove an umbra-based 
monotonicity principle.

\begin{definition}{\textbf{Reflection of Umbra.}} 
Let $A\subseteq E^{N}\times L$ be a non-empty set (not necessarily an umbra). 
Then the reflection of $A$ is denoted by $\tilde{A}$ and is defined as 
$\tilde{A}$ $=$ $\{(x,a) \vert $ $(-x,y)\in A$ for 
some $y\in L,$  $l-T[A](-x)\: \leq a\leq l  \}$.  
\end{definition}

\begin{prop}{\textbf{Umbra Monotonicity Principle.}}
\label{thm:refUm}
Let $A\subseteq E^{N}\times L$  and $B\subseteq E^{N}\times L$ be two non-empty sets. If $A\subseteq B$, then $\tilde{A} \subseteq \tilde{B}$.  
\end{prop}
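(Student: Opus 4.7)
The plan is to unwind the definition of $\tilde{A}$ and check membership element-by-element, using only the monotonicity of the top-surface operator $T[\cdot]$ as the one nontrivial ingredient.

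First I would fix an arbitrary element $(x,a)\in\tilde{A}$ and translate this into two conditions: (i) there exists some $y\in L$ with $(-x,y)\in A$, and (ii) $l-T[A](-x)\le a\le l$. The goal is then to verify that the same pair $(x,a)$ satisfies the analogous conditions for $\tilde{B}$, namely that some $(-x,y')\in B$ exists and that $l-T[B](-x)\le a\le l$.

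The existence clause (i) is immediate: since $A\subseteq B$, the point $(-x,y)$ already lies in $B$, so we may take $y'=y$. For the inequality clause, the key lemma I need is the monotonicity of the top surface, i.e.\ $A\subseteq B\Rightarrow T[A](x)\le T[B](x)$ at every $x$ where $T[A]$ is defined. This follows directly from Definition~\ref{def:topsur}, because taking a maximum over a larger set can only increase the value: $T[A](-x)=\max\{y\mid (-x,y)\in A\}\le \max\{y\mid (-x,y)\in B\}=T[B](-x)$. Consequently $l-T[B](-x)\le l-T[A](-x)\le a\le l$, which is exactly the requirement for $(x,a)\in\tilde{B}$.

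There is essentially no obstacle here beyond being careful about the definition: the reflection $\tilde{A}$ is the set of points sitting between the ``flipped'' top surface and the ceiling $l$, and both the ceiling and the support condition behave monotonically under set inclusion via the top-surface operator. I would close with a one-line remark that the argument uses nothing about $A$ or $B$ being umbras themselves, consistent with the stated generality of the definition.
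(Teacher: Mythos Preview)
Your proof is correct and follows essentially the same approach as the paper: pick $(x,a)\in\tilde{A}$, carry the witness $(-x,y)$ into $B$ via $A\subseteq B$, and use $T[A](-x)\le T[B](-x)$ to get $l-T[B](-x)\le l-T[A](-x)\le a\le l$. The paper's proof is just a terser version of the same chain of implications.
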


\begin{proof}
\begin{align*}
(x, a)\in  \tilde{A}  \Rightarrow \: &(-x,y_1)\in A  \text{ for some $y_1 \in L$}\\
& \text{ and } l-T[A](-x)\leq a \leq l\\
 \Rightarrow \: & (-x,y_2) \in B \text{ for some $y_2 \in L$ and} \\
   &l-T[B](-x)\leq l-T[A](-x)\leq a \leq l\\
 \Rightarrow \: &(x,a)\in  \tilde{B}
\end{align*}
\end{proof}

We will make use also of the following notion.

\begin{definition}{\textbf{Translation of an Umbra.}}
Let $A\subseteq E^{N}\times L$ be a non-empty set (not necessarily an umbra) and $y_0 \geq 0$. Then, $A_{(x_0, y_0)}$ $=$ $\{(x+x_0, y+y_0)\vert (x,y)\in A \text{ and } 0\leq y+y_0 \leq l \}$.
\end{definition}

In our work, we will employ the following alternative 
definitions of grey scale opening and closing.
Our notions rely on the alternative definitions of 
opening and closing for grey value images as formulated in
\eqref{alt-open} and \eqref{alt-close} for binary images.

\begin{prop} {\textbf{Alternative Definition of Grey Scale Opening / Closing.}}
\begin{align}
\begin{split}
U[f\circ k] = &\bigcup _{\{ (x,y) \vert  U[k]_{(x,y)} \subseteq U[f]\} } U[k]_{(x,y)}
\end{split}
\label{thm:p71}
\end{align}

\begin{align}
\begin{split}
U[f\bullet k] = \: & U[(-((-f)\circ \breve{k}))] \\
= \: & \{ (x,y) \vert    (x,y)\in  \tilde{U[k]_{(x_0,y_0)}}\\
&\: \mathrm{ implies } \; \tilde{U[k]_{(x_0,y_0)}} \cap U[f] \neq \emptyset \}
\end{split}
\label{prop:AltGrayClosing}
\end{align}
\end{prop}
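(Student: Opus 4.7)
The plan is to derive Part~(1) directly from the Umbra Homomorphism Theorem combined with the binary alternative opening formula~\eqref{alt-open}, and then to build Part~(2) on top of Part~(1) using a grey-value duality between opening and closing.

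For Part~(1), I would write $f\circ k = (f\ominus k)\oplus k$ and apply Theorem~\ref{thm:58} twice to obtain $U[f\circ k] = (U[f]\ominus U[k])\oplus U[k]$, i.e.\ the binary opening of the two umbras. Applying the binary alternative opening formula~\eqref{alt-open} to the sets $U[f]$ and $U[k]$ immediately yields the desired union, since the containment $U[k]_{(x,y)}\subseteq U[f]$ is exactly the translation condition appearing in~\eqref{alt-open}.

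For Part~(2), I would first establish the grey-value duality $f\bullet k = -((-f)\circ \breve{k})$ by a direct calculation using Propositions~\ref{thm:p50} and~\ref{thm:prop52}. Applying the erosion formula to $(-f)\ominus \breve{k}$ and performing the change of variables $v=-u$ reduces it to $l-(f\oplus k)(x)$, which is $-(f\oplus k)(x)$; an analogous computation with the dilation formula then gives $(-f)\circ \breve{k} = -(f\bullet k)$, so double-negating yields $f\bullet k = -((-f)\circ \breve{k})$ and hence the first equality. For the second equality, I would apply Part~(1) to $(-f)\circ \breve{k}$, expressing $U[(-f)\circ \breve{k}]$ as the union of translated umbras $U[\breve{k}]_{(x_0',y_0')}$ with $U[\breve{k}]_{(x_0',y_0')}\subseteq U[-f]$. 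Taking the negative at the image level corresponds, at the umbra level, to the vertical flip $y\mapsto l-y$ composed with complementation within the strip $E^N\times L$, which turns this union into an intersection. Using the identity $\tilde{A} = \{(-x,\,l-y)\,\vert\,(x,y)\in A\}$ valid for an umbra $A$, the reparametrization $(x_0',y_0')\leftrightarrow (x_0,y_0)$ identifies $U[\breve{k}]_{(x_0',y_0')}$ with the vertical flip of $\tilde{U[k]_{(x_0,y_0)}}$ and converts the containment $U[\breve{k}]_{(x_0',y_0')}\subseteq U[-f]$ into the disjointness $\tilde{U[k]_{(x_0,y_0)}}\cap U[f]=\emptyset$. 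The contrapositive of the resulting membership condition is precisely the claimed characterization.

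The main obstacle is the reparametrization step in the second equality of Part~(2): one must verify that grey-scale negation on the image side and spatial reflection of the SE correspond, when passed through the umbra functor, precisely to the umbra reflection $\tilde{\,\cdot\,}$, and that under this correspondence the admissible parameters on the opening side map bijectively onto those on the closing side with the containment exchanged for disjointness. This is a careful but essentially elementary index calculation that relies on the definitions of $\tilde{\cdot}$, of translation of an umbra, and of the umbra of a negative image; attention must be paid to the clamping of grey values at $0$ and $l$ so that the vertical flip-and-complement really is an involution within the strip $E^N\times L$.
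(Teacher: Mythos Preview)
Your plan is correct and, at the structural level, matches the paper's argument. Part~(1) is handled identically: the paper also simply invokes the Umbra Homomorphism Theorem and the definition of opening, which amounts to applying~\eqref{alt-open} to $U[f]$ and $U[k]$.

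For Part~(2) the two routes diverge in presentation rather than substance. The paper takes the duality $U[f\bullet k]=U[-((-f)\circ\breve{k})]$ as its starting point (without deriving it, as you do from Propositions~\ref{thm:p50} and~\ref{thm:prop52}) and then establishes the second equality by a direct pointwise chain of equivalences. The key device in the paper is an auxiliary parameter $\alpha>0$: membership $(x,y)\in U[-g]$ is rewritten as $(x,l+\alpha-y)\notin U[g]$ for every $\alpha>0$, which is precisely the mechanism that resolves the strict-versus-nonstrict boundary issue you flag under ``clamping''. The paper then unfolds the alternative opening formula for $U[(-f)\circ\breve{k}]$ pointwise and tracks the parameter through, arriving at the $\tilde{U[k]_{(x_0,y_0)}}$ condition. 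Your structural description (vertical flip plus complementation in the strip, followed by the reparametrisation $(x_0',y_0')\leftrightarrow(x_0,y_0)$) is the global picture of exactly this computation; the paper's $\alpha$-trick is the concrete way the boundary behaves correctly, and is what you would end up writing down when you make the ``careful but elementary index calculation'' explicit. One caution: your identity $\tilde{A}=\{(-x,l-y)\mid(x,y)\in A\}$ holds for genuine umbras but not for the translated sets $U[k]_{(x_0,y_0)}$ with $y_0>0$, so in the reparametrisation step you should work directly with the definition of $\tilde{\,\cdot\,}$ via $T[\cdot]$ rather than with that shortcut.
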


\begin{proof} 
The alternative definition of opening directly follows Umbra Homomorphism Theorem and definition of opening for grey-value images.

 We elaborate here on the last equality in \eqref{prop:AltGrayClosing}.\\
$(x,y)  \in U[f\bullet k] = U[(-((-f)\circ k))]$\\
\\
 $\Leftrightarrow$  for any  $\alpha > 0$,  $(x,l+\alpha -y)$ $\not \in$   $U[((-f)\circ k)]$\\
 \\ 
 $\Leftrightarrow$  for any  $\alpha > 0$,  for any  $(x_0 ,y_0)$, $y_0 \geq 0$,\\
 satisfying  $(x,l+\alpha-y)$ $\in U[\breve{k}] _{(x_o,y_0+\alpha )}$,\\
 we have $U[\breve{k}] _{(x_o,y_0+\alpha )}$ $\not \subseteq U[(-f)]$\\
\\
$\Leftrightarrow$  for any  $\alpha > 0$,  for any  $(x_0 ,y_0)$,  $y_0 \geq 0$, \\
satisfying  $(x,l+\alpha-y)$ $\in$ $U[\breve{k}] _{(x_o,y_0+\alpha )}$,\\
$\exists$ $u$ $\in$ $\breve{K}:$ $T[U[\breve{k}] _{(x_o,y_0+\alpha )}] (u+x_0) > l-f(u+x_0)$\\
\\
$\Leftrightarrow$  for any  $\alpha > 0$,  for any  $(x_0 ,y_0)$, $y_0 \geq 0$,\\ 
satisfying $(x,l+\alpha-y)$ $\in$ $U[\breve{k}] _{(x_o,y_0+\alpha )}$, \\
 $\exists u$ $\in$ $\breve{K}:$  $f(u+x_0) >$  $l -T[U[\breve{k}] _{(x_0,y_0+\alpha )}](u+x_0)$\\
 \\
$\Leftrightarrow$ for any  $(x_0,y_0)$,  $y_0 \geq 0$, \\
 satisfying $(x,l-y)$ $\in$   $U[\breve{k}] _{(x_0,y_0)}$,\\ 
 $\exists u$ $\in$  $\breve{K}:$    $f(u+x_0) \geq$  $l -T[U[\breve{k}] _{(x_0,y_0)}](u+x_0)$\\
 \\
$\Leftrightarrow$ $(x,y)\in$ $\tilde{U[k]_{(-x_0,y_0)}}$  implies \\
$\tilde{U[k]_{(-x_0,y_0)}} \cap U[f]$ $\neq$ $\emptyset$ \\
\\i.e.\, for any $(x,y)\in E^N \times L$,      $(x,y)\in \tilde{U[k]_{(x_0,y_0)}}$  implies 
 $\tilde{U[k]_{(x_0,y_0)}} \cap U[f]$ $\neq$ $\emptyset$.

\end{proof}

In \cite{r1}, the authors describe the effect of closing with a paraboloid SE as 
"taking the reflection of the paraboloid, turning it upside down and sliding 
it all over the top surface of $f$. The closing is the surface of all the 
lowest point reached by the sliding paraboloid".
The Proposition \ref{prop:AltGrayClosing} demonstrates 
this mathematically, for all considered types of SEs.

\subsection{Morphologically Operating in the Sampled Domain}
\label{Sec:OpSampDom}

In this section, we present the main results of this paper. 
We examine the relation between sampling after performing the 
morphological operations and morphological operating in sampled domains 
on grey-value images. 

We study morphologically operating on image $f:F\rightarrow L$ by 
a SE $b:B\rightarrow L$ with respect to sampling, using a sieve $S$. 
The SE $k:K\rightarrow L$ is used to filter the image $f$ as well as the SE $b$, 
for sampling. $k$ is also used for the purpose of reconstruction.    
We assume that $S, K \subseteq \: E^{N}$ and $k:K\rightarrow L$ 
satisfy the conditions mentioned in The Grey-Value Digital 
Morphological Sampling Theorem (\ref{thm:b6}).

As in the case of binary images, some of the following results, e.g.\ Theorem \ref{thm:2ba}, Theorem \ref{thm:b4a} etc.\, require that $b=b\circ k$. 
This in essence means that $b$ does not have any details finer than 
the filter $k$, and $b$ can be appropriately reconstructed from the 
sampled SE $b\vert _S$ as mentioned in the  Grey-value Digital Morphological Sampling Theorem \ref{thm:b6}, result \ref{res:g5}.

It is interesting to note that due to idempotence of morphological 
opening, given any SE $b:B\rightarrow L$, when filtered 
(using morphological opening) with $k$, 
gives $b\circ k=$ $b_{\text{filt}}:B\circ K\rightarrow L$, and 
$b_{\text{filt}}$ satisfies the property 
$b_{\text{filt}} \circ k$ $=(b\circ k)\circ k$ $=b\circ k$ $=b_{\text{filt}}$.

\begin{figure*}[!htbp]
\centering

\tikzset{every picture/.style={line width=0.75pt}} 

\begin{tikzpicture}[x=0.65pt,y=0.65pt,yscale=-1,xscale=1]

\draw  [draw opacity=0] (11,10) -- (71.5,10) -- (71.5,70.48) -- (11,70.48) -- cycle ; \draw   (11,10) -- (11,70.48)(31,10) -- (31,70.48)(51,10) -- (51,70.48)(71,10) -- (71,70.48) ; \draw   (11,10) -- (71.5,10)(11,30) -- (71.5,30)(11,50) -- (71.5,50)(11,70) -- (71.5,70) ; \draw    ;
\draw  [draw opacity=0] (101,10) -- (201.5,10) -- (201.5,110.48) -- (101,110.48) -- cycle ; \draw   (101,10) -- (101,110.48)(121,10) -- (121,110.48)(141,10) -- (141,110.48)(161,10) -- (161,110.48)(181,10) -- (181,110.48)(201,10) -- (201,110.48) ; \draw   (101,10) -- (201.5,10)(101,30) -- (201.5,30)(101,50) -- (201.5,50)(101,70) -- (201.5,70)(101,90) -- (201.5,90)(101,110) -- (201.5,110) ; \draw    ;
\draw   (31.5,40.23) .. controls (31.5,34.85) and (35.86,30.48) .. (41.24,30.48) .. controls (46.62,30.48) and (50.98,34.85) .. (50.98,40.23) .. controls (50.98,45.61) and (46.62,49.97) .. (41.24,49.97) .. controls (35.86,49.97) and (31.5,45.61) .. (31.5,40.23) -- cycle ;
\draw   (141.5,60.47) .. controls (141.5,55.22) and (145.75,50.97) .. (151,50.97) .. controls (156.25,50.97) and (160.5,55.22) .. (160.5,60.47) .. controls (160.5,65.72) and (156.25,69.97) .. (151,69.97) .. controls (145.75,69.97) and (141.5,65.72) .. (141.5,60.47) -- cycle ;
\draw  [draw opacity=0] (273,40.48) -- (293.5,40.48) -- (293.5,61.48) -- (273,61.48) -- cycle ; \draw   (273,40.48) -- (273,61.48)(293,40.48) -- (293,61.48) ; \draw   (273,40.48) -- (293.5,40.48)(273,60.48) -- (293.5,60.48) ; \draw    ;
\draw  [draw opacity=0] (273,10.48) -- (293.5,10.48) -- (293.5,31.48) -- (273,31.48) -- cycle ; \draw   (273,10.48) -- (273,31.48)(293,10.48) -- (293,31.48) ; \draw   (273,10.48) -- (293.5,10.48)(273,30.48) -- (293.5,30.48) ; \draw    ;
\draw   (273.5,20.47) .. controls (273.5,15.22) and (277.75,10.97) .. (283,10.97) .. controls (288.25,10.97) and (292.5,15.22) .. (292.5,20.47) .. controls (292.5,25.72) and (288.25,29.97) .. (283,29.97) .. controls (277.75,29.97) and (273.5,25.72) .. (273.5,20.47) -- cycle ;
\draw  [draw opacity=0] (273,71.48) -- (293.5,71.48) -- (293.5,92.48) -- (273,92.48) -- cycle ; \draw   (273,71.48) -- (273,92.48)(293,71.48) -- (293,92.48) ; \draw   (273,71.48) -- (293.5,71.48)(273,91.48) -- (293.5,91.48) ; \draw    ;

\draw (42,83.48) node  [align=left] {$\displaystyle k:K\rightarrow \mathbb{Z}$};
\draw (150,121.48) node  [align=left] {$\displaystyle b:B\rightarrow \mathbb{Z}$};
\draw (326,22.48) node  [align=left] {:Origin};
\draw (425,50.48) node  [align=left] {:Value of function is $x$ at the position};
\draw (393,81.48) node  [align=left] {:Position not in the domain};
\draw (283,49.48) node  [align=left] {$\displaystyle x$};
\draw (41,41.48) node  [align=left] {0};
\draw (61,60.48) node  [align=left] {0};
\draw (21,21.48) node  [align=left] {0};
\draw (41,21.48) node  [align=left] {0};
\draw (61,21.48) node  [align=left] {0};
\draw (21,41.48) node  [align=left] {0};
\draw (41,61.48) node  [align=left] {0};
\draw (21,60.48) node  [align=left] {0};
\draw (61,41.48) node  [align=left] {0};
\draw (111,41.48) node  [align=left] {0};
\draw (131,61.48) node  [align=left] {0};
\draw (151,81.48) node  [align=left] {0};
\draw (171,101.48) node  [align=left] {0};
\draw (111,61.48) node  [align=left] {0};
\draw (131,81.48) node  [align=left] {0};
\draw (151,101.48) node  [align=left] {0};
\draw (131,101.48) node  [align=left] {0};
\draw (171,81.48) node  [align=left] {0};
\draw (131,21.48) node  [align=left] {0};
\draw (151,41.48) node  [align=left] {0};
\draw (171,61.48) node  [align=left] {0};
\draw (191,81.48) node  [align=left] {0};
\draw (111,81.48) node  [align=left] {0};
\draw (131,41.48) node  [align=left] {0};
\draw (151,21.48) node  [align=left] {0};
\draw (171,41.48) node  [align=left] {0};
\draw (191,61.48) node  [align=left] {0};
\draw (171,21.48) node  [align=left] {0};
\draw (191,41.48) node  [align=left] {0};
\draw (151,61.48) node  [align=left] {0};

\end{tikzpicture}
\caption{The structuring elements $k$ and $b$ used in the examples}
\label{fig:kb-grey}
\end{figure*}
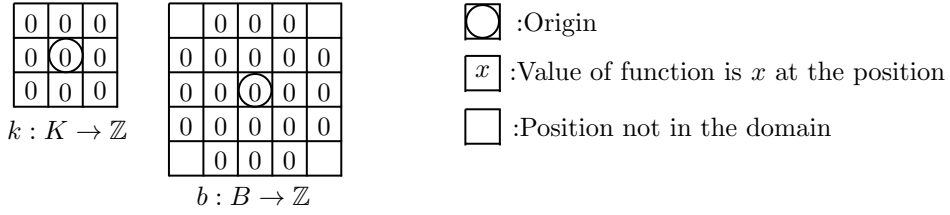

\begin{figure*}[!htbp]
\centering

\tikzset{every picture/.style={line width=0.75pt}} 

\begin{tikzpicture}[x=0.75pt,y=0.75pt,yscale=-1,xscale=1]

\draw  [draw opacity=0] (51,18) -- (111.5,18) -- (111.5,78.48) -- (51,78.48) -- cycle ; \draw   (51,18) -- (51,78.48)(71,18) -- (71,78.48)(91,18) -- (91,78.48)(111,18) -- (111,78.48) ; \draw   (51,18) -- (111.5,18)(51,38) -- (111.5,38)(51,58) -- (111.5,58)(51,78) -- (111.5,78) ; \draw    ;
\draw  [draw opacity=0] (161,2) -- (261.5,2) -- (261.5,102.48) -- (161,102.48) -- cycle ; \draw   (161,2) -- (161,102.48)(181,2) -- (181,102.48)(201,2) -- (201,102.48)(221,2) -- (221,102.48)(241,2) -- (241,102.48)(261,2) -- (261,102.48) ; \draw   (161,2) -- (261.5,2)(161,22) -- (261.5,22)(161,42) -- (261.5,42)(161,62) -- (261.5,62)(161,82) -- (261.5,82)(161,102) -- (261.5,102) ; \draw    ;
\draw   (71.5,48.23) .. controls (71.5,42.85) and (75.86,38.48) .. (81.24,38.48) .. controls (86.62,38.48) and (90.98,42.85) .. (90.98,48.23) .. controls (90.98,53.61) and (86.62,57.97) .. (81.24,57.97) .. controls (75.86,57.97) and (71.5,53.61) .. (71.5,48.23) -- cycle ;
\draw   (201.5,52.47) .. controls (201.5,47.22) and (205.75,42.97) .. (211,42.97) .. controls (216.25,42.97) and (220.5,47.22) .. (220.5,52.47) .. controls (220.5,57.72) and (216.25,61.97) .. (211,61.97) .. controls (205.75,61.97) and (201.5,57.72) .. (201.5,52.47) -- cycle ;
\draw  [draw opacity=0] (306,33.48) -- (326.5,33.48) -- (326.5,54.48) -- (306,54.48) -- cycle ; \draw   (306,33.48) -- (306,54.48)(326,33.48) -- (326,54.48) ; \draw   (306,33.48) -- (326.5,33.48)(306,53.48) -- (326.5,53.48) ; \draw    ;
\draw  [draw opacity=0] (306,3.48) -- (326.5,3.48) -- (326.5,24.48) -- (306,24.48) -- cycle ; \draw   (306,3.48) -- (306,24.48)(326,3.48) -- (326,24.48) ; \draw   (306,3.48) -- (326.5,3.48)(306,23.48) -- (326.5,23.48) ; \draw    ;
\draw   (306.5,13.47) .. controls (306.5,8.22) and (310.75,3.97) .. (316,3.97) .. controls (321.25,3.97) and (325.5,8.22) .. (325.5,13.47) .. controls (325.5,18.72) and (321.25,22.97) .. (316,22.97) .. controls (310.75,22.97) and (306.5,18.72) .. (306.5,13.47) -- cycle ;
\draw  [draw opacity=0] (306,64.48) -- (326.5,64.48) -- (326.5,85.48) -- (306,85.48) -- cycle ; \draw   (306,64.48) -- (306,85.48)(326,64.48) -- (326,85.48) ; \draw   (306,64.48) -- (326.5,64.48)(306,84.48) -- (326.5,84.48) ; \draw    ;

\draw (82,91.48) node  [align=left] {$\displaystyle k_{2} :K\rightarrow \mathbb{Z}$};
\draw (210,113.48) node  [align=left] {$\displaystyle b:B_2\rightarrow \mathbb{Z}$};
\draw (352,15.48) node  [align=left] {:Origin};
\draw (444,42.48) node  [align=left] {:Value of function is $\displaystyle x$ at the position \ };
\draw (410,73.48) node  [align=left] {:Position not in the domain};
\draw (316,42.48) node  [align=left] {$\displaystyle x$};
\draw (81,49.48) node  [align=left] {0};
\draw (101,68.48) node  [align=left] {10};
\draw (61,29.48) node  [align=left] {10};
\draw (81,29.48) node  [align=left] {10};
\draw (101,29.48) node  [align=left] {10};
\draw (61,49.48) node  [align=left] {10};
\draw (81,69.48) node  [align=left] {10};
\draw (61,68.48) node  [align=left] {10};
\draw (101,49.48) node  [align=left] {10};
\draw (171,33.48) node  [align=left] {10};
\draw (191,53.48) node  [align=left] {20};
\draw (211,73.48) node  [align=left] {20};
\draw (231,93.48) node  [align=left] {10};
\draw (171,53.48) node  [align=left] {10};
\draw (191,73.48) node  [align=left] {20};
\draw (211,93.48) node  [align=left] {10};
\draw (191,93.48) node  [align=left] {10};
\draw (231,73.48) node  [align=left] {20};
\draw (191,13.48) node  [align=left] {10};
\draw (211,33.48) node  [align=left] {20};
\draw (231,53.48) node  [align=left] {20};
\draw (251,73.48) node  [align=left] {10};
\draw (171,73.48) node  [align=left] {10};
\draw (191,33.48) node  [align=left] {20};
\draw (211,13.48) node  [align=left] {10};
\draw (231,33.48) node  [align=left] {20};
\draw (251,53.48) node  [align=left] {10};
\draw (231,13.48) node  [align=left] {10};
\draw (251,33.48) node  [align=left] {10};
\draw (211,53.48) node  [align=left] {10};
\draw (171,12.92) node  [align=left] {10};
\draw (250,12.92) node  [align=left] {10};
\draw (251,91.92) node  [align=left] {10};
\draw (172,91.92) node  [align=left] {10};

\end{tikzpicture}

\caption{The non-flat structuring elements $k_2$ and $b_2$ used in examples}
\label{fig:kb2-grey}
\end{figure*}


\begin{prop}
\label{thm:pb14a}
Let $B\subseteq E^{N}$, and $b:B\rightarrow L$ be the structuring element employed in the dilation and erosion. Then, 
\begin{enumerate}[I.]
\item $(f\vert_S \oplus b\vert_S) \leq (f\oplus b)\vert_S$
\item $(f\vert_S \ominus b\vert_S) \geq (f\ominus b)\vert_S$
\end{enumerate}
\end{prop}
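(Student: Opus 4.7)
The plan is to argue directly from the pointwise formulas for grey-value dilation and erosion given in Propositions \ref{thm:p50} and \ref{thm:prop52}, together with the sampling property $S\oplus S = S$ that is imposed throughout this section via Theorem \ref{thm:b6}. This avoids any explicit umbra manipulation; the two inequalities are simply the grey-value counterparts of the binary inclusions in Proposition \ref{thm:pb14}, and the essential observation is that a $\max$ or a $\min$ taken over a subset of indices is bounded in the expected direction.

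For part~I, I would first verify the domain inclusion $(F\cap S)\oplus (B\cap S)\subseteq (F\oplus B)\cap S$: the containment in $F\oplus B$ is immediate from monotonicity of the Minkowski sum, while the containment in $S$ follows from $S\oplus S = S$. For any $x$ in this set, Proposition \ref{thm:p50} gives
\[
(f|_S\oplus b|_S)(x) = \max\{f(x-u)+b(u)\,:\, u\in B\cap S,\ x-u\in F\cap S\},
\]
and this maximum ranges over a subset of the pairs admissible in the corresponding formula for $(f\oplus b)(x)$, so the pointwise inequality is immediate.

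For part~II, the domain inclusion runs in the opposite direction: $(F\ominus B)\cap S\subseteq (F\cap S)\ominus (B\cap S)$. Given $x\in (F\ominus B)\cap S$ and any $u\in B\cap S$, one has $x+u\in F$ (since $x\in F\ominus B$) and $x+u\in S$ (since $x,u\in S$ and $S\oplus S=S$), so $x+u\in F\cap S$ as needed. Then Proposition \ref{thm:prop52} yields
\[
(f|_S\ominus b|_S)(x) = \min\{f(x+u) - b(u)\,:\, u\in B\cap S\},
\]
and since $B\cap S\subseteq B$, a minimum over the smaller index set can only be larger, giving $(f|_S\ominus b|_S)(x)\geq (f\ominus b)(x)$.

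I do not foresee a real obstacle. The only care needed is bookkeeping: the relation $\leq$ between grey-value images bundles a domain inclusion together with the pointwise inequality, and this inclusion reverses between dilation and erosion, so the two parts require symmetric but opposite set-theoretic arguments before the actual inequality step, which is in each case just the trivial monotonicity of $\max$ and $\min$ under restriction of the index set.
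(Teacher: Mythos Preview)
Your proposal is correct and follows essentially the same route as the paper: both arguments reduce to the monotonicity of $\max$/$\min$ under restriction of the index set, after first establishing the appropriate domain inclusion. The only cosmetic difference is that the paper cites Proposition~\ref{thm:pb14} for the binary domain inclusions $(F\cap S)\oplus(B\cap S)\subseteq(F\oplus B)\cap S$ and $(F\ominus B)\cap S\subseteq(F\cap S)\ominus(B\cap S)$, whereas you rederive them in place from $S\oplus S=S$.
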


\begin{proof}
\begin{enumerate}[I.]

\item We know, from Proposition \ref{thm:pb14} \textit{I}, 
that $(F\cap S)\oplus (B\cap S) \subseteq (F\oplus B)\cap S$.\\
Let $x\in (F\cap S)\oplus (B\cap S)$. Then, 

$(f\vert_S \oplus b\vert_S)(x)$ 
\begin{align*}
&=\max _{x-u\in F\cap S, \: u\in B\cap S} \{ f\vert_S (x-u) + b\vert_S (u)\}\\
&=\max _{x-u\in F\cap S, \: u\in B\cap S}  \{ f\vert_S (x-u) + b\vert_S (u)\}\\
&\leq\max _{x-u\in F, \: u\in B} \{ f (x-u) + b (u)\}\\
&=(f\oplus b)(x)
\end{align*}
But, $x$ $\in (F\cap S)\oplus (B\cap S)$  $\subseteq$ $(F\oplus B)\cap S$ $\subseteq$  $S$. \\
It follows $(f\oplus b)(x)$ $=$ $(f\oplus b)\vert_S (x)$,\\
which implies $(f\vert_S \oplus b\vert_S)(x) \leq (f\oplus b)\vert_S (x)$.\\
\\
\item We know, from Proposition \ref{thm:pb14} \textit{II}, $(F\ominus B)\cap S$   $\subseteq (F\cap S)\ominus (B\cap S)$.\\ 
Let $x\in (F\ominus B)\cap S$. Then,

\begin{align}
 \label{eq:141}
 \begin{split}
(f\ominus b)\vert_S (x) =& (f\ominus b) (x) \\
=& \min _{u\in B} \{f(x+u)-b(u)\} \\
\leq & \min_{u\in B\cap S} \{f(x+u) - b\vert_S(u)\} 
\end{split}
\end{align}
Thus, $x\in (F\ominus B)\cap S$ and $u\in B\cap S$ $\Rightarrow x+u\in S$ ($\because S\oplus S=S$)\\
Also, $x\in F\ominus B$ and $u\in B$ $\Rightarrow$ $x+u\in F$. \\
Therefore, from \eqref{eq:141}, we have,
\begin{align*}
(f\ominus b)\vert_S (x)  \leq & \min_{u\in B\cap S} \{f(x+u) - b\vert_S(u)\} \\
=& \min_{u\in B\cap S} \{f\vert_S(x+u) - b\vert_S(u)\}\\
=& (f\vert_S \ominus b\vert_S)(x). 
\end{align*}
Thus, $(f\ominus b)\vert _S \leq  (f\vert _S \ominus b\vert _S)$.
\end{enumerate}
\end{proof}

Figures \ref{fig:grey1411} and \ref{fig:grey1412} illustrate the first part of above proposition. We see that dilation in the sampled domain (here, $f\vert _S \oplus b\vert _S$) is bounded above by sampling of dilated image (here, $(f\oplus b)\vert _S$). 
Figures \ref{fig:grey1421} and \ref{fig:grey1422} illustrate the second part of the proposition, i.e.\ erosion in sampled domain (here, $ f\vert _S \ominus b\vert _S$) is bounded below by sampling of eroded image (here, $(f\ominus b)\vert_S $). The illustrations of the above proposition using non-flat SEs $k_2$ and $b_2$ are given in Figures \ref{fig:nonflatgrey1411}, \ref{fig:nonflatgrey1412} \ref{fig:nonflatgrey1421} and \ref{fig:nonflatgrey1422}.


\begin{lemma}
\label{thm:lemaa}
\begin{enumerate}[I.]
\item $(f\vert_S \oplus b\vert_S) =\: (f\oplus b\vert_S)\vert_S$
\item $(f\vert_S \ominus b\vert_S) =\: (f\ominus b\vert_S)\vert_S$
\end{enumerate}
\end{lemma}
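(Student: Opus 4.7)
The plan is to mirror the binary Lemma \ref{thm:lema} using the pointwise formulas of Propositions \ref{thm:p50} and \ref{thm:prop52}. The restriction to $S$ on both sides will be reconciled via the sampling conditions $S \oplus S = S$ and $S = \breve{S}$, which guarantee that whenever $x \in S$ and $u \in S$, also $x \pm u \in S$.

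For part I, first apply Proposition \ref{thm:p50} to rewrite
\[
(f\vert_S \oplus b\vert_S)(x) \;=\; \max_{u \in B \cap S,\; x-u \in F \cap S}\{f\vert_S(x-u) + b\vert_S(u)\},
\]
which is defined exactly on the domain $(F \cap S) \oplus (B \cap S)$. By the binary Lemma \ref{thm:lema} (part I) this domain equals $[F \oplus (B\cap S)] \cap S$, which is precisely the domain of the right-hand side $(f \oplus b\vert_S)\vert_S$. So the two sides have the same domain, and it remains to show that they agree pointwise on it.

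Fix such $x$; in particular $x \in S$. For the right-hand side, Proposition \ref{thm:p50} gives
\[
(f \oplus b\vert_S)\vert_S(x) \;=\; \max_{u \in B \cap S,\; x-u \in F}\{f(x-u) + b\vert_S(u)\}.
\]
The only apparent difference is the condition $x-u \in F$ versus $x-u \in F \cap S$, and the argument $f(x-u)$ versus $f\vert_S(x-u)$. Here I use the sampling conditions: for any $u \in B \cap S \subseteq S$ we have $-u \in S$ by $S = \breve{S}$, and then $x - u = x + (-u) \in S \oplus S = S$. Hence $x - u \in F$ is equivalent to $x - u \in F \cap S$, and $f(x-u) = f\vert_S(x-u)$ on this common set. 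The two maxima coincide, establishing the equality.

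Part II is entirely analogous, using Proposition \ref{thm:prop52} in place of Proposition \ref{thm:p50}. Both sides are defined on $(F \ominus (B\cap S)) \cap S$ (invoking Lemma \ref{thm:lema}, part II, for the domain match); for $x \in S$ and $u \in B \cap S$, again $x+u \in S \oplus S = S$, so $f(x+u) = f\vert_S(x+u)$. The minimum on the left therefore equals the minimum on the right. The only subtle point, and the main obstacle, will be keeping the bookkeeping of domains clean: one must check that the set of admissible $u$'s in both max/min expressions really agrees, which is where the binary version of the lemma and the symmetry/additivity of $S$ do the essential work.
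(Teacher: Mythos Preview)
Your proposal is correct and follows essentially the same approach as the paper: both arguments invoke the binary Lemma \ref{thm:lema} to match the domains, then use the sampling conditions $S\oplus S=S$ and $S=\breve S$ to force $x\pm u\in S$ whenever $x\in S$ and $u\in B\cap S$, so that the pointwise max/min formulas from Propositions \ref{thm:p50} and \ref{thm:prop52} agree on both sides. The only cosmetic difference is that the paper starts from $(f\oplus b\vert_S)\vert_S(x)$ and simplifies toward $(f\vert_S\oplus b\vert_S)(x)$, whereas you write out both sides and compare; the substance is identical.
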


\begin{proof}

\begin{enumerate}[I.]

\item We know from Lemma \ref{thm:lema} that \\ $[F\oplus (B\cap S)]\cap S =\: (F\cap S)\oplus (B\cap S)$\\
Let $x\in [F\oplus (B\cap S)]\cap S  =\: (F\cap S)\oplus (B\cap S)$. Then,
\begin{align*}
(f\oplus b\vert_S)\vert_S (x) =&(f\oplus b\vert_S)(x)\\
= \max _{u\in B\cap S, \: x-u\in F} &\{f(x-u) + b\vert_S(u\})
\end{align*}
$x\in [F\oplus (B\cap S)]\cap S$ $\Rightarrow x\in S$. Similarly, $u\in B\cap S$ $\Rightarrow  u\in S$\\
$S=\breve{S}$ and $S\oplus S=S$, therefore, $x-u\in S$
Thus, we have, $(f\oplus b\vert_S)\vert_S (x) =$
\begin{align*}
&\max _{u\in B\cap S, \: x-u\in F}\{f(x-u) + b\vert_S(u\})= \\
&\max _{u\in B\cap S, \: x-u\in F\cap S}\{f(x-u) + b\vert_S(u\}) =\\
&\max _{u\in B\cap S, \: x-u\in F\cap S}\{f\vert_S(x-u) + b\vert_S(u\}) = \\
&(f\vert_S \oplus b\vert_S)(x) 
\end{align*}
This is true for each $x\in  [F\oplus (B\cap S)]\cap S$  $=(F\cap S)\oplus (B\cap S)$ ,  therefore $(f\vert_S \oplus b\vert_S) =\: (f\oplus b\vert_S)\vert_S$.\\
\\

\item We know from Lemma \ref{thm:lema} that \\ $[F\ominus (B\cap S)]\cap S = \: (F\cap S)\ominus (B\cap S)$.\\
Let $x\in F\ominus (B\cap S)]\cap S =\: (F\cap S)\ominus (B\cap S)$. Then, 
\begin{equation*}
(f\ominus b\vert_S)(x)=\: \min _{u\in B\cap S} \{f(x+u) -b\vert_S(u)\}
\end{equation*}
$x\in [F\ominus (B\cap S)]\cap S$ and $u\in (B\cap S)$ $\Rightarrow x\in S,\: u\in S,\: x+u\in S$ and $x+u\in F$ i.e $x+u\in F\cap S$. Therefore, $(f\ominus b\vert_S)(x)=$
\begin{align*}
&\min _{u\in B\cap S} \{f(x+u) -b\vert_S(u)\}=\\
&\min _{u\in B\cap S} \{f\vert_S(x+u) -b\vert_S(u)\}=\\ 
&(f\vert_S \ominus b\vert_S)(x)
\end{align*}
This holds for all $x\in  [F\ominus (B\cap S)]\cap S =\: (F\cap S)\ominus (B\cap S)$.Thus,  $(f\vert_S \ominus b\vert_S) =\: (f\ominus b\vert_S)\vert_S$.

\end{enumerate}
\end{proof}

\begin{lemma}
\label{thm:lemba}
Let $B=B\circ K$ and $b=b\circ k$ Then,
$(f\vert_S\bullet k)\oplus b\leq \: (f\vert_S \oplus k)\oplus b\vert_S $
\end{lemma}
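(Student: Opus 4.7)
The plan is to reduce the assertion to a pointwise min--max comparison, using binary Lemma~\ref{thm:lemb} to align the supports and invoking the hypothesis $b = b \circ k$ only through the reconstruction bound supplied by Theorem~\ref{thm:b6}. First I would observe that the set inclusion
\[
\bigl((F \cap S) \bullet K\bigr) \oplus B \;\subseteq\; \bigl((F \cap S) \oplus K\bigr) \oplus (B \cap S),
\]
which follows from Lemma~\ref{thm:lemb} (applicable because $b = b \circ k$ forces $B = B \circ K$ on the supports), shows that the domain of the left-hand side of the proposed inequality is contained in that of the right-hand side. This reduces the claim to a pointwise bound on their common domain.

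Next, at a point $x$ in the left-hand domain, Propositions~\ref{thm:p50} and~\ref{thm:prop52} give
\[
\bigl((f|_S \bullet k) \oplus b\bigr)(x) = \max_{v \in B}\Bigl[\min_{u \in K}\bigl\{(f|_S \oplus k)(x-v+u) - k(u)\bigr\} + b(v)\Bigr].
\]
For each $v$ appearing in the outer maximum I want to produce a specific $u_v \in K$ satisfying (i) $v - u_v \in B \cap S$ and (ii) $b(v) \leq b(v - u_v) + k(u_v)$. Once such a $u_v$ is in hand, bounding the inner minimum by its value at $u = u_v$ and using the trivial identity $x - v + u_v = x - (v - u_v)$ converts the bracketed quantity into at most $(f|_S \oplus k)(x - (v - u_v)) + b(v - u_v)$. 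Setting $w := v - u_v \in B \cap S$, this is a valid term in the pointwise formula for $\bigl((f|_S \oplus k) \oplus b|_S\bigr)(x)$, so taking the maximum over $v$ closes the argument.

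The existence of the desired $u_v$ is where the hypothesis on $b$ enters: applying Theorem~\ref{thm:b6}, item IV, with $b$ playing the role of $f$ gives $b \circ k \leq b|_S \oplus k$, which under $b = b \circ k$ becomes the reconstruction bound $b \leq b|_S \oplus k$. Expanding via Proposition~\ref{thm:p50} at the point $v \in B$ yields
\[
b(v) \;\leq\; \max_{u \in K,\, v-u \in B \cap S}\{b(v-u) + k(u)\},
\]
and an argmax supplies $u_v$ satisfying both (i) and (ii). The subtlest step I expect to need the most care is verifying that $x - v + u_v$ actually lies in the domain $(F \cap S) \oplus K$ of $f|_S \oplus k$; fortunately this is automatic from $x - v \in (F \cap S) \bullet K = ((F \cap S) \oplus K) \ominus K$, because the erosion identity forces $K_{x-v} \subseteq (F \cap S) \oplus K$, covering every translate $x - v + u_v$ with $u_v \in K$. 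With the domain bookkeeping handled in this way, the pointwise chain above yields the claimed grey-value inequality.
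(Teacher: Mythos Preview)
Your argument is correct and follows the same skeleton as the paper's proof: the domain inclusion via Lemma~\ref{thm:lemb}, then a pointwise bound obtained by expanding $(f|_S\bullet k)(x-v)=\min_{u\in K}\{(f|_S\oplus k)(x-v+u)-k(u)\}$ and pairing each $v\in B$ with a suitable $w\in B\cap S$.

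The one noteworthy difference is how that pairing is produced. The paper works by hand: from $B=B\circ K$ it picks $y$ with $v\in K_y\subseteq B$, uses sampling condition~\ref{cond:g4} to find $z\in K_y\cap K_v\cap S$, writes $z=v+k_0$, and then invokes Proposition~\ref{thm:pb22} to get $b(v)\le b(z)+k(-k_0)$. You instead appeal directly to Theorem~\ref{thm:b6}(IV) applied to $b$, obtaining $b=b\circ k\le b|_S\oplus k$, and read off $u_v$ as an argmax in the expansion of $(b|_S\oplus k)(v)$. Your route is shorter and hides the sampling-condition bookkeeping inside an already-proved theorem; the paper's route is more self-contained and makes explicit where each hypothesis ($B=B\circ K$, condition~\ref{cond:g4}, $k=\breve{k}$) is used. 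Either way the resulting inequality $b(v)\le b|_S(v-u_v)+k(u_v)$ with $v-u_v\in B\cap S$ is the same, and the remaining verification that $x-v+u_v\in(F\cap S)\oplus K$ is handled identically in both arguments via $(F\cap S)\bullet K=((F\cap S)\oplus K)\ominus K$.
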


\begin{proof}
By Umbra Homomorphism Theorem \ref{thm:58}, domain of $(f\vert_S\bullet k)$ is $(F\cap S)\bullet K$.\\
Let $x\in [(F\cap S)\bullet K]\oplus B \subseteq \: [(F\cap S)\oplus K]\oplus (B\cap S) $ 
(by Lemma \ref{thm:lemb}). Then, 
\begin{equation*}
(f\vert_S\bullet k)(x) = \max _{x-u\in (F\cap S)\bullet K, \: u\in B }\{(f\vert_S\bullet k)(x-u) +b(u)\}
\end{equation*}
For each $u\in B$, $B=B\circ K$ $\therefore \exists y$ such that $K_y\subset B$ and by Sampling Conditions (see, Theorem \ref{thm:b6}, \ref{cond:g4}), $\exists z\in K_y\cap K_u \cap S$.\\
$z=k_0 +u$ for some $k_0 \in K$. Also, $z\in K_y \subseteq B$ $\Rightarrow z\in B\cap S$.\\
By Proposition \ref{thm:pb22}, $b(u)\leq b(z)+k(-k_0)$.\\
Since, $K=\breve{K}$ , $(x-u) \in (F\cap S)\bullet K \Rightarrow \: (x-u)-k_0 \in (F\cap S)\oplus K$ and
\begin{align*} 
&(f\vert_S \bullet k)(x-u)=\\
&((f\vert_S \oplus k)\ominus k)(x-u)=\\
&\min _{a\in K} \{(f\vert_S \oplus k)(x-u+a)-k(a)\} \leq \\
&(f\vert_S \oplus k)(x-u-k_0)-k(-k_0)
\end{align*}
For each $u\in B$ satisfying $(x-u) \in  (F\cap S)\bullet K\:  \exists z=(u+k_0) \in (B\cap S)$ satisfying $(x-z) \in (F\cap S)\oplus K$ and thus, we have, 
\begin{align*}
&(f\vert_S\bullet k)(x-u)+b(u) \leq  \\
&(f\vert_S \oplus k)(x-z) -k(-k_0) +b(z) +k(k_0) =\\ &(f\vert_S \oplus k)(x-z) + b\vert_S(z)
\end{align*}
Thus, $(f\vert_S\bullet k)(x)=$
\begin{align*} 
&\max _{x-u\in (F\cap S)\bullet K, \: u\in B }\{(f\vert_S\bullet k)(x-u) +b(u)\} \\
&\leq \: \max_{(x-z)\in (F\cap S)\oplus K,\: z\in (B\cap S)  }\{(f\vert_S \oplus k)(x-z) + b\vert_S(z)\}\\ 
&= ((f\vert_S \oplus k)\oplus b\vert_S)(x)
\end{align*}
This holds for all $x\in$  $[(F\cap S)\bullet K]\oplus B$ $\subseteq$  $[(F\cap S)\oplus K]\oplus (B\cap S)$. Therefore, $(f\vert_S\bullet k)\oplus b$ $\leq  (f\vert_S \oplus k)\oplus b\vert_S$.

\end{proof}

We arrive at two of the main results of  this section. The following two theorems illustrate the interaction between sampling and grey-value dilation and grey-value erosion. 

\begin{theorem} {Grey-value Sample Dilation Theorem }
\label{thm:2ba}
Let $B=B\circ K$ and $b=b\circ k$, then $f\vert_S \oplus b\vert_S =\: ((f\vert_S\bullet k)\oplus b)\vert_S$.  
\end{theorem}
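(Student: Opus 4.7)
The theorem is the grey-value analog of the binary Sample Dilation Theorem (Theorem \ref{thm:b2}), and my plan is to mirror its two-direction structure. The key machinery is already in place: Lemma \ref{thm:lemba} (which requires $B = B\circ K$ and $b = b\circ k$), Lemma \ref{thm:lemaa}, and statement II of the Grey Scale Digital Morphological Sampling Theorem \ref{thm:b6}, namely $f\vert_S = (f\vert_S \oplus k)\vert_S$. Throughout I use two near-trivial facts: grey-value dilation is monotone (if $f_1 \leq f_2$ and $g_1 \leq g_2$ then $f_1 \oplus g_1 \leq f_2 \oplus g_2$, which follows from the umbra definition together with the Umbra Homomorphism Theorem \ref{thm:58}) and restriction to $S$ is order-preserving.

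For the easy direction, $f\vert_S \oplus b\vert_S \leq ((f\vert_S \bullet k)\oplus b)\vert_S$, I would combine extensivity of closing ($f\vert_S \leq f\vert_S \bullet k$) with the obvious $b\vert_S \leq b$. Monotonicity of dilation then yields $f\vert_S \oplus b\vert_S \leq (f\vert_S \bullet k)\oplus b$. Since the left-hand side is already supported on $(F\cap S)\oplus (B\cap S) \subseteq S\oplus S = S$, restricting both sides to $S$ leaves the left side unchanged and gives the claim.

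The reverse direction is the main obstacle, because it must reconcile the closing $\bullet k$ on the left with the sampled structuring element $b\vert_S$ on the right; this is exactly what Lemma \ref{thm:lemba} is designed for. Starting from $(f\vert_S \bullet k)\oplus b \leq (f\vert_S \oplus k)\oplus b\vert_S$, I would restrict to $S$ to obtain $((f\vert_S \bullet k)\oplus b)\vert_S \leq ((f\vert_S \oplus k)\oplus b\vert_S)\vert_S$. Next I would invoke Lemma \ref{thm:lemaa} I with $f$ replaced by the auxiliary function $g := f\vert_S \oplus k$ (the lemma only uses the sampling conditions on $S$ and $K$, and no properties of the particular image fed to it), yielding $((f\vert_S \oplus k)\oplus b\vert_S)\vert_S = (f\vert_S \oplus k)\vert_S \oplus b\vert_S$. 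Finally, statement II of Theorem \ref{thm:b6} collapses $(f\vert_S \oplus k)\vert_S$ to $f\vert_S$, so the right-hand side becomes exactly $f\vert_S \oplus b\vert_S$. Chaining the inequalities gives $((f\vert_S \bullet k)\oplus b)\vert_S \leq f\vert_S \oplus b\vert_S$, and together with the easy direction this closes the proof.

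The only subtle point, and the step I would verify most carefully, is the legitimacy of reusing Lemma \ref{thm:lemaa} I on the non-original argument $f\vert_S \oplus k$; a quick inspection of its proof confirms the argument depends only on the set-theoretic identity from Lemma \ref{thm:lema} relating $[F'\oplus(B\cap S)]\cap S$ and $(F'\cap S)\oplus(B\cap S)$, which is purely combinatorial in the underlying domain $F'$. Everything else in the proof is routine bookkeeping about where successive dilations and restrictions are defined, and the heavy lifting has been front-loaded into Lemma \ref{thm:lemba}, whose hypotheses $B = B\circ K$ and $b = b\circ k$ guarantee that no information in $b$ is lost by sampling.
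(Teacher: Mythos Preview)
Your proposal is correct and follows essentially the same route as the paper: the easy direction via extensivity of closing and monotonicity of dilation, and the hard direction via Lemma~\ref{thm:lemba} followed by Lemma~\ref{thm:lemaa} and the identity $(f\vert_S\oplus k)\vert_S=f\vert_S$ from Theorem~\ref{thm:b6}. The only cosmetic difference is that the paper unrolls the application of Lemma~\ref{thm:lemaa} to $g=f\vert_S\oplus k$ as an explicit pointwise computation, whereas you invoke the lemma abstractly; your remark that this reuse is legitimate because the lemma depends only on the domain-level identity from Lemma~\ref{thm:lema} is exactly right.
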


\begin{proof}
By Lemma \ref{thm:lemaa}, $f\vert_S \oplus b\vert_S =\: (f\vert_S \oplus b)\vert_S$.\\ 
By Extensivity property of closing (\textit{Proposition 68} of \cite{r1}),  $f\vert_S \leq \: f\vert_S \bullet k$ $\Rightarrow (f\vert_S\oplus b)\vert_S \leq ((f\vert_S \bullet k)\oplus b)\vert_S$\\
$\Rightarrow f\vert_S \oplus b\vert_S \leq \: ((f\vert_S \bullet k)\oplus b)\vert_S$.\\
\\
Let $ x\in [((F\cap S)\bullet K)\oplus B]\cap S \subseteq \:  [(F\cap S)\oplus K]\oplus (B\cap S) $ (Lemma \ref{thm:lemb}).\\
For any $u\in B\cap S$, we have $(x-u) \in S$ $\because S=S\oplus S$ and $S=\breve{S}$. Also, by the Sampling conditions, $[(F\cap S)\oplus K]\cap S =\: F\cap S$ and $(f\vert_S \oplus k)\vert_S = f\vert_S$.
From Lemma \ref{thm:lemba}, we have 
\begin{align*}
& ((f\vert_S \bullet k)\oplus b)\vert_S (x)  \leq\:  ((f\vert_S\oplus k)\oplus b\vert_S)\vert_S(x) \\
& =\:((f\vert_S\oplus k)\oplus b\vert_S)(x) \\
& = \: \max _{(x-u)\in (F\cap S)\oplus K ,\: u\in B\cap S} \{(f\vert_S \oplus k)(x-u) + b\vert_S (u) \} \\
& =\: \max _{(x-u)\in [(F\cap S)\oplus K]\cap S ,\: u\in B\cap S} \{(f\vert_S \oplus k)\vert_S(x-u)+ b\vert_S (u) \} \\
& =\: \max _{(x-u)\in (F\cap S) ,\: u\in B\cap S}  \{(f\vert_S)(x-u) + b\vert_S (u) \} \\
& =\: (f\vert_S \oplus b\vert _S)(x)
\end{align*}
i.e.\ $((f\vert_S \bullet k)\oplus b)\vert_S (x)$ $\leq (f\vert_S \oplus b\vert _S)(x)$,\\
$\forall x$ $\in [((F\cap S)\bullet K)\oplus B]\cap S$. \\
\\
Thus, $(f\vert_S \bullet k)\oplus b)\vert_S  \leq (f\vert_S \oplus b\vert_S)$\\
i.e.\ $f\vert_S \oplus b\vert_S =\: ((f\vert_S\bullet k)\oplus b)\vert_S$.
\end{proof}

Figures \ref{fig:grey1411} and \ref{fig:grey22} are identical. This illustrates the Grey-value Sample Dilation Theorem, i.e.\ dilation in sampled domain (here, $ (f\vert_S \oplus b\vert_S)$) is equivalent to  sampling after dilating the minimal reconstruction (here, $((f\vert_S\bullet k)\oplus b)\vert_S $). The theorem is illustrated by Figures  \ref{fig:nonflatgrey1411} and \ref{fig:nonflatgrey22} for non-flat morphology.

\begin{figure*}[!htbp]
\minipage{0.24\linewidth}
 \includegraphics[width=\linewidth]{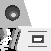}
  \caption{$ (f\oplus b)\vert_S$}\label{fig:grey1412}
\endminipage\hfill
\minipage{0.24\linewidth}
  \includegraphics[width=\linewidth]{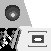}
  \caption{$(f\vert_S \oplus b\vert_S)$}\label{fig:grey1411}
\endminipage\hfill
\minipage{0.24\linewidth}
 \includegraphics[width=\linewidth]{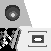}
  \caption{$((f\vert_S\bullet k)\oplus b)\vert_S$}\label{fig:grey22}
\endminipage\hfill
\end{figure*}

\begin{figure*}[!htbp]
\minipage{0.24\linewidth}
 \includegraphics[width=\linewidth]{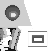}
  \caption{$ (f\oplus b_2)\vert_S$}\label{fig:nonflatgrey1412}
\endminipage\hfill
\minipage{0.24\linewidth}
  \includegraphics[width=\linewidth]{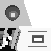}
  \caption{$(f\vert_S \oplus b_2 \vert_S)$}\label{fig:nonflatgrey1411}
\endminipage\hfill
\minipage{0.24\linewidth}
 \includegraphics[width=\linewidth]{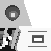}
  \caption{$((f\vert_S\bullet k_2)\oplus b_2 )\vert_S$}\label{fig:nonflatgrey22}
\endminipage\hfill
\end{figure*}

\begin{remark}
It can be shown using Umbra Homomorphism Theorem \ref{thm:58} that opening and closing of grey-value images are increasing operations. That is, if $f\leq g$, then for any S.E $c:C\rightarrow L$, $f\circ c\leq g\circ c$ and $f\bullet c \leq g \bullet c$. 
\end{remark}

\begin{theorem}{Grey-value Sampling Erosion Theorem}
\label{thm:3ba}
Let $B=B\circ K$  and $b=b\circ k$. 
Then $f\vert_S \ominus b\vert_S = ((f\vert_S \oplus k)\ominus b)\vert_S$.
\end {theorem}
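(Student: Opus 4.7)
The plan is to mirror the proof of the Grey-value Sample Dilation Theorem (Theorem \ref{thm:2ba}) and establish the two pointwise inequalities separately. Both directions are argued at a fixed $x \in S$ lying in the appropriate domain, the main tools being the sampling identity $(f\vert_S \oplus k)\vert_S = f\vert_S$, a companion-point construction via $B = B \circ K$ together with sampling condition IV of Theorem \ref{thm:b6}, and a triangle-type estimate inherited from the dilation proof.

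For the inequality $((f\vert_S \oplus k)\ominus b)\vert_S \leq f\vert_S \ominus b\vert_S$, I would expand the erosion on the left as $\min_{u \in B}\{(f\vert_S \oplus k)(x+u) - b(u)\}$ and restrict the minimum to $u \in B \cap S$, which can only increase its value. For such $u$, $S \oplus S = S$ forces $x + u \in S$, so $(f\vert_S \oplus k)(x+u) = f\vert_S(x+u) = f(x+u)$ by the sampling identity, leaving exactly $(f\vert_S \ominus b\vert_S)(x)$ as an upper bound.

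For the reverse inequality I would fix an arbitrary $u \in B$ and, using $B = B \circ K$ together with sampling condition IV, produce a companion point $z \in K_u \cap K_y \cap S \subseteq B \cap S$ (with $K_y \subseteq B$ and $u \in K_y$) of the form $z = u + k_0$, $k_0 \in K$. Reflection $K = \breve K$ gives $-k_0 \in K$, and evaluating the dilation at $x + u$ with shift $-k_0$ yields $(f\vert_S \oplus k)(x+u) \geq f\vert_S(x+z) + k(-k_0) = f(x+z) + k(-k_0)$, where $x + z \in F \cap S$ follows from $x,z \in S$, $S \oplus S = S$ and $x \in (F\cap S)\ominus (B\cap S)$. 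Combining with the key estimate $b(u) \leq b(z) + k(-k_0)$ from the proof of Theorem \ref{thm:2ba} gives $(f\vert_S \oplus k)(x+u) - b(u) \geq f(x+z) - b(z) \geq (f\vert_S \ominus b\vert_S)(x)$, and taking the minimum over $u \in B$ closes the direction. The principal obstacle, inherited from the dilation proof, is the estimate $b(u) \leq b(z) + k(-k_0)$ itself; it follows from Proposition \ref{thm:pb22} applied to $b$ (via $k = \breve k$ and the triangle-type inequality of sampling condition VI) together with $b = b \circ k$ and the sampling identity $(b\vert_S \oplus k)\vert_S = b\vert_S$, which identifies $(b\vert_S \oplus k)(z)$ with $b(z)$ at points $z \in S$. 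Everything else is routine bookkeeping with $S \oplus S = S$, $K = \breve K$ and the sampling identities, exactly as in the dilation case.
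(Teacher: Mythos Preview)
Your proposal is correct and follows essentially the same approach as the paper: both directions hinge on the sampling identity $(f\vert_S\oplus k)\vert_S=f\vert_S$, and the harder inequality is obtained by producing, for each $u\in B$, a companion point in $B\cap S$ together with the estimate $b(u)\le b\vert_S(u_0)+k(k_0)$ from Proposition~\ref{thm:pb22}. The only cosmetic difference is that the paper obtains the companion point directly from the inclusion $B=B\circ K\subseteq (B\cap S)\oplus K$ (writing $u'=u_0+k_0$), whereas you rebuild it via $K_y\subseteq B$ and sampling condition~IV as in Lemma~\ref{thm:lemba}; the resulting estimates are identical.
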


\begin{proof}
From Grey-value Morphological Sampling Theorem \ref{thm:b6}, we have $(f\vert_S \oplus k)\vert_S =\: f\vert_S$,\\
i.e.\ $(f\vert_S \ominus b\vert_S)=\: (f\vert_S \oplus k)\vert_S \ominus b\vert_S$.\\
From Lemma \ref{thm:lemaa}, we have  $(f\vert_S \oplus k)\vert_S \ominus b\vert_S=\: ((f\vert_S \oplus k)\ominus b\vert_S)\vert_S \geq \: ((f\vert_S\oplus k)\ominus b)\vert_S$.\\
\\
We show that $f\vert_S \ominus b\vert_S \leq \: ((f\vert_S \oplus k)\ominus b)\vert_S$.\\
Let $x\in (F\cap S)\ominus (B\cap S) =\{ [(F\cap S)\oplus K]\ominus B\}\cap S $ (by Theorem \ref{thm:b3}).\\
We show that $\min _{u\in B\cap S} \{(f\vert_S)(x+u) -b\vert_S(u)\} \leq \min _{u' \in B} \{((f\vert_S)\oplus k)(x+u') -b(u')\}$\\
$u'\in B \subseteq (B\cap S) \oplus K $
$\Rightarrow u'$ $=u_0+ k_0$, where $u_0 \in B\cap S$ and $k_0 \in K $\\
$b=b\circ k$, therefore, by Proposition \ref{thm:pb22}, we have $b(u')\leq b\vert_S(u_0) + k(k_0)$.\\
If $u'=u_0+k_0$ , then $((f\vert_S)\oplus k)(x+u')\geq \: (f\vert_S)(x+u_0) +k(k_0)$.\\
Thus, for each $u'\in B$,  $\exists u_0\in B\cap S$ such that 
$ \{((f\vert_S)\oplus k)(x+u') -b(u')\} \geq \: \{(f\vert_S)(x+u_0) -b\vert_S(u_0)\}$\\
i.e. $\min _{u\in B\cap S} \{(f\vert_S)(x+u) -b\vert_S(u)\}\leq$ $\min _{u' \in B} \{((f\vert_S)\oplus k)(x+u') -b(u')\}$, \\
 $\forall x$ $\in (F\cap S)\ominus (B\cap S)$.
\\
 $\Rightarrow \: f\vert_S \ominus b\vert_S \leq \: ((f\vert_S \oplus k)\ominus b)\vert_S$.
 
\end{proof}

As expected, Figure \ref{fig:grey1421}, erosion in sampled domain (here, $(f\vert_S \ominus b\vert_S)$), is identical to Figure \ref{fig:grey32}, sampling after eroding the maximal reconstruction (here, $(f\vert_S \oplus k)\vert_S \ominus b\vert_S$ ), thus demonstrating an example of the above theorem.
Figures \ref{fig:nonflatgrey1421} and \ref{fig:nonflatgrey32}  illustrate the above theorem for non-flat filter and SE.


\begin{figure*}[!htbp]
\minipage{0.24\linewidth}
 \includegraphics[width=\linewidth]{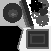}
  \caption{$ (f\ominus b)\vert_S $}\label{fig:grey1422}
\endminipage\hfill
\minipage{0.24\linewidth}
  \includegraphics[width=\linewidth]{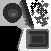}
  \caption{$(f\vert_S \ominus b\vert_S)$ }\label{fig:grey1421}
\endminipage\hfill
\minipage{0.24\linewidth}
 \includegraphics[width=\linewidth]{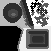}
  \caption{$(f\vert_S \oplus k)\vert_S \ominus b\vert_S$}\label{fig:grey32}
\endminipage\hfill
\end{figure*}

\begin{figure*}[!htbp]
\minipage{0.24\linewidth}
 \includegraphics[width=\linewidth]{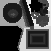}
  \caption{$ (f\ominus b_2)\vert_S$ \textcolor{white}{ABCDE}}\label{fig:nonflatgrey1422}
\endminipage\hfill
\minipage{0.24\linewidth}
  \includegraphics[width=\linewidth]{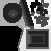}
  \caption{$(f\vert_S \ominus b_2 \vert_S)$ \textcolor{white}{ABCDE}}\label{fig:nonflatgrey1421}
\endminipage\hfill
\minipage{0.24\linewidth}
 \includegraphics[width=\linewidth]{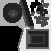}
  \caption{$(f\vert_S \oplus k_2 )\vert_S \ominus b_2 \vert_S$}\label{fig:nonflatgrey32}
\endminipage\hfill
\end{figure*}

We now proceed to discuss the interaction of grey-value opening and closing with sampling and reconstruction.

\begin{prop}
\label{thm:pb16a}
$(f\circ b\vert_S)\vert_S =f\vert_S\circ b\vert_S$
\end{prop}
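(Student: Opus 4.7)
The plan is to reduce the statement directly to the definition of opening together with the two parts of Lemma~\ref{thm:lemaa}; no new umbra-level reasoning should be required, because both sides of the desired equality split neatly along dilation/erosion compositions.

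First I would unfold both sides using the definition $g\circ c = (g\ominus c)\oplus c$. On the left this gives
\begin{equation*}
(f\circ b\vert_S)\vert_S \;=\; \bigl((f\ominus b\vert_S)\oplus b\vert_S\bigr)\vert_S,
\end{equation*}
while on the right it gives $f\vert_S\circ b\vert_S = (f\vert_S\ominus b\vert_S)\oplus b\vert_S$. So the task reduces to commuting the outer sampling $(\cdot)\vert_S$ past the outer dilation, and then commuting the inner sampling past the inner erosion.

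Next I would apply Lemma~\ref{thm:lemaa}~II to the inner expression with $f$ as the base image and $b$ as the structuring element, which gives $f\vert_S\ominus b\vert_S = (f\ominus b\vert_S)\vert_S$. Then I would apply Lemma~\ref{thm:lemaa}~I with the grey-value image $f':=f\ominus b\vert_S$ in the role of $f$; this yields
\begin{equation*}
f'\vert_S \oplus b\vert_S \;=\; (f'\oplus b\vert_S)\vert_S.
\end{equation*}
Substituting $f'=f\ominus b\vert_S$ back in, the left-hand side becomes $(f\ominus b\vert_S)\vert_S \oplus b\vert_S = (f\vert_S\ominus b\vert_S)\oplus b\vert_S = f\vert_S\circ b\vert_S$, while the right-hand side becomes $\bigl((f\ominus b\vert_S)\oplus b\vert_S\bigr)\vert_S = (f\circ b\vert_S)\vert_S$, which is exactly the claim.

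There is essentially no obstacle here beyond bookkeeping: the only subtle point is that Lemma~\ref{thm:lemaa}~I is invoked with $f\ominus b\vert_S$ (not $f$ itself) as the base image, which is legitimate because the lemma does not require any special structure on the base image — its proof uses only the sampling conditions $S=\breve{S}$ and $S\oplus S=S$, which are unaffected by the substitution. This also explains why no counterpart of the condition $b=b\circ k$ (which appeared in Theorems~\ref{thm:2ba} and~\ref{thm:3ba}) is needed: the structuring element $b\vert_S$ is already living in the sampled domain, so the reconstruction-type arguments that required $b=b\circ k$ do not come into play.
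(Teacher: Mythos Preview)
Your argument is correct and considerably more streamlined than the paper's. You exploit the two-sided equalities of Lemma~\ref{thm:lemaa} to push the sampling restriction inward through the outer dilation and then through the inner erosion, arriving at the result in two purely algebraic steps. The paper instead proves the two inequalities separately: the direction $(f\circ b\vert_S)\vert_S \geq f\vert_S\circ b\vert_S$ by monotonicity of opening, and the reverse direction by working at the umbra level with the alternative characterisation $U[f\circ b\vert_S]=\bigcup U[b\vert_S]_{(y,y_0)}$, then intersecting with $S\times L$ and arguing that the relevant translates already live in the sampled domain. Your route avoids the umbra machinery entirely; the paper's route, on the other hand, is what makes the parallel Proposition~\ref{thm:pb17a} for closing go through, since closing does not decompose as cleanly against Lemma~\ref{thm:lemaa} and the paper instead uses the reflected-umbra description~\eqref{prop:AltGrayClosing}. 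So your shortcut is genuine here but does not obviously transfer to the companion result.
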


\begin{proof}
Let $x\in (F\circ (B\cap S)\cap S =(F\cap S)\circ (B\cap S)$ (by Proposition \ref{thm:pb16}).\\
Clearly, $f\vert_S\circ b\vert_S \leq (f\circ b\vert_S)$ $\Rightarrow (f\vert_S\circ b\vert_S) (x) \leq (f\circ b\vert_S)(x) =((f\circ b\vert_S)\vert_S)(x)$\\
Thus, $(f\circ b\vert_S)\vert_S \geq \: f\vert_S\circ b\vert_S$.\\
\\
We show that if  $x\in (F\circ (B\cap S))\cap S =(F\cap S)\circ (B\cap S)$ then $((f\circ b\vert_S)\vert_S)(x) \leq \: (f\vert_S\circ b\vert_S)(x)$\\
\\
$(f\circ b\vert_S)(x) =T(U[f]\circ U[b\vert_S])(x) =$\\ 
$T(\bigcup _{\{(y,y_0)\vert U[b\vert_S]_{(y,y_0)}\subseteq U[f],\: y_0\geq 0 \}} U[b\vert_S]_{(y,y_0)})(x)$\\
$\Rightarrow (x, (f\circ b\vert_S)(x))\in U[b\vert_S]_{(y,y_0)}\subseteq U[f]$\\
$\Rightarrow x=b+y$ for some $b\in B\cap S$ $x,b\in S$ implies $y\in S$ and\\ $(B\cap S)_y \subseteq F \Rightarrow (B\cap S)_y \subseteq (F\cap S)$\\
Taking intersection with $(S\times L)$, we have,\\
$(x, (f\circ b\vert_S)(x))\in\: U[b\vert_S]_{(y,y_0)}\cap (S\times L)=$ \\
$U[b\vert_S]_{(y,y_0)} \subseteq\: U[f]\cap (S\times L) =\: U[f\vert_S]$\\
$\Rightarrow\: (x, (f\circ b\vert_S)(x)) \in \: U[b\vert_S]_{(y,y_0)} \subseteq \: U[f\vert_S]$\\
$\Rightarrow \:   (x, (f\circ b\vert_S)(x)) \in \: U[f\vert_S \circ b\vert_S]$\\
$\Rightarrow \: ((f\circ b\vert_S)\vert_S)(x) \leq \: (f\vert_S\circ b\vert_S)(x)$\\
Thus, $(f\circ b\vert_S)\vert_S \leq \: f\vert_S\circ b\vert_S$\\
$\therefore (f\circ b\vert_S)\vert_S =f\vert_S\circ b\vert_S$
\end{proof}

\begin{prop}
\label{thm:pb17a}
$(f\bullet b\vert_S) =f\vert_S\bullet b\vert_S$
\end{prop}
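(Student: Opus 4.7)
The plan is to reduce this to the preceding Proposition~\ref{thm:pb16a} via duality, rather than repeating the umbra-set manipulation from scratch. Note first that, read in parallel with Proposition~\ref{thm:pb17} in the binary setting and with Proposition~\ref{thm:pb16a} just above, the intended statement is $(f\bullet b\vert_S)\vert_S = f\vert_S\bullet b\vert_S$, so the proof is arranged to establish this equality. The strategy rests on three elementary bookkeeping identities: the duality $f\bullet k = -((-f)\circ \breve{k})$ (recalled in the proof of \eqref{prop:AltGrayClosing}); the commutation of negation with restriction, $(-g)\vert_S = -(g\vert_S)$, which holds because these operations act on disjoint coordinates; and the identity $\breve{b\vert_S} = \breve{b}\vert_S$, which follows from the sampling condition $S = \breve{S}$ of Theorem~\ref{thm:b6}.

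First I would rewrite the left-hand side using duality:
\begin{align*}
(f\bullet b\vert_S)\vert_S
 &= \bigl(-((-f)\circ \breve{b\vert_S})\bigr)\vert_S \\
 &= -\bigl(((-f)\circ \breve{b}\vert_S)\vert_S\bigr),
\end{align*}
where the first line uses duality with structuring element $b\vert_S$, and the second uses $\breve{b\vert_S} = \breve{b}\vert_S$ together with the commutation of negation and restriction. Second, I would apply Proposition~\ref{thm:pb16a} to the grey-value image $-f$ and structuring element $\breve{b}$, giving
\begin{equation*}
((-f)\circ \breve{b}\vert_S)\vert_S = (-f)\vert_S\circ \breve{b}\vert_S.
\end{equation*}
Substituting this into the previous display and once more invoking duality (now applied to $f\vert_S$ and $b\vert_S$, after rewriting $(-f)\vert_S = -(f\vert_S)$ and $\breve{b}\vert_S = \breve{b\vert_S}$), the right-hand side becomes exactly $f\vert_S\bullet b\vert_S$, finishing the proof.

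The main obstacle is the accounting around reflections and restrictions: specifically, verifying that restricting $\breve{b}$ to $S$ yields the same function as reflecting the restriction of $b$ to $S$. This hinges on $S = \breve{S}$, which ensures that $\breve{B}\cap S = \breve{B\cap S}$, so that both sides of $\breve{b\vert_S} = \breve{b}\vert_S$ have the same domain, and pointwise equality is then immediate from the definition of reflection. Once these identities are in place, the proof is essentially a one-line substitution, and in particular no new umbra-level argument is required. A direct alternative, should one prefer to avoid duality, would be to replay the umbra argument of Proposition~\ref{thm:pb16a} using the alternative closing representation \eqref{prop:AltGrayClosing}; this would however be considerably more tedious without adding new information.
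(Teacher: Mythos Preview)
Your approach is correct and genuinely different from the paper's. The paper proves the two inequalities separately: $(f\vert_S\bullet b\vert_S)\leq (f\bullet b\vert_S)\vert_S$ follows from monotonicity of closing together with $f\vert_S\leq f$, while the reverse inequality is obtained by a direct umbra computation using the alternative representation \eqref{prop:AltGrayClosing} of closing, tracking an element $(x,(f\bullet b\vert_S)(x))\in \tilde{U[b\vert_S]_{(y,y_0)}}$ and showing that the relevant translated reflected umbra already lies in $S\times L$ so that intersecting with $U[f]$ or with $U[f\vert_S]$ gives the same thing. Your route instead invokes the duality $f\bullet k=-((-f)\circ\breve{k})$ from \eqref{prop:AltGrayClosing}, the bookkeeping identities $(-g)\vert_S=-(g\vert_S)$ and $\breve{b\vert_S}=\breve{b}\vert_S$ (the latter using $S=\breve{S}$), and then feeds $-f$ and $\breve{b}$ into Proposition~\ref{thm:pb16a}. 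This is cleaner and avoids replaying the umbra manipulations; the price is that it leans on the grey-value duality, which in this paper's finite-domain setting is stated but not fully justified at the level of domains, whereas the paper's direct argument is self-contained in that respect. Both are valid; yours is the more economical.
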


\begin{proof}
Let $x\in (F\cap S)\bullet (B\cap S) = (F\bullet (B\cap S))\cap S$ (by Proposition \ref{thm:pb17}).\\
Clearly, since closing is increasing and $f\vert _S \leq f$, we have,$(f\vert_S \bullet b\vert_S)(x) \leq \: (f\bullet b\vert_S)(x)  =\: ((f\bullet b\vert_S)\vert_S) (x) $. \\
Therefore, $(f\vert_S \bullet b\vert_S) \leq \:  ((f\bullet b\vert_S)\vert_S) $.\\
\\
We show that if $x\in  (F\cap S)\bullet (B\cap S)$ $= (F\bullet (B\cap S))\cap S$,  $\Rightarrow$ $(x,((f\bullet b\vert_S)\vert_S)(x)) =\: (x,(f\bullet b\vert_S)(x))$ $\in U[f\vert_S \bullet b\vert_S]$.\\
\\ 
$(x,(f\bullet b\vert_S)(x)) \in U[(f\bullet b\vert_S)\vert_S] = \: U[(f\bullet b\vert_S)] \cap (S\times L)$\\
$\Rightarrow\:  \exists (y,y_0) \:, y_0\geq 0$ such that $(x,(f\bullet b\vert_S)(x)) \in \tilde{U[b\vert_S]_{(y,y_0)}}$ and $\tilde{U[b\vert_S]_{(y,y_0)}} \cap U[f] \neq \emptyset$\\
$\Rightarrow x\in \breve{(B\cap S)_y}$ and $ \breve{(B\cap S)_y}\cap F \neq \emptyset$\\
$\Rightarrow y\in S$ and $ \breve{(B\cap S)_y} \cap F =  \breve{(B\cap S)_y}\cap (F\cap S)$\\
$\Rightarrow \tilde{U[b\vert_S]_{(y,y_0)}} \cap (S\times L)  =  \tilde{U[b\vert_S]_{(y,y_0)}} $.\\
Therefore, $(x,((f\bullet b\vert_S)\vert_S)(x)) =\: (x,(f\bullet b\vert_S)(x))  \in$\\  $\tilde{U[b\vert_S]_{(y,y_0)}} \cap (S\times L) =  \tilde{U[b\vert_S]_{(y,y_0)}} $ and \\
$\tilde{U[b\vert_S]_{(y,y_0)}} \cap (S\times L) \cap U[f]=\: \tilde{U[b\vert_S]_{(y,y_0)}} \cap U[f\vert_S] \neq \emptyset $\\
 $\Rightarrow \: (x,((f\bullet b\vert_S)\vert_S)(x))\in U[f\vert_S\bullet b\vert_S]$\\
 $\Rightarrow \:(f\bullet b\vert_S)\vert_S \leq \: f\vert_S\bullet b\vert_S$,\\
 \\
 $\therefore (f\bullet b\vert_S)\vert_S =f\vert_S\bullet b\vert_S$.
\end{proof}

The next theorem is another major result of this section. The next theorem bounds opening and closing in sampled domain, by sampling after opening or closing. 

\begin{theorem} {Grey-value Sample Opening and Closing Bounds Theorem}
\label{thm:b4a}
Let $B=B\circ K$ and $b=b\circ k$, then 
\begin{enumerate}[I.]
\item $(f\circ [b\vert_S\oplus k])\vert_S \leq \: f\vert_S\circ b\vert_S \leq \: ((f\vert_S\oplus k)\circ b)\vert_S$
\item $((f\vert_S \bullet k)\bullet b)\vert_S \leq \:f\vert_S \bullet b\vert_S \leq \: (f\bullet (b\vert_S \oplus k))\vert_S$
\end{enumerate}
\end{theorem}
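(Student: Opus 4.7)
The plan is to split the four inequalities into two kinds: those that follow almost mechanically by chaining the grey-value Sample Dilation/Erosion Theorems with Proposition \ref{thm:pb14a}, and those that require direct umbra manipulation via the alternative opening/closing definitions of \eqref{thm:p71} and \eqref{prop:AltGrayClosing}.

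For the opening upper bound $f\vert_S \circ b\vert_S \leq ((f\vert_S \oplus k) \circ b)\vert_S$, I would rewrite $f\vert_S \circ b\vert_S = (f\vert_S \ominus b\vert_S) \oplus b\vert_S$, apply the Grey-value Sample Erosion Theorem \ref{thm:3ba} to replace $f\vert_S \ominus b\vert_S$ by $((f\vert_S \oplus k) \ominus b)\vert_S$, and then invoke Proposition \ref{thm:pb14a}(I) with the intermediate image $g = (f\vert_S \oplus k) \ominus b$ to conclude $g\vert_S \oplus b\vert_S \leq (g \oplus b)\vert_S = ((f\vert_S \oplus k) \circ b)\vert_S$. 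The closing lower bound $((f\vert_S \bullet k) \bullet b)\vert_S \leq f\vert_S \bullet b\vert_S$ follows by a symmetric argument: write $f\vert_S \bullet b\vert_S = (f\vert_S \oplus b\vert_S) \ominus b\vert_S$, use the Grey-value Sample Dilation Theorem \ref{thm:2ba}, and apply Proposition \ref{thm:pb14a}(II) to the resulting intermediate grey-value image.

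For the opening lower bound $(f \circ [b\vert_S \oplus k])\vert_S \leq f\vert_S \circ b\vert_S$, I would work on umbras via the alternative definition \eqref{thm:p71}. Given $(\xi, \eta) \in U[(f \circ [b\vert_S \oplus k])\vert_S]$, so in particular $\xi \in S$, pick $(x_0, y_0)$ with $y_0 \geq 0$ such that $U[b\vert_S \oplus k]_{(x_0, y_0)} \subseteq U[f]$ and $(\xi, \eta) \in U[b\vert_S \oplus k]_{(x_0, y_0)}$. The Umbra Homomorphism Theorem \ref{thm:58} permits a decomposition $(\xi, \eta) = (s, t) + (k', l') + (x_0, y_0)$ with $(s, t) \in U[b\vert_S]$ and $(k', l') \in U[k]$. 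Setting $(x_0', y_0') := (x_0 + k', y_0 + l')$ gives $y_0' \geq 0$; and from the sampling conditions $S = \breve{S}$ and $S \oplus S = S$, together with $\xi, s \in S$, the new base point $x_0' = \xi - s$ lies in $S$. Consequently $U[b\vert_S]_{(x_0', y_0')} \subseteq U[b\vert_S \oplus k]_{(x_0, y_0)} \cap (S \times L) \subseteq U[f\vert_S]$, and the alternative definition \eqref{thm:p71} then places $(\xi, \eta)$ in $U[f\vert_S \circ b\vert_S]$. The closing upper bound $f\vert_S \bullet b\vert_S \leq (f \bullet [b\vert_S \oplus k])\vert_S$ follows by duality applied to $-f$ and $\breve{b}$: using $\breve{S} = S$, $\breve{k} = k$, the identity $f \bullet c = -((-f) \circ \breve{c})$, and the commutation of negation with restriction to $S$, the opening lower bound translates directly into the desired closing upper bound.

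The main obstacle is the opening lower bound. The two ``erosion-then-dilation'' rewritings are essentially chains of previously proved identities and inequalities, but tracing the opening lower bound requires explicit bookkeeping of an arbitrary point $(\xi, \eta)$ through the umbra, the alternative opening formula, and the Umbra Homomorphism decomposition of $U[b\vert_S \oplus k]$. The critical step is exploiting the sampling conditions on $S$ to force the relocated translation parameter $x_0' = \xi - s$ into $S$, which is what ultimately confines the shifted umbra $U[b\vert_S]_{(x_0', y_0')}$ inside $U[f\vert_S]$ rather than merely inside $U[f]$.
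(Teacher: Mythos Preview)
Your proof is correct, but it takes a genuinely different route from the paper on three of the four inequalities.

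For the opening upper bound and the closing lower bound, your chain ``Sample Erosion/Dilation Theorem followed by Proposition~\ref{thm:pb14a}'' is considerably more economical than the paper's arguments. The paper proves the opening upper bound by a direct umbra computation (picking $(x,(f\vert_S\circ b\vert_S)(x))$, locating a translate $U[b\vert_S]_{(y,y_0)}\subseteq U[f\vert_S]$, and then showing by a pointwise estimate that $U[b\vert_S\oplus k]_{(y,y_0)}\subseteq U[f\vert_S\oplus k]$). For the closing lower bound the paper sets $r=f\vert_S\bullet k$, first establishes the auxiliary identity $(r\oplus b)\circ k=r\oplus b$, invokes the Sampling Theorem on $r\oplus b$, and only then applies Theorems~\ref{thm:3ba} and~\ref{thm:2ba}. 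Your two-line reductions bypass both detours entirely.

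For the opening lower bound the situation is reversed: the paper observes that $U[b\vert_S]\subseteq U[b\vert_S\oplus k]$ forces $U[f\circ(b\vert_S\oplus k)]\subseteq U[f\circ b\vert_S]$ and then simply restricts to $S$ via Proposition~\ref{thm:pb16a}, whereas your explicit umbra chase (decomposing $(\xi,\eta)$ through $U[b\vert_S]\oplus U[k]$ and relocating the translation parameter into $S$) is more laborious, though perfectly valid. For the closing upper bound the paper argues directly with reflected umbras and Proposition~\ref{thm:pb17a}; your appeal to the duality $f\bullet c=-\bigl((-f)\circ\breve{c}\bigr)$ together with $\breve{k}=k$, $\breve{S}=S$ is a legitimate and cleaner alternative, provided you note that $\breve{b}=\breve{b}\circ k$ and $\breve{B}=\breve{B}\circ K$ follow from $k=\breve{k}$.
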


\begin{proof}
\textit{I}\\
It is shown in Theorem \ref{thm:b4} that $[F\circ (B\cap S)] \supseteq \: [F\circ ((B\cap S)\oplus K)]$.\\
Using Umbra Homomorphism Theorem \ref{thm:58}, we have $U[f\circ b\vert_S] =U[f]\circ U[b\vert_S] \supseteq U[f]\circ U[b\vert_S \oplus k] =U[f\circ (b\vert_S \oplus k)]$.\\
Using Proposition \ref{thm:pb16a}, $U[f\vert_S \circ b\vert_S]  \: = \: U[(f\circ b\vert_S)\vert_S] = \: U[f\circ b\vert_S] \cap (S\times L) \supseteq  U[f\circ (b\vert_S \oplus k)]\cap (S\times L) = \: U[(f\circ (b\vert_S \oplus k))\vert_S]$\\
$\Rightarrow (f\circ [b\vert_S\oplus k])\vert_S \leq \: f\vert_S\circ b\vert_S $.\\
\\

Let $x\in (F\cap S)\circ (B\cap S)$.\\
We show that  $(x, (f\vert_S \circ b\vert_S)(x)) \in U[((f\vert_S \oplus k)\circ b)\vert_S]$.\\
$U[f\vert_S \circ b\vert_S] = U[f\vert_S]\circ U[b\vert_S]$\\
$(x,(f\vert_S\circ b\vert_S)(x)) \in U[f\vert_S \circ b\vert_S] \Rightarrow$
$\exists (y,y_0)\in E^N \times L$, $ y_0 \geq 0$,
such that \\
 $(x,(f\vert_S\circ b\vert_S)(x)) \in U[b\vert_S]_{(y,y_0)} \subseteq U[f\vert_S]$.\\
We have, $U[b\vert_S] =\: U[b]\cap (S\times L) \subseteq U[b]$ $\Rightarrow \: (x,(f\vert_S\circ b\vert_S)(x)) \in U[b]$. \\
Since, $b=b\circ k$, $b\vert_S \oplus k \geq b$. Therefore, \\
$U[b\vert_S\oplus k]_{(y,y_0)} \supseteq U[b]_{(y,y_0)}$.\\
We show that $U[b\vert_S\oplus k]_{(y,y_0)}  \subseteq U[f\vert_S \oplus k]$.\\
For any $u\in [(B\cap S)_y\oplus K]$, $(b\vert_S \oplus k)(u-y) +y_0 =$
  \begin{align*}
 &\: \max _{u-s\in K, s\in (B\cap S)_y} \{b\vert_S(s-y) + k(u-s)\} +y_0\\
 &=\: (b\vert_S)(s_0 -y) +k(u-s_0) +y_0 \\
 &\text{\textcolor{white}{ABCDEF} for some } s_0 \in (B\cap S)_y\\
 &\leq \: (f\vert_S)(s_0) +k(u-s_0) \\
 & \text{\textcolor{white}{ABCDEF}}  \because U[b\vert_S]_{(y,y_0)} \subseteq U[f\vert_S]\\
 & \leq  \: (f\vert_S \oplus k)(u)\\
\end{align*} 
$\Rightarrow U[b\vert_S\oplus k]_{(y,y_0)}  \subseteq U[f\vert_S \oplus k]$ \\
$\Rightarrow (x,(f\vert_S\circ b\vert_S)(x)) \in \: U[b]_{(y,y_0)} \subseteq U[f\vert_S \oplus k]$\\
$\Rightarrow  (x,(f\vert_S\circ b\vert_S)(x))  \in \: U[(f\vert_S \oplus k)\circ b]$.\\
Since, $x\in S$, $(x,(f\vert_S\circ b\vert_S)(x))  \in \: U[(f\vert_S \oplus k)\circ b] \cap (S\times L) =U[((f\vert_S \oplus k)\circ b)\vert_S]$\\
$\Rightarrow f\vert_S \circ b\vert_S \leq \: ((f\vert_S \oplus k)\circ b)\vert_S$.\\
\\
\textit{II} \\
We know, by Theorem \ref{thm:b4} that, $\{ [(F \cap S)\bullet K]\bullet B\} \cap S \subseteq \: (F\cap S)\bullet (B\cap S) \subseteq \: \{F\bullet [(B\cap S)\oplus K] \} \cap S$.\\
By Proposition \ref{thm:pb17a}, we have $f\vert_S \bullet b\vert_S = (f\bullet b\vert_S)\vert_S$.\\
And, from Umbra Homomorphism Theorem \ref{thm:58} and Proposition \ref{thm:refUm}, we have  $(f\bullet b\vert_S)\vert_S \leq (f\bullet (b\vert_S \oplus k))\vert_S$,
i.e.\  if $x \in (F\bullet (B\cap S))\cap S$, then $\exists (y,y_0)\in E^N\times L$,  $y_0 \geq 0$ such that $(x, (f\bullet b\vert_S)(x)) \in \tilde{U[b\vert_S]_{(y,y_0)}}$ and  $ \tilde{U[b\vert_S]_{(y,y_0)}} \cap U[f] \neq \emptyset $\\
 $\tilde{U[b\vert_S]_{(y,y_0)} } \subseteq \tilde{U[b\vert_S \oplus k]_{(y,y_0)}}$ $\Rightarrow  (x, (f\bullet b\vert_S)(x)) \in  \tilde{U[b\vert_S \oplus k]_{(y,y_0)}}$ and $ \tilde{U[b\vert_S \oplus k]_{(y,y_0)}} \cap U[f] \neq \emptyset$\\
 $\Rightarrow  (x, (f\bullet b\vert_S)(x)) \in U[(f\bullet (b\vert_S\oplus k))\vert_S]$ ($\because x\in S$),\\
 i.e.\ $f\vert_S \bullet b\vert_S = \: (f\bullet b\vert_S)\vert_S \leq \: (f\bullet (b\vert_S \oplus k))\vert_S$.\\
 \\
 \\
 
We know from Theorem \ref{thm:b4}, that under given conditions, $\{[(F\cap S)\bullet K]\bullet B\}\cap S \subseteq (F\cap S)\bullet (B\cap S)$.\\
Let $r=\: (f\vert_S \bullet k)$.\\
We, first prove $(r\oplus b)\circ k=r\oplus b$. \\
Opening of Grey-value image is anti-extensive (\textit{Proposition 67 } of \cite{r1}). Therefore, $(r\oplus b)\circ k \leq\:  r\oplus b$.\\
We show  $r\oplus b \leq (r\oplus b)\circ k$, i.e.\ $U[r\oplus b] \subseteq U[(r\oplus b)\circ k]$.\\

\begin{align*}
& U[(r\oplus b)\circ k] \\
 & = \bigcup _{\{(y,y_0)\in E^N\times L \vert U[k]_{(y,y_0)} \subseteq \bigcup _{(a,b)\in U[r] } U[b]_{(a,b)}\} } U[k]_{(y,y_0)}\\
& \supseteq\: \bigcup _{(a,b)\in U[r]} \{ \bigcup _{\{ (y,y_0)\in E^N \times L \vert U[k]_{(y,y_0)} \subseteq U[b]_{(a,b)}  \}} U[k]_{(y,y_0)}  \} \\
&=  \bigcup _{(a,b)\in U[r]} U[b]_{(a,b)} \\
& = U[b\oplus r]\\
& \text{i.e.\  } r\oplus b  = (r\oplus b)\circ k\\
\end{align*}

By Grey-value Sampling Theorem \ref{thm:b6}, we have $r\oplus b \leq \: ((r\oplus b)\vert_S \oplus k)$.\\
$(r\bullet  b)\vert_S =\: ((r\oplus b)\ominus b)\vert_S \leq (((r\oplus b)\vert_S\oplus k)\ominus b)\vert_S$.\\
By Theorem \ref{thm:3ba}, ($\because b=b\circ k$), we have , $(((r\oplus b)\vert_S \oplus k)\ominus b)\vert_S =\: (r\oplus b)\vert_S \ominus b\vert_S$.\\
By Theorem \ref{thm:2ba}, ($\because r=(f\vert_S \bullet k)$), we have   $(r\oplus b)\vert_S \ominus b\vert_S =\: (f\vert_S \oplus b\vert_S)\ominus b\vert_S =\: f\vert_S \bullet b\vert_S$\\
$\Rightarrow \: (r\bullet b)\vert_S\leq f\vert_S \bullet b\vert_S$\\
i.e.\ $((f\vert_S\bullet k)\bullet b)\vert_S \leq f\vert_S \bullet b\vert_S $. 
\end{proof}

\begin{figure*}[!htbp]
\minipage{0.24\linewidth}
 \includegraphics[width=\linewidth]{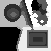}
  \caption{$(f\circ [b\vert_S\oplus k])\vert_S $}\label{fig:grey411}
\endminipage\hfill
\minipage{0.24\linewidth}
  \includegraphics[width=\linewidth]{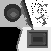}
  \caption{$ \:f\vert_S\circ b\vert_S$}\label{fig:grey412}
\endminipage\hfill
\minipage{0.24\linewidth}
 \includegraphics[width=\linewidth]{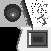}
  \caption{$((f\vert_S\oplus k)\circ b)\vert_S$}\label{fig:grey413}
\endminipage\hfill
\end{figure*}

\begin{figure*}[!htbp]
\minipage{0.24\linewidth}
 \includegraphics[width=\linewidth]{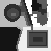}
  \caption{$(f\circ [b_2 \vert_S\oplus k_2 ])\vert_S $}\label{fig:nonflatgrey411}
\endminipage\hfill
\minipage{0.24\linewidth}
  \includegraphics[width=\linewidth]{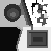}
  \caption{$ \:f\vert_S\circ b_2 \vert_S$}\label{fig:nonflatgrey412}
\endminipage\hfill
\minipage{0.24\linewidth}
 \includegraphics[width=\linewidth]{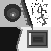}
  \caption{$((f\vert_S\oplus k_2 )\circ b)\vert_S$}\label{fig:nonflatgrey413}
\endminipage\hfill
\end{figure*}

\begin{figure*}[h!]
\minipage{0.24\linewidth}
 \includegraphics[width=\linewidth]{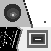}
  \caption{$((f\vert_S \bullet k)\bullet b)\vert_S  $}\label{fig:grey421}
\endminipage\hfill
\minipage{0.24\linewidth}
  \includegraphics[width=\linewidth]{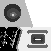}
  \caption{$ f\vert_S \bullet b\vert_S $}\label{fig:grey422}
\endminipage\hfill
\minipage{0.24\linewidth}
 \includegraphics[width=\linewidth]{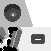}
  \caption{$ (f\bullet (b\vert_S \oplus k))\vert_S$}\label{fig:grey423}
\endminipage\hfill
\end{figure*}

\begin{figure*}[h!]
\minipage{0.24\linewidth}
 \includegraphics[width=\linewidth]{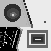}
  \caption{$((f\vert_S \bullet k_2)\bullet b_2)\vert_S $}\label{fig:nonflatgrey421}
\endminipage\hfill
\minipage{0.24\linewidth}
  \includegraphics[width=\linewidth]{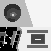}
  \caption{$f\vert_S \bullet b_2 \vert_S $}\label{fig:nonflatgrey422}
\endminipage\hfill
\minipage{0.24\linewidth}
 \includegraphics[width=\linewidth]{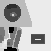}
  \caption{$(f\bullet (b_2 \vert_S \oplus k_2 ))\vert_S$}\label{fig:nonflatgrey423}
\endminipage\hfill
\end{figure*}

Figures \ref{fig:grey411}-\ref{fig:grey413} demonstrate interaction of sampling with opening operation.  Figures \ref{fig:grey421}-\ref{fig:grey423} demonstrate interaction of sampling with closing operation.

\begin{figure*}[!htbp]
\centering

\tikzset{every picture/.style={line width=0.75pt}} 

\begin{tikzpicture}[x=0.75pt,y=0.75pt,yscale=-1,xscale=1]

\draw  [draw opacity=0] (51,18) -- (111.5,18) -- (111.5,78.48) -- (51,78.48) -- cycle ; \draw   (51,18) -- (51,78.48)(71,18) -- (71,78.48)(91,18) -- (91,78.48)(111,18) -- (111,78.48) ; \draw   (51,18) -- (111.5,18)(51,38) -- (111.5,38)(51,58) -- (111.5,58)(51,78) -- (111.5,78) ; \draw    ;
\draw  [draw opacity=0] (161,2) -- (261.5,2) -- (261.5,102.48) -- (161,102.48) -- cycle ; \draw   (161,2) -- (161,102.48)(181,2) -- (181,102.48)(201,2) -- (201,102.48)(221,2) -- (221,102.48)(241,2) -- (241,102.48)(261,2) -- (261,102.48) ; \draw   (161,2) -- (261.5,2)(161,22) -- (261.5,22)(161,42) -- (261.5,42)(161,62) -- (261.5,62)(161,82) -- (261.5,82)(161,102) -- (261.5,102) ; \draw    ;
\draw   (71.5,48.23) .. controls (71.5,42.85) and (75.86,38.48) .. (81.24,38.48) .. controls (86.62,38.48) and (90.98,42.85) .. (90.98,48.23) .. controls (90.98,53.61) and (86.62,57.97) .. (81.24,57.97) .. controls (75.86,57.97) and (71.5,53.61) .. (71.5,48.23) -- cycle ;
\draw   (201.5,52.47) .. controls (201.5,47.22) and (205.75,42.97) .. (211,42.97) .. controls (216.25,42.97) and (220.5,47.22) .. (220.5,52.47) .. controls (220.5,57.72) and (216.25,61.97) .. (211,61.97) .. controls (205.75,61.97) and (201.5,57.72) .. (201.5,52.47) -- cycle ;
\draw  [draw opacity=0] (306,33.48) -- (326.5,33.48) -- (326.5,54.48) -- (306,54.48) -- cycle ; \draw   (306,33.48) -- (306,54.48)(326,33.48) -- (326,54.48) ; \draw   (306,33.48) -- (326.5,33.48)(306,53.48) -- (326.5,53.48) ; \draw    ;
\draw  [draw opacity=0] (306,3.48) -- (326.5,3.48) -- (326.5,24.48) -- (306,24.48) -- cycle ; \draw   (306,3.48) -- (306,24.48)(326,3.48) -- (326,24.48) ; \draw   (306,3.48) -- (326.5,3.48)(306,23.48) -- (326.5,23.48) ; \draw    ;
\draw   (306.5,13.47) .. controls (306.5,8.22) and (310.75,3.97) .. (316,3.97) .. controls (321.25,3.97) and (325.5,8.22) .. (325.5,13.47) .. controls (325.5,18.72) and (321.25,22.97) .. (316,22.97) .. controls (310.75,22.97) and (306.5,18.72) .. (306.5,13.47) -- cycle ;
\draw  [draw opacity=0] (306,64.48) -- (326.5,64.48) -- (326.5,85.48) -- (306,85.48) -- cycle ; \draw   (306,64.48) -- (306,85.48)(326,64.48) -- (326,85.48) ; \draw   (306,64.48) -- (326.5,64.48)(306,84.48) -- (326.5,84.48) ; \draw    ;

\draw (82,91.48) node  [align=left] {$\displaystyle k_{2} :K\rightarrow \mathbb{Z}$};
\draw (210,113.48) node  [align=left] {$\displaystyle c_2 :C=C\cap S \rightarrow \mathbb{Z}$};
\draw (352,15.48) node  [align=left] {: Origin};
\draw (447,42.48) node  [align=left] {: Value of function is $\displaystyle x$ at the position \ };
\draw (415,73.48) node  [align=left] {: Position not in the domain};
\draw (315,42.48) node  [align=left] {$\displaystyle x$};
\draw (81,49.48) node  [align=left] {0};
\draw (101,68.48) node  [align=left] {10};
\draw (61,29.48) node  [align=left] {10};
\draw (81,29.48) node  [align=left] {10};
\draw (101,29.48) node  [align=left] {10};
\draw (61,49.48) node  [align=left] {10};
\draw (81,69.48) node  [align=left] {10};
\draw (61,68.48) node  [align=left] {10};
\draw (101,49.48) node  [align=left] {10};
\draw (171,33.48) node  [align=left] { };
\draw (191,53.48) node  [align=left] { };
\draw (211,73.48) node  [align=left] { };
\draw (231,93.48) node  [align=left] { };
\draw (171,53.48) node  [align=left] {10};
\draw (191,73.48) node  [align=left] { };
\draw (211,93.48) node  [align=left] {10};
\draw (191,93.48) node  [align=left] { };
\draw (231,73.48) node  [align=left] { };
\draw (191,13.48) node  [align=left] { };
\draw (211,33.48) node  [align=left] { };
\draw (231,53.48) node  [align=left] { };
\draw (251,73.48) node  [align=left] { };
\draw (171,73.48) node  [align=left] { };
\draw (191,33.48) node  [align=left] { };
\draw (211,13.48) node  [align=left] {10};
\draw (231,33.48) node  [align=left] { };
\draw (251,53.48) node  [align=left] {10};
\draw (231,13.48) node  [align=left] { };
\draw (251,33.48) node  [align=left] { };
\draw (211,53.48) node  [align=left] {10};
\draw (171,12.92) node  [align=left] {10};
\draw (250,12.92) node  [align=left] {10};
\draw (251,91.92) node  [align=left] {10};
\draw (172,91.92) node  [align=left] {10};

\end{tikzpicture}

\caption{The non-flat structuring elements $k_2$ and $c_2$ used in examples.}
\label{fig:kc-nonflat}
\end{figure*}

We observe that opening in sampled domain (here, $f\vert_S\circ b\vert_S $) is bounded above by sampling after opening of maximal reconstruction of the image (here, $((f\vert_S\oplus k)\circ b)\vert_S $) and bounded below by sampling after opening by maximal reconstruction of the SE (here, $f\circ [b\vert_S\oplus k])\vert_S $). Similarly, closing in sampled domain (here, $f\vert_S \bullet b\vert_S $) is bounded above by sampling after closing with maximal reconstruction of SE (here, $(f\bullet (b\vert_S \oplus k))\vert_S $) and bounded below by sampling after closing the minimal reconstruction (here, $((f\vert_S \bullet k)\bullet b)\vert_S$). 

In similar fashion, Figures \ref{fig:nonflatgrey411}-\ref{fig:nonflatgrey413} demonstrate the interaction of sampling with opening operation for non-flat structuring elements, and Figures  \ref{fig:nonflatgrey421}-\ref{fig:nonflatgrey423} demonstrate the interaction of sampling with closing operation for non-flat SEs.

If the image coincides with its maximal or minimal reconstruction, then it satisfies some additional properties with respect to sampling and opening respectively closing. 
These are mentioned in Theorem \ref{thm:b5a}, which  directly follows Theorem \ref{thm:b4a}.

\begin{theorem}{Grey-value Sample Opening and Closing Theorem }
\label{thm:b5a}
If $B=B\circ K$, $b=b\circ k$, then
\begin{enumerate}[I.]
\item If $F=\: (F\cap S)\oplus K $, $f=f\vert_S \oplus k$ , $B=(B\cap S) \oplus K$ and $b=b\vert_S \oplus k$, then $f\vert_S \circ b\vert_S  =\: (f\circ b)\vert_S$.   
\item If  $F=(F\cap S)\bullet K$ , $f=f\vert_S\bullet k$, $B=(B\cap S) \oplus K$ and $b=b\vert_S \oplus k$, then $f\vert_S \bullet b\vert_S =\: (f\bullet b)\vert_S$.
\end{enumerate}
\end{theorem}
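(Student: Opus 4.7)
The plan is to derive both parts as immediate consequences of the Grey-value Sample Opening and Closing Bounds Theorem~\ref{thm:b4a}. The hypothesized fixed-point identities for $f$ and $b$ are tailored precisely so that, when substituted into the two-sided bounds from Theorem~\ref{thm:b4a}, the lower and upper bounds collapse to the same quantity, squeezing the sandwich shut.

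For part I, I would invoke Theorem~\ref{thm:b4a}\,I under the already-assumed prerequisites $B = B\circ K$ and $b = b\circ k$, which gives
\begin{equation*}
(f \circ [b\vert_S \oplus k])\vert_S \;\leq\; f\vert_S \circ b\vert_S \;\leq\; ((f\vert_S \oplus k) \circ b)\vert_S .
\end{equation*}
Using $b = b\vert_S \oplus k$ in the lower bound and $f = f\vert_S \oplus k$ in the upper bound, both outer sides simplify verbatim to $(f \circ b)\vert_S$, forcing $f\vert_S \circ b\vert_S = (f \circ b)\vert_S$.

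For part II, I would apply Theorem~\ref{thm:b4a}\,II in the same fashion to obtain
\begin{equation*}
((f\vert_S \bullet k) \bullet b)\vert_S \;\leq\; f\vert_S \bullet b\vert_S \;\leq\; (f \bullet (b\vert_S \oplus k))\vert_S .
\end{equation*}
Substituting $f = f\vert_S \bullet k$ on the left and $b = b\vert_S \oplus k$ on the right collapses both outer bounds to $(f \bullet b)\vert_S$, yielding the claimed equality.

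Given that Theorem~\ref{thm:b4a} is already in hand, there is no real obstacle here: the argument is purely a substitution that exploits the specific form of the hypotheses, and one can see that the extra reconstruction hypotheses $F = (F\cap S)\oplus K$ and $B = (B\cap S)\oplus K$ serve only to guarantee that the resulting equalities are meaningful on the intended domains. Since in each case the inner argument of the sampling restriction $(\cdot)\vert_S$ becomes the common expression $f \circ b$ (respectively $f \bullet b$) by direct textual substitution, the domains automatically agree and no further work is required.
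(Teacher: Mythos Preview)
Your approach is correct and is exactly what the paper does: it states that Theorem~\ref{thm:b5a} ``directly follows Theorem~\ref{thm:b4a}'' without further detail, and you have spelled out precisely the intended squeeze argument by substituting the hypotheses $f=f\vert_S\oplus k$, $b=b\vert_S\oplus k$ (for part~I) and $f=f\vert_S\bullet k$, $b=b\vert_S\oplus k$ (for part~II) into the two-sided bounds.
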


\section{Max-pooling and Reconstruction with Non-flat SEs}
\label{Sec:HeijExt}

The max-pooling operation introduced in \cite{ZC-88} is often used in CNNs \cite{Goodfellow-et-al-2016}. Max-pooling is a morphological operation, 
more precisely, it is morphological dilation by a square or rectangular flat filter followed by sampling \cite{Franchi-2020}.

The \textit{Sampling Operator}, $\sigma (.)$, as defined below, generalizes max-pooling to sampling after dilating with a \textit{paraboloid} filter $k$. In this section, we study about generalized max-pooling (i.e.\ $\sigma (.)$), its corresponding reconstruction and the effects of morphologically operating after max-pooling.

We assume that the sampling sieve $S\subseteq E^{N} $ and  the filter  $k:K\rightarrow L$ satisfies conditions \textit{I-VII} of Grey-value Digital Morphological Sampling Theorem \ref{thm:b6}. We extend the definitions and results of \cite{r3} to non-flat structuring element. That is, we \emph{do not} impose the condition $k(u)=0$, $\forall u \in K$ or $c(x)=0$, $\forall x\in C=C\cap S$.  In this section, we have used the non-flat SEs $k_2$ and $c_2$, as described in Figure \ref{fig:kc-nonflat}, for the examples.

\begin{definition} {Sampling Operator}
\label{def:hb1}
Let  $F\subseteq E^{N}$ and  $f:F\rightarrow L$ be the image. The structuring element  $k:K\rightarrow L$ and sieve $S$ are defined as above. The sampling operator is denoted by $\sigma (.)$ and is defined as \\ $(\sigma (f) )= (f\oplus k)\vert_S$, that is $(\sigma (f) )(s)= (f\oplus k)\vert_S (s)$, $\forall s\in (F\oplus K)\cap S$. 
\end{definition}

Similarly, reconstructing operator is defined. 

\begin{definition} {Reconstructing Operator}
\label{def:hb2}
Let  $G\subseteq S$ and  $g:G\rightarrow L$ be the sampled image. The structuring element  $k:K\rightarrow L$ and sieve $S$ are defined as above. The reconstructing operator is denoted by $\dot \sigma (.)$ and is defined as\\ $\dot \sigma (g) = g\bullet k$, that is, $(\dot \sigma (g))(x) = (g\bullet k)(x)$, $\forall x\in G\bullet K$. 
\end{definition}

Notice that \textit{Reconstructing Operator} uses morphological closing for reconstruction, i.e.\ minimal reconstruction, as described via Grey-value Digital Morphological Sampling Theorem \ref{thm:b6}. The choice of closing for reconstruction allows the \textit{Reconstructing Operator} to form an \emph{algebraic adjunction} with the \textit{Sampling Operator}. Here, $(\alpha (.),\beta(.) )$ is an algebraic adjunction if $\beta (f) \leq g$ iff $f \leq \alpha(g)$. 

We show that $(\dot \sigma , \sigma)$ forms an adjunction with the non-flat SE as well.

\begin{prop}
\label{thm:p1}
Let  $f:F(\subseteq E^{N})\rightarrow L$ be an image in the unsampled domain and  $g:G(\subseteq S)\rightarrow L$ be an image in sampled domain. \\Then, $\sigma(f) \leq g$ $\Leftrightarrow f\leq \dot \sigma(g)$.\\
i.e $(f\oplus k)\vert_S \leq g$ $\Leftrightarrow f\leq g\bullet k$. 
\end{prop}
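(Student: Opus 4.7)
The plan is to prove each direction of the adjunction separately, treating the two implications by distinct tools.

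For $(\Leftarrow)$, assume $f\leq g\bullet k=(g\oplus k)\ominus k$. Dilating both sides by $k$ (monotonicity of dilation) and then applying anti-extensivity of opening gives
\[
f\oplus k \;\leq\; ((g\oplus k)\ominus k)\oplus k \;=\; (g\oplus k)\circ k \;\leq\; g\oplus k.
\]
Restricting to $S$ preserves the inequality, so it suffices to show that $(g\oplus k)\vert_S=g$. Since $G\subseteq S$ we have $g=g\vert_S$, and part \textit{II} of Theorem \ref{thm:b6} applied to $g$ yields $(g\vert_S\oplus k)\vert_S = g\vert_S$, whence $(g\oplus k)\vert_S=g$ and $(f\oplus k)\vert_S\leq g$.

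For $(\Rightarrow)$, assume $(f\oplus k)\vert_S\leq g$. Fix $y\in F$ and $u\in K$; by Proposition \ref{thm:prop52} it suffices to prove $(g\oplus k)(y+u)\geq f(y)+k(u)$, since then
\[
(g\bullet k)(y)=\min_{u\in K}\{(g\oplus k)(y+u)-k(u)\}\geq f(y).
\]
Apply sampling condition \ref{cond:g4} of Theorem \ref{thm:b6} with $a=y+u$ and $b=y$ (so $a-b=u\in K$) to obtain $s\in K_{y+u}\cap K_y\cap S$. Set $w=s-y$ and $v=u-w$; using $K=\breve{K}$ one checks $w,v\in K$, $y+u-v=y+w=s\in S$, and $s=y+w\in F\oplus K$, so $s\in (F\oplus K)\cap S\subseteq G$. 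The hypothesis then gives $g(s)\geq (f\oplus k)(s)\geq f(y)+k(w)$, and subadditivity (condition \ref{cond:g6}) applied to $u$ and $v$ gives $k(u)\leq k(u-v)+k(v)=k(w)+k(v)$. Combining,
\[
(g\oplus k)(y+u)\;\geq\;g(s)+k(v)\;\geq\;f(y)+k(w)+k(v)\;\geq\;f(y)+k(u).
\]
The same decomposition $y+u=s+v$ with $v\in K$ and $s\in G$ shows $y+u\in G\oplus K$ for every $u\in K$, which discharges the domain inclusion $F\subseteq(G\oplus K)\ominus K$ required for the relation $f\leq g\bullet k$.

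I expect the main obstacle to be the $(\Rightarrow)$ direction, specifically the coordinated use of condition \ref{cond:g4} to extract a splitting $u=w+v$ with $w,v\in K$ and $y+w\in S$, together with subadditivity of $k$ to transport the value bound across that splitting. Neither ingredient alone suffices: condition \ref{cond:g4} places a sampled point near $y+u$ but introduces a discrepancy $k(w)+k(v)-k(u)$ that only condition \ref{cond:g6} controls, and $k=\breve{k}$ is needed to ensure $v=u-w$ lies in $K$. The remaining content is careful domain bookkeeping, which follows from the same decomposition and the identity $((F\cap S)\oplus K)\cap S=F\cap S$ implied by the sampling conditions.
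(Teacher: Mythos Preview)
Your proof is correct. The $(\Leftarrow)$ direction matches the paper's argument essentially verbatim: dilate by $k$, use $(g\oplus k)\circ k\leq g\oplus k$, restrict to $S$, and invoke part~\textit{II} of Theorem~\ref{thm:b6} on $g=g\vert_S$.

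The $(\Rightarrow)$ direction, however, takes a genuinely different route. The paper argues at the level of whole images: from $(f\oplus k)\vert_S\leq g$ it dilates by $k$, then appeals to the reconstruction bound of the Sampling Theorem (Theorem~\ref{thm:b6}) to obtain $f\oplus k\leq (f\oplus k)\vert_S\oplus k\leq g\oplus k$, and finally uses the erosion--dilation adjunction (Proposition~65 of \cite{r1}) to conclude $f\leq (g\oplus k)\ominus k=g\bullet k$. You instead give a direct pointwise argument, reaching into the sampling conditions themselves: condition~\ref{cond:g4} produces a sampled point $s$ that splits $u=w+v$ with $w,v\in K$, symmetry $K=\breve K$ ensures $v\in K$, and subadditivity~\ref{cond:g6} controls $k(u)\leq k(w)+k(v)$. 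In effect you are reproving, for this specific purpose, the piece of the Sampling Theorem that the paper invokes as a black box. Your approach is more self-contained and makes the role of each sampling condition explicit; the paper's is shorter and leverages already-established machinery. Both are valid, and your domain bookkeeping (showing $F\subseteq (G\oplus K)\ominus K$ via the same decomposition) is handled cleanly.
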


\begin{proof}
Let $(f\oplus k)\vert_S \: \leq g$. Then $(f\oplus k)\vert_S \oplus k\: \leq g\oplus k$.  \\
By result \ref{res:g5} of Grey-value Digital Morphological Sampling Theorem \ref{thm:b6}, we have $f\leq f\vert_S \oplus k$. This gives $f\oplus k$ $\leq (f\oplus k)\vert_S \oplus k$ $\leq g\oplus k$. i.e $f\oplus k$ $\leq f\oplus k$. \\
Since grey-value erosion and and dilation forms an adjunction, using  \textit{Proposition 65} of \cite{r1}, we have, $f$ $\leq (g\oplus k)\ominus k$ $= g\bullet k$.\\
\\
 Conversely, let $f$ $\leq g\bullet k$. $G=G\cap S$, therefore $g=g\vert_S$.\\
 $f$ $\leq g\vert_S \bullet k$ $\Rightarrow (f\oplus k) \leq \: g\vert_S \oplus k$ by \textit{Proposition 65} of \cite{r1}. 
 \\ By result \text{II} of Grey-value Digital Morphological Sampling Theorem \ref{thm:b6}, we have, $(g\vert_S\oplus k)\vert_S$ $=g\vert_S$. 
 \\$\Rightarrow \: (f\oplus k)\vert_S$ $\leq (g\vert_S\oplus k)\vert_S$ $=g\vert_S$ $=g$.
\end{proof}

\begin{definition} {Reconstruction Operator}
\label{def:hb3}
Let  $F\subseteq E^{N}$ and  $f:F\rightarrow L$ be the image. The structuring element  $k:K\rightarrow L$ and sieve $S$ are defined as above. The reconstruction operator is denoted by $\rho (.)$, and it is defined by\\ $\rho (f) = \dot  \sigma (\sigma (f)) $ $=(f\oplus k)\vert_S\bullet k$.
\end{definition}

Similarly, an upper bound of \emph{Reconstruction operator} is given by $\delta(.)$ defined as \\$\delta(f)$ $=(f\oplus k)\vert_S \oplus k$.  Clearly, by Lemma \ref{thm:l3}, $(f\oplus k)\vert_S\bullet k$ $\leq (f\oplus k)\vert_S \oplus k$, i.e $\rho(f) \leq \: \delta(f)$, for any given image $f$.

Note that \emph{Reconstruction operator} ($\rho (.)$, Definition \ref{def:hb3}) is distinct from \emph{Reconstructing operator} ($\dot \sigma (.)$, Definition \ref{def:hb2}). \emph{Reconstruction operator} is used to study the effects in the image after a cycle of sampling (i.e.\ generalized max-pooling, with $\sigma (.)$) and reconstructing (with $\dot \sigma(.)$).

\begin{figure*}[!htbp]
	\includegraphics[width=0.3\linewidth]{Pictures//GreyValue/TestImageGreyValue.png}
\hfill
        \includegraphics[width=0.3\linewidth]{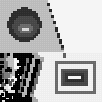}
 \hfill
 	\includegraphics[width=0.3\linewidth]{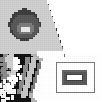}
        \vspace{1.0ex}
	\caption{Grey-value example image $f$ (left), its reconstruction
          $\rho(f)$ (centre) and its reconstruction $\delta(f)$ (right).  We can clearly notice, that $f\leq \rho(f) \leq \delta(f)$.  In this figure, we have used  non-flat SEs $k_2$ and $c_2$ as described in  \ref{fig:kc-nonflat}. }
	\label{fig:H2}
\end{figure*}

We have shown that $(\dot \sigma, \sigma)$ is an adjunction.  By \textit{Proposition 2.6} of \cite{r4} we have the following lemma.

\begin{lemma}
\label{thm:hl1}
Let $f:F(\subseteq E^{N})\rightarrow L$ be any image. Then, $f$ $\leq \rho (f)$ $=\dot \sigma (\sigma(f))$ $=(f\oplus k)\vert_S \bullet k$. 
\end{lemma}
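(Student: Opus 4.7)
The statement splits into an equality and an inequality. The equality $\rho(f) = \dot\sigma(\sigma(f)) = (f\oplus k)\vert_S \bullet k$ is immediate by unfolding the three definitions: $\rho$ is defined as the composition $\dot\sigma \circ \sigma$, $\sigma(f) = (f\oplus k)\vert_S$, and $\dot\sigma(g) = g \bullet k$. So no work is needed there.

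For the inequality $f \leq \rho(f)$, my plan is to invoke the adjunction $(\dot\sigma, \sigma)$ established in Proposition \ref{thm:p1}. That proposition states $\sigma(f) \leq g \Leftrightarrow f \leq \dot\sigma(g)$ for any admissible $f$ and $g$. I would instantiate this equivalence at $g := \sigma(f)$. The left-hand side then becomes $\sigma(f) \leq \sigma(f)$, which holds trivially by reflexivity of $\leq$. The right-hand side is therefore $f \leq \dot\sigma(\sigma(f)) = \rho(f)$, which is exactly the claim.

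This is the standard ``unit of the adjunction'' argument: for any (monotone) Galois connection $(\alpha, \beta)$ with $\beta(f) \leq g \Leftrightarrow f \leq \alpha(g)$, the composite $\alpha \circ \beta$ is automatically extensive. There is no real obstacle here; the only preparatory work was done in proving Proposition \ref{thm:p1}, so the lemma follows in essentially one line once the adjunction is in hand. This is exactly why the authors cite Proposition 2.6 of \cite{r4}, which records this general fact about adjunctions on complete lattices.
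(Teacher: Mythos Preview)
Your proposal is correct and matches the paper's approach exactly: the paper simply notes that $(\dot\sigma,\sigma)$ is an adjunction and invokes Proposition~2.6 of \cite{r4}, which is precisely the ``unit of the adjunction'' argument you spell out by instantiating Proposition~\ref{thm:p1} at $g=\sigma(f)$. You have merely made explicit the one-line reasoning that the paper delegates to the cited reference.
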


Now, we give the relation between operating after max-pooling and max-pooling after performing the morphological operation, when $\sigma(.)$ and $\rho(.)$ is used for max-pooling and reconstruction, respectively. Let us reiterate. Both the filter $k:K\rightarrow L$ used in max-pooling and reconstruction operators and $c:C(\subseteq S)\rightarrow L$ are \emph{not restricted} to be flat. Therefore $k$, $K$ and $S$ must only satisfy the conditions of Grey-value Morphological Sampling Theorem \ref{thm:b6}, and $c$ can be any arbitrary non-negative function defined on in the sampled domain, 
i.e.\ $C=C\cap S$ and $c=c\vert _S$. Some minor results utilized in the proof of the following proposition are given in the Appendix, Section \ref{Sec:A}.

\begin{figure*}[!htbp]
\minipage{0.24\linewidth}
 \includegraphics[width=\linewidth]{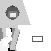}
  \caption{$\sigma(f\oplus c_2 )$}\label{fig:nonflatHeij11}
\endminipage \hspace{0.14\linewidth}
\minipage{0.24\linewidth}
  \includegraphics[width=\linewidth]{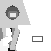}
  \caption{$ \sigma(f) \oplus c_2 $}\label{fig:nonflatHeij12}
\endminipage\hfill
\end{figure*}

\begin{figure*}[!htbp]
\minipage{0.24\linewidth}
 \includegraphics[width=\linewidth]{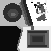}
  \caption{$\sigma(f\ominus c_2 )$}\label{fig:nonflatHeij21}
\endminipage\hfill
\minipage{0.24\linewidth}
  \includegraphics[width=\linewidth]{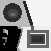}
  \caption{$ \sigma(f) \ominus c_2$}\label{fig:nonflatHeij22}
\endminipage\hfill
\minipage{0.24\linewidth}
  \includegraphics[width=\linewidth]{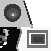}
  \caption{$\sigma((f\oplus k_2 )\ominus c_2 ) $}\label{fig:nonflatHeij23}
\endminipage\hfill
\end{figure*}

\begin{figure*}[!htbp]
\minipage{0.24\linewidth}
 \includegraphics[width=\linewidth]{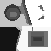}
  \caption{$\sigma(f\circ c_2 )$}\label{fig:nonflatHeij31}
\endminipage\hfill
\minipage{0.24\linewidth}
  \includegraphics[width=\linewidth]{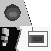}
  \caption{$ \sigma(f) \circ c_2 $}\label{fig:nonflatHeij32}
\endminipage\hfill
\minipage{0.24\linewidth}
  \includegraphics[width=\linewidth]{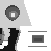}
  \caption{$\sigma((f\oplus k_2 )\circ c_2 ) $}\label{fig:nonflatHeij33}
\endminipage\hfill
\end{figure*}

\begin{figure*}[!htbp]
\minipage{0.24\linewidth}
 \includegraphics[width=\linewidth]{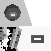}
  \caption{$\sigma(f\bullet c_2 )$}\label{fig:nonflatHeij41}
\endminipage\hfill
\minipage{0.24\linewidth}
  \includegraphics[width=\linewidth]{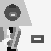}
  \caption{$ \sigma(f) \bullet c_2 $}\label{fig:nonflatHeij42}
\endminipage\hfill
\minipage{0.24\linewidth}
  \includegraphics[width=\linewidth]{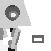}
  \caption{$\sigma((f\oplus k_2 )\bullet c_2 ) $}\label{fig:nonflatHeij43}
\endminipage\hfill
\end{figure*}

\begin{figure*}[!htbp]
\minipage{0.3\linewidth}
 \includegraphics[width=\linewidth]{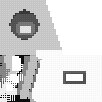}
  \caption{$\rho(f)\oplus c_2 $}\label{fig:nonflatHeij51}
\endminipage \hspace{0.14\linewidth}
\minipage{0.3\linewidth}
  \includegraphics[width=\linewidth]{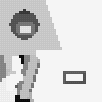}
  \caption{$ \rho(f\oplus c_2 )$}\label{fig:nonflatHeij52}
\endminipage\hfill
\end{figure*}

\begin{figure*}[!htbp]
\minipage{0.3\linewidth}
 \includegraphics[width=\linewidth]{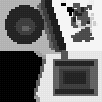}
  \caption{$\rho(f\ominus c_2 )$}\label{fig:nonflatHeij61}
\endminipage  \hspace{0.14\linewidth}
\minipage{0.3\linewidth}
  \includegraphics[width=\linewidth]{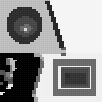}
  \caption{$ \rho((f\oplus k_2 )\ominus c_2 ) $}\label{fig:nonflatHeij63}
\endminipage\hfill
\end{figure*}

\begin{figure*}[!htbp]
\minipage{0.3\linewidth}
 \includegraphics[width=\linewidth]{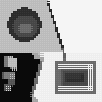}
  \caption{$\rho(f)\circ c_2 $}\label{fig:nonflatHeij71}
\endminipage \hspace{0.14\linewidth}
\minipage{0.3\linewidth}
  \includegraphics[width=\linewidth]{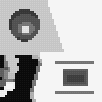}
  \caption{$  \rho((f\oplus k_2 )\circ c_2 )$}\label{fig:nonflatHeij72}
\endminipage\hfill
\end{figure*}

\begin{figure*}[!htbp] 
\minipage{0.3\linewidth}
 \includegraphics[width=\linewidth]{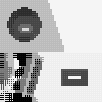}
  \caption{$\rho(f)\bullet c_2 $}\label{fig:nonflatHeij81}
\endminipage \hspace{0.14\linewidth}
\minipage{0.3\linewidth}
  \includegraphics[width=\linewidth]{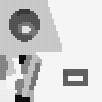}
  \caption{$  \rho((f\oplus k_2 )\bullet c_2 )$}\label{fig:nonflatHeij82}
\endminipage\hfill
\end{figure*}

\begin{prop}
\label{thm:h2}
Let  $F\subseteq E^{N}$ and  $f:F\rightarrow L$ be the image.  The structuring elements $c:C(\subseteq S)\rightarrow L$, $k:K\rightarrow L$ and sieve $S$ are defined as above. Then, 
\begin{enumerate}[I.]
\item $\sigma(f\oplus c)$ $= \sigma(f) \oplus c$,\\
 i.e.\ $((f\oplus c)\oplus k)\vert_S$ $=(f\oplus k)\vert_S \oplus c$

\item $\sigma (f\ominus c)$ $\leq \sigma(f)\ominus c$ $\sigma((f\oplus k)\ominus c)$,\\
i.e.\ $((f\ominus c)\oplus k)\vert_S$ $\leq (f\oplus k)\vert_S \ominus c$ $\leq (((f\oplus k)\ominus c)\oplus k)\vert_S$

\item $\sigma(f\circ c)$ $\leq \sigma(f)\circ c$ $\leq \sigma((f\oplus k)\circ c)$,\\
i.e.\ $((f\circ c)\oplus k)\vert_S$ $\leq (f\oplus k)\vert_S \circ c$ $\leq (((f\oplus k)\circ c)\oplus k)\vert_S$

\item $\sigma(f\bullet c)$ $\leq \sigma(f) \bullet c$ $\leq \sigma ((f\oplus k)\bullet c)$,\\
i.e.\ $((f\bullet c)\oplus k)\vert_S$ $\leq (f\oplus k)\vert_S \bullet c$ $\leq (((f\oplus k)\bullet c)\oplus k)\vert_S$

\item $\rho(f)\oplus c$ $\leq \rho(f\oplus c)$,\\
i.e.\ $((f\oplus k)\vert_S\bullet k)\oplus c$ $\leq ((f\oplus c)\oplus k)\vert_S\bullet k$

\item $\rho(f\ominus c)$ $\leq \rho (f) \ominus c$ $\leq \rho((f\oplus k)\ominus c)$,\\
i.e.\ $((f\ominus c)\oplus k)\vert_S$ $\leq ((f\oplus k)\vert_S \bullet k)\ominus c$ $\leq(((f\oplus k)\ominus c)\oplus k)\vert_S\bullet k $

\item $\rho(f) \circ c$ $\leq \rho((f\oplus k)\circ c)$,\\
i.e.\ $((f\oplus k)\vert_S \bullet k)\circ c$ $\leq (((f\oplus k)\circ c)\oplus k)\vert_S \bullet k$

\item $\rho(f)\bullet c$ $\leq \rho((f\oplus k)\bullet c)$,\\
i.e.\ $((f\oplus k)\vert_S \bullet k)\bullet c$ $\leq (((f\oplus k)\bullet c)\oplus k)\vert_S \bullet k$

\end{enumerate}
\end{prop}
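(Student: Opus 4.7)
The plan is to prove the eight assertions one by one, building on Lemma \ref{thm:lemaa} (which lets us pull the sampling $\cdot\vert_S$ through a dilation whenever the structuring element lives in the sieve), the adjunction $(\dot\sigma,\sigma)$ established in Proposition \ref{thm:p1}, Lemma \ref{thm:hl1} ($f\le\rho(f)$), and the standard facts that dilation is associative and commutative, closing is extensive and monotone, and opening is anti-extensive and monotone. Since $c$ is supported in $S$ we always have $c=c\vert_S$, so Lemma \ref{thm:lemaa} yields $(h\oplus k)\vert_S\oplus c=(h\oplus k\oplus c)\vert_S$, which is the workhorse identity throughout. The hypothesis $k(0)=0$ in Theorem \ref{thm:b6} also gives extensivity of dilation by $k$, i.e.\ $h\le h\oplus k$, which will be used repeatedly.

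For (I) I would use commutativity of dilation to write $(f\oplus c)\oplus k=(f\oplus k)\oplus c$ and then apply the workhorse identity. For (II), the first inequality follows from the adjunction $\oplus c \dashv \ominus c$: verifying $\sigma(f\ominus c)\oplus c\le\sigma(f)$ reduces via Lemma \ref{thm:lemaa} to $((f\ominus c)\oplus c\oplus k)\vert_S\le(f\oplus k)\vert_S$, which holds because $(f\ominus c)\oplus c=f\circ c\le f$; the second inequality rewrites $\sigma(f)\ominus c$ as $((f\oplus k)\ominus c)\vert_S$ by Lemma \ref{thm:lemaa} and uses extensivity of dilation by $k$. Parts (III) and (IV) then reduce to (II): for (III) expand $f\circ c=(f\ominus c)\oplus c$ and dilate the chain of (II) by $c$, and for (IV) use the adjoint characterisation $A\le B\bullet c\iff A\oplus c\le B\oplus c$ with $(f\bullet c)\oplus c=(f\oplus c)\circ c\le f\oplus c$ for the first inequality, while the second comes from (II) applied to $f\oplus c$ together with the identity $(f\oplus c\oplus k)\ominus c=(f\oplus k)\bullet c$.

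For the $\rho$-statements (V)--(VIII) the common trick is to combine the $\sigma$-version with Lemma \ref{thm:hl1} and, where needed, commutation-type lemmas between $\bullet k$ and the operations by $c$ (these are precisely the minor results deferred to Section \ref{Sec:A}). Concretely, for (V) set $g=\sigma(f)$, note by (I) that $\rho(f\oplus c)=(g\oplus c)\bullet k$, and reduce to $(g\bullet k)\oplus c\le(g\oplus c)\bullet k$; by adjunction this unwinds to $((g\oplus k)\circ k)\oplus c\le(g\oplus k)\oplus c$, which is true by anti-extensivity of opening. For (VI), the first inequality follows by closing the conclusion of (II) with $k$ and establishing $(h\ominus c)\bullet k\le(h\bullet k)\ominus c$ (again via adjunction and $(h\oplus k)\circ k\le h\oplus k$); the second is obtained by closing both sides of the second inequality of (II) with $k$ and using the same interchange lemma in the reverse direction. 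For (VII) and (VIII) I would apply Lemma \ref{thm:hl1} to $\rho(f)\circ c$ respectively $\rho(f)\bullet c$, push the resulting dilation by $k$ through the sieve by part (I), and then invoke parts (III) and (IV) to close.

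The main obstacle I anticipate is the interplay between closing by $k$ and the operations by $c$ in (V)--(VIII), because $\bullet k$ and $\oplus c$ (or $\ominus c$) do not commute in general; the correct order of operations has to be squeezed out through repeated use of the adjunction together with the anti-extensivity fact $(h\oplus k)\circ k\le h\oplus k$, and getting those inequalities to line up in the right direction — rather than any single long computation — is where the work concentrates and where the auxiliary lemmas of the Appendix are needed.
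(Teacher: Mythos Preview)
Your approach is largely correct and in several places more streamlined than the paper's: you reduce (III) and (IV) to (II) by dilating the whole chain by $c$ and invoking (I), whereas the paper proves each case separately via dedicated interchange lemmas ($(f\circ c)\oplus k\le(f\oplus k)\circ c$, etc.) together with Propositions \ref{thm:pb16a} and \ref{thm:pb17a}. Your route is cleaner and makes the role of the adjunction explicit. Parts (I), (II), (V), and the first half of (VI) are fine and essentially match the paper (your proof of $(g\bullet k)\oplus c\le(g\oplus c)\bullet k$ is exactly the content of Lemma~\ref{thm:l6}, and your $(h\ominus c)\bullet k\le(h\bullet k)\ominus c$ is Lemma~\ref{thm:l7}).

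There is, however, a genuine gap in the second inequality of (VI) and in your sketch of (VII)--(VIII). For (VI), closing the second inequality of (II) by $k$ gives $(\sigma(f)\ominus c)\bullet k\le\rho((f\oplus k)\ominus c)$, but the interchange lemma only yields $(\sigma(f)\ominus c)\bullet k\le(\sigma(f)\bullet k)\ominus c=\rho(f)\ominus c$; the inequality goes the wrong way, so ``using the same interchange lemma in the reverse direction'' does not connect the chain. The paper's remedy---which also drives (VII) and (VIII)---is the single estimate
\[
\rho(f)=(f\oplus k)\vert_S\bullet k\ \le\ f\oplus k,
\]
obtained from part~III of Theorem~\ref{thm:b6} applied to $f\oplus k$. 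Once you have $\rho(f)\le f\oplus k$, monotonicity of $\ominus c$, $\circ c$, $\bullet c$ gives $\rho(f)\ast c\le(f\oplus k)\ast c$ for each of these operations, and Lemma~\ref{thm:hl1} applied to $(f\oplus k)\ast c$ finishes all three upper bounds in one line. Your outline for (VII)--(VIII) (``push the dilation through the sieve by (I) and invoke (III)/(IV)'') does not supply this bound, and without it the argument does not close.
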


\begin{proof}
\textit{I.}
\begin{align*}
((f\oplus c)\oplus k)\vert_S &= ((f\oplus k)\oplus c)\vert_S  \text{,  by \textit{Proposition 60} of \cite{r1}}\\
&=((f\oplus k)\oplus c\vert_S)\vert_S \text{,  because 
$c=c\vert_S$}\\
&=(f\oplus k)\vert_S \oplus c\vert_S  \text{,  by Lemma \ref{thm:lemaa}, \textit{I}}\\
&=(f\oplus k)\vert_S \oplus c
\end{align*}

\textit{II.}
We first show $((f\ominus c)\oplus k)\vert_S$ $\leq (f\oplus k)\vert_S \ominus c$.
\begin{align*}
((f\ominus c)\oplus k)\vert_S &\leq ((f\oplus k)\ominus c)\vert_S \text{,  by Lemma \ref{thm:l4}, \textit{II}}\\
&=((f\oplus k)\ominus c\vert_S)\vert_S \text{,  $\because$ 
$c=c\vert_S$}\\
&=(f\oplus k)\vert_S \ominus c\vert_S \text{,  by Lemma  \ref{thm:lemaa}, \textit{II}}\\
&=(f\oplus k)\vert_S \ominus c
\end{align*}

We now show $(f\oplus k)\vert_S \ominus c$ $\leq (((f\oplus k)\ominus c)\oplus k)\vert_S$.
\begin{align*}
(f\oplus k)\vert_S \ominus c &=(f\oplus k)\vert_S \ominus c\vert_S \text{,  because 
$c=c\vert_S$}\\
&=((f\oplus k)\ominus c\vert_S)\vert_S \text{,  by  Lemma \ref{thm:lemaa}, \textit{II}}\\
&=((f\oplus k)\ominus c)\vert_S \\
&\leq (((f\oplus k)\ominus c)\oplus k)\vert_S \text{,  by Lemma \ref{thm:l1}}\\
\end{align*}

\textit{III.}
We first show $((f\circ c)\oplus k)\vert_S$ $\leq (f\oplus k)\vert_S \circ c$. 
\begin{align*}
((f\circ c)\oplus k)\vert_S &\leq ((f\oplus k)\circ c)\vert_S   \text{,  by Lemma \ref{thm:l5}, \textit{II}}\\
&=((f\oplus k)\circ c\vert_S )\vert_S  \text{,  because 
$c=c\vert_S$}\\
&=(f\oplus k)\vert_S \circ c\vert_S \text{,  by Proposition \ref{thm:pb16a}}\\
&=(f\oplus k)\vert_S \circ c
\end{align*}

We now show  $(f\oplus k)\vert_S \circ c$ $\leq (((f\oplus k)\circ c)\oplus k)\vert_S$.
\begin{align*}
(f\oplus k)\vert_S \circ c &= (f\oplus k)\vert_S \circ c\vert_S \text{,  because 
$c=c\vert_S$}\\
&=((f\oplus k)\circ c\vert_S )\vert_S \text{,  by Proposition \ref{thm:pb16a}}\\
&=((f\oplus k)\circ c )\vert_S\\
&\leq (((f\oplus k)\circ c )\oplus k)\vert_S \text{,  by Lemma \ref{thm:l1}}
\end{align*}

\textit{IV.}
We first show $((f\bullet c)\oplus k)\vert_S$ $\leq (f\oplus k)\vert_S \bullet c$. 
\begin{align*}
((f\bullet c)\oplus k)\vert_S &\leq ((f\oplus k)\bullet c)\vert_S   \text{,  by Lemma \ref{thm:l5}, \textit{II}}\\
&=((f\oplus k)\bullet c\vert_S )\vert_S  \text{,  because 
$c=c\vert_S$}\\
&=(f\oplus k)\vert_S \bullet c\vert_S \text{,  by Proposition \ref{thm:pb17a}}\\
&=(f\oplus k)\vert_S \bullet c
\end{align*}

We now show  $(f\oplus k)\vert_S \bullet c$ $\leq (((f\oplus k)\bullet c)\oplus k)\vert_S$.
\begin{align*}
(f\oplus k)\vert_S \bullet c &= (f\oplus k)\vert_S \bullet c\vert_S \text{,  because 
$c=c\vert_S$}\\
&=((f\oplus k)\bullet c\vert_S )\vert_S \text{,  by Proposition \ref{thm:pb17a}}\\
&=((f\oplus k)\bullet c )\vert_S\\
&\leq (((f\oplus k)\bullet c )\oplus k)\vert_S \text{,  by Lemma \ref{thm:l1}}
\end{align*}

\textit{V.}
\begin{align*}
((f\oplus k)\vert_S\bullet k)\oplus c &\leq ((f\oplus k)\vert_S\oplus c)\bullet k, \\
& \text{  by Lemma \ref{thm:l6}, \textit{II}}\\
&=((f\oplus k)\vert_S\oplus c\vert_S)\bullet k, \\
&\text{  because 
$c=c\vert_S$}\\
&=((f\oplus k)\oplus c\vert_S)\vert_S\bullet k, \\
& \text{  by Lemma \ref{thm:lemaa}, \textit{I} }\\
&=((f\oplus k)\oplus c)\vert_S\bullet k\\
&=((f\oplus c)\oplus k)\vert_S\bullet k, \\
&\text{ by \textit{Proposition 60} of \cite{r1} } 
\end{align*}

\textit{VI.}
We first show $((f\ominus c)\oplus k)\vert_S$ $\leq ((f\oplus k)\vert_S \bullet k)\ominus c$.
\begin{align*}
((f\ominus c)\oplus k)\vert_S \bullet k &\leq ((f\oplus k)\ominus c)\vert_S\bullet k, 
\\
& \text{,  by Lemma \ref{thm:l4}  \textit{II}}\\
&=((f\oplus k)\ominus c\vert_S)\vert_S\bullet k , \\
& \text{  because 
$c=c\vert_S$}\\
&=((f\oplus k)\vert_S\ominus c\vert_S)\vert\bullet k , \\
&
\text{ by Lemma \ref{thm:lemaa}, \textit{II}}\\
&=((f\oplus k)\vert_S\ominus c)\vert\bullet k  \\
&\leq ((f\oplus k)\vert_S \bullet k)\ominus c, \\
&\text{  by Lemma \ref{thm:l7}}
\end{align*}

In the remaining parts we apply the result \textit{III} of Grey-value Morphological Sampling Theorem, \ref{thm:b6}, to $(f\oplus k)$ to obtain 
\begin{equation} 
\label{eq:1}
(f\oplus k)\vert_S \bullet k \leq (f\oplus k). \tag{Eq1}
\end{equation}
\\
We now show $((f\oplus k)\vert_S \bullet k)\ominus c$ $\leq(((f\oplus k)\ominus c)\oplus k)\vert_S\bullet k $.
\begin{align*}
((f\oplus k)\vert_S \bullet k)\ominus c &\leq  (f\oplus k)\ominus c \text{,  by \eqref{eq:1}}\\
&\leq (((f\oplus k)\ominus c)\oplus k)\vert_S\bullet k,\\
&\text{  by Lemma \ref{thm:hl1}}\\
\end{align*}

\textit{VII.}
\begin{align*}
((f\oplus k)\vert_S \bullet k)\circ c &\leq  (f\oplus k)\circ c \text{,  by \eqref{eq:1}}\\
&\leq  (((f\oplus k)\circ c)\oplus k)\vert_S \bullet k , \\
&\text{  by Lemma \ref{thm:hl1}}\\
\end{align*}

\textit{VIII.}
\begin{align*}
((f\oplus k)\vert_S \bullet k)\bullet c &\leq  (f\oplus k)\bullet c \text{,  by \eqref{eq:1}}\\
&\leq  (((f\oplus k)\bullet c)\oplus k)\vert_S \bullet k , \\
\text{  by Lemma \ref{thm:hl1}}\\
\end{align*}
\end{proof}

The results \textit{I -IV} give relations in sampled domain. The results \textit{V-VIII} give relations in reconstructed images.  The interaction between sampling operator $\sigma(.)$ and morphological operations with non-flat SEs is illustrated in Figures \ref{fig:nonflatHeij11}-\ref{fig:nonflatHeij43}. Figures \ref{fig:nonflatHeij51} to \ref{fig:nonflatHeij82} illustrate interaction of reconstruction operation $\rho(.)$ with morphological operations using non-flat SEs. In the following examples we have used the non-flat SEs $k_2$ and $c_2$ as described in Figure \ref{fig:kc-nonflat}.

\section{Conclusion}

In this paper we have built upon classic works of Haralick and co-authors, and we have shown 
in detail how to transfer digital sampling theorems concerned with four fundamental morphological operations, namely dilation, erosion, opening and closing, from the binary setting to grey-value images.

Using the above results, we have also extended the work of Heijmans and Toet on max-pooling,  morphological sampling and reconstruction to use of 
non-flat structuring elements. 

With this paper we have not only worked on closing a gap in the foundation 
of mathematical morphology. Our aim is also 
to address some fundamental theoretical aspects of the pooling operation 
encountered in machine learning. In our future work we also strive to 
elaborate more on this aspect.
\begin{appendices}

\section{Some Minor Results} 
\label{Sec:A}
In this section, we present some smaller results which are utilized in the discussions of Section \ref{Sec:HeijExt}. 

Let $F,\: G,\: C,\: K \subseteq E^{N} $ and $f:F\rightarrow L$, $g:G\rightarrow L$, $c:C\rightarrow L$  and $k:K\rightarrow L$. Let $S\subseteq E^{N}$ be the sampling sieve. 

\begin{lemma}
\label{thm:l1}
Let $0\in K$, $k(0) =0$ and $ k(u)\geq 0$ $\forall u\in K$. Then,  $f\leq f\oplus k$.
\end{lemma}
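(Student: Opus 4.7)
The plan is to apply Proposition \ref{thm:p50} directly, exploiting the fact that $u=0$ is an admissible choice in the maximum defining $f\oplus k$.

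First I would check the domain inclusion $F\subseteq F\oplus K$: for any $x\in F$, since $0\in K$, we have $x=x+0\in F\oplus K$ by the definition of dilation of sets (Definition \ref{def:bdilation}). So $f\oplus k$ is defined at every point of $F$, and the boundedness statement $f\leq f\oplus k$ makes sense.

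Next, for the pointwise inequality, I would fix $x\in F$ and invoke Proposition \ref{thm:p50} to write
\begin{equation*}
(f\oplus k)(x)\;=\;\max_{u\in K,\;x-u\in F}\{f(x-u)+k(u)\}.
\end{equation*}
Because $0\in K$ and $x-0=x\in F$, the index $u=0$ is admissible in this maximum, contributing the value $f(x)+k(0)=f(x)$. Hence $(f\oplus k)(x)\geq f(x)$. (The nonnegativity assumption $k(u)\geq 0$ is not even needed for this particular inequality, though it is natural in the umbra setting where $f\oplus k$ is defined as a top surface.) This holds for every $x\in F$, giving $f\leq f\oplus k$ as claimed.

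There is no real obstacle: the lemma is simply the extensivity of dilation when the SE contains the origin with zero value, and the one-line witness $u=0$ inside the max formula of Proposition \ref{thm:p50} completes the argument.
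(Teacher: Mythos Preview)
Your proof is correct and follows essentially the same approach as the paper: establish the domain inclusion $F\subseteq F\oplus K$ via $0\in K$, then use the explicit $\max$ formula from Proposition~\ref{thm:p50} with the witness $u=0$ to get $(f\oplus k)(x)\geq f(x)+k(0)=f(x)$. Your side remark that the nonnegativity of $k$ is not actually used in this argument is a valid observation.
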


\begin{proof}
We first show that if $0\in K$ then $F\subseteq F\oplus K$.
Clearly,  $x\in F$ and $0\in K$ $\Rightarrow$ $(x+0) \in F\oplus K$. Thus, $x\in F\oplus K$. $\therefore F\subseteq F\oplus K$.\\
We now show  $f\leq f\oplus k$. For all $x\in F$ $\subseteq F\oplus K$, we have 
$(f\oplus k)(x)$  $= \max _{x-u\in F,\: u\in K} \{f(x-u) + \:k(u)\} $ $\geq f(x-0)+k(0) $ $=f(x)$\\
$\therefore$ $\forall x\in F$, $ f(x)\leq (f\oplus k)(x)$.
\end{proof}

\begin{lemma}
\label{thm:l2}
If $f\leq g$ then $f\vert _S \leq g\vert_S$.
\end{lemma}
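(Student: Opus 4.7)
The plan is to unfold the definition of the order $\leq$ on grey-value images, as recalled in the paragraph preceding Theorem \ref{thm:b6}, and verify the two conditions separately for the restricted images.

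More precisely, the statement $f \leq g$ means by definition that $F \subseteq G$ and $f(x) \leq g(x)$ for every $x \in F$. The restrictions to the sampling sieve are $f\vert_S : F \cap S \rightarrow L$ with $f\vert_S(x) = f(x)$ and $g\vert_S : G \cap S \rightarrow L$ with $g\vert_S(x) = g(x)$. So the required conclusion $f\vert_S \leq g\vert_S$ amounts to the domain inclusion $F \cap S \subseteq G \cap S$ together with the pointwise inequality $f\vert_S(x) \leq g\vert_S(x)$ for every $x \in F \cap S$.

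First I would observe that the domain inclusion is immediate: intersecting both sides of $F \subseteq G$ with $S$ preserves the inclusion, i.e. $F \cap S \subseteq G \cap S$. Second, for each $x \in F \cap S$ one has in particular $x \in F$, hence by hypothesis $f(x) \leq g(x)$, and by definition of the restriction $f\vert_S(x) = f(x) \leq g(x) = g\vert_S(x)$. Combining both observations yields $f\vert_S \leq g\vert_S$.

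This lemma is essentially a bookkeeping statement about restriction and ordering, and there is no genuine obstacle; the only subtle point is to keep domains and values of the two grey-value functions carefully distinguished so that the two defining conditions of $\leq$ are checked in order. I would therefore present the proof as a short two-line argument immediately following the statement.
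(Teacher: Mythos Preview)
Your proof is correct and follows essentially the same approach as the paper: unfold the definition of $\leq$ into domain inclusion plus pointwise inequality, then observe that both conditions survive intersection with $S$. The paper's version is simply more terse, compressing your two observations into a single line.
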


\begin{proof}
We have, $\forall x\in F$ $\subseteq G$, $f(x)\leq g(x)$. That is $\forall s \in F\cap S \subseteq G\cap S$, $f\vert_S (s) \leq g\vert_S (s)$. Therefore, $f\vert_S \leq g\vert_S$. 
\end{proof} 

\begin{lemma}
\label{thm:l3}
Let $0\in K$, $k(0) =0$ and $ k(u)\geq 0$ $\forall u\in K$. Then, $f\ominus k \leq f$.
\end{lemma}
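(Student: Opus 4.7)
The plan is to mirror the structure of the proof of Lemma \ref{thm:l1}, but using Proposition \ref{thm:prop52} (the $\min$-formula for grey-value erosion) in place of the $\max$-formula for dilation. The statement is essentially the anti-extensivity of erosion, and under the hypotheses $0\in K$, $k(0)=0$, $k(u)\geq 0$, it should follow immediately by evaluating the defining minimum at the distinguished element $u=0$.

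First I would establish the set-theoretic containment $F\ominus K\subseteq F$. By definition of binary erosion, $x\in F\ominus K$ means $x+u\in F$ for every $u\in K$; in particular, since $0\in K$, we get $x=x+0\in F$. This makes $f\ominus k\leq f$ a meaningful comparison of grey-value images in the sense already recalled in the text (domain containment plus pointwise inequality).

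Next, for any $x\in F\ominus K$, I would apply Proposition \ref{thm:prop52} to write
\begin{equation*}
(f\ominus k)(x)=\min_{u\in K}\{f(x+u)-k(u)\}.
\end{equation*}
Since $0\in K$ belongs to the range of the minimum, the value is bounded above by the summand at $u=0$:
\begin{equation*}
(f\ominus k)(x)\leq f(x+0)-k(0)=f(x)-0=f(x),
\end{equation*}
using $k(0)=0$. As this holds for every $x\in F\ominus K$, we conclude $f\ominus k\leq f$.

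I do not foresee a real obstacle; the only delicate point is making sure the min-formula of Proposition \ref{thm:prop52} is legitimately applied, i.e.\ that $(x,y)\in U[f]\ominus U[k]$ for some $y$ whenever $x\in F\ominus K$ (which holds precisely because the conditions $0\in K$, $k(0)=0$, $k(u)\geq 0$ guarantee $(x,f(x))$ lies in the eroded umbra). This is exactly the setting under which the plain $\min$-formula, rather than the $\max\{0,\cdot\}$ truncation, applies.
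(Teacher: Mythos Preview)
Your proposal is correct and follows essentially the same approach as the paper: first establish the domain containment $F\ominus K\subseteq F$ via $0\in K$, then bound the $\min$-formula of Proposition~\ref{thm:prop52} from above by its value at $u=0$ to get $(f\ominus k)(x)\leq f(x)$. Your extra remark justifying applicability of the plain $\min$-formula is a nice bit of care that the paper omits, but the argument is otherwise identical.
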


\begin{proof}
We first show that if $0\in K$ then $F\ominus K \subseteq F$.\\
Clearly, $x\in F\ominus K$ $\Rightarrow \forall u \in K$, $x+u \in F$.\\ 
$0 \in K$  $\therefore x+0 =x \in F$, which implies $F\ominus K \subseteq F$.\\   
We now show $f\ominus k \leq f$. \\
For any $x\in F\ominus K$, we have, $(f\ominus k)(x)$ $=\min _{x+u\in F, \: u\in K} \{f(x+u)-k(u)\}$ $\leq f(x+0)-k(0)$ $=f(x)$. 
\end{proof}

\begin{lemma}
\label{thm:l4}
\begin{enumerate}[I.]
\item $(F\ominus C)\oplus K \subseteq (F\oplus K)\ominus C$
\item $(f\ominus c)\oplus k \leq (f \oplus k)\ominus c$
\end{enumerate}
\end{lemma}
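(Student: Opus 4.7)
\textbf{Plan for the proof of Lemma \ref{thm:l4}.} The plan is to dispatch Part I by a direct chase through the definitions of dilation and erosion, and then to reduce Part II to Part I via the Umbra Homomorphism Theorem \ref{thm:58}.

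For Part I, I would take an arbitrary $x \in (F \ominus C) \oplus K$ and write $x = y + u$ for some $y \in F \ominus C$ and $u \in K$, using Definition \ref{def:bdilation}. By the definition of erosion, $y \in F \ominus C$ means $y + v \in F$ for every $v \in C$. Therefore, for an arbitrary $v \in C$, one has $x + v = (y + v) + u$ with $y + v \in F$ and $u \in K$, whence $x + v \in F \oplus K$. Since this holds for all $v \in C$, the characterization in \eqref{erosion} gives $x \in (F \oplus K) \ominus C$, as desired. This is essentially the classical statement that dilation by $K$ commutes past erosion by $C$ in the direction of an inclusion, and I expect no obstacle here.

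For Part II, I would lift the problem to the umbra setting. By Theorem \ref{thm:58} applied twice,
\begin{align*}
U[(f \ominus c) \oplus k] &= U[f \ominus c] \oplus U[k] \\
&= (U[f] \ominus U[c]) \oplus U[k],\\
U[(f \oplus k) \ominus c] &= U[f \oplus k] \ominus U[c] \\
&= (U[f] \oplus U[k]) \ominus U[c].
\end{align*}
Since the identity established in Part I holds for arbitrary subsets of $E^{N+1}$, applying it to the umbras $U[f], U[c], U[k] \subseteq E^N \times L$ yields
\begin{equation*}
(U[f] \ominus U[c]) \oplus U[k] \subseteq (U[f] \oplus U[k]) \ominus U[c],
\end{equation*}
and therefore $U[(f \ominus c) \oplus k] \subseteq U[(f \oplus k) \ominus c]$. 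Taking top surfaces and invoking Definition \ref{def:umbraImg} then gives the pointwise inequality $(f \ominus c) \oplus k \leq (f \oplus k) \ominus c$ in the sense of the boundedness relation recalled before Theorem \ref{thm:b6}.

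The only delicate point I anticipate is making sure the binary argument in Part I applies verbatim to umbras, which live in $E^N \times L$ rather than $E^N$. Since the set-theoretic proof of Part I uses only the abelian group structure of the index set (pointwise addition of translation vectors), and since the Umbra Homomorphism Theorem guarantees that the relevant Minkowski operations on umbras coincide with the grey-value dilation/erosion, this transition is purely formal. Hence the whole argument amounts to a short set-chase followed by an umbra lift, with no heavy computation required.
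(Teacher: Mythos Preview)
Your proposal is correct and follows essentially the same approach as the paper: a direct element chase for Part I, followed by lifting to umbras via the Umbra Homomorphism Theorem and reapplying the set-theoretic argument for Part II. The paper likewise remarks that the logic of Part I applies verbatim to the umbras in $E^N\times L$, which is exactly the point you flag as the only delicate step.
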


\begin{proof}
\textit{I.}
\\
Let $x\in (F\ominus C)\oplus K$. Then, $x=u+k$ where $u\in F\ominus C$, $k\in K$.  \\
$u\in F\ominus C$ implies $\forall c_0 \in C$, $u+c_0 \in F$. \\
$\Rightarrow $ $\forall c_0 \in C$, $(u+k)+c_0$ $=(u+c_0)+k \in F\oplus K$.\\
Therefore, $x=u+k$ $\in (F\oplus K)\ominus C$.
\\
\textit{II.}
\\ 
From Umbra Homomorphism Theorem (\textit{Theorem 58} of \cite{r1} ), we have $U[(f\ominus c)\oplus k] $ $=U[f\ominus c]\oplus U[k]$ $=(U[f]\ominus U[c])\oplus U[k]$. Similarly, $U[(f\oplus k)\ominus c] =$ $(U[f]\oplus U[k])\ominus U[c] $.\\ 
We have $(F\ominus C)\oplus K \subseteq (F\oplus K)\ominus C$. 
Also, using the logic of Part \textit{I} of the proof, $(U[f]\ominus U[c])\oplus U[k] \subseteq  (U[f]\oplus U[k])\ominus U[c] $.\\
Therefore, we can conclude,  $(f\ominus c)\oplus k \leq (f \oplus k)\ominus c$.
\end{proof}

\begin{lemma}
\label{thm:l5}
\begin{enumerate}[I.]
\item $(F\circ C)\oplus K$ $\subseteq (F\oplus K)\circ C$
\item $(f\circ c)\oplus k$ $\leq (f\oplus k)\circ c$ 
\end{enumerate}
\end{lemma}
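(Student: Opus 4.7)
Both parts are proved by ``mimicking the alternative-opening definition.'' For Part I, I would start from the alternative form \eqref{alt-open}, namely $A\circ C=\bigcup_{\{y\mid C_y\subseteq A\}}C_y$, and use the basic fact that translating a subset-relation is preserved by further dilation: if $C_y\subseteq F$, then for every $k\in K$ we have $C_{y+k}=C_y+k\subseteq F\oplus K$. Pick $x\in(F\circ C)\oplus K$, write $x=a+k$ with $a\in F\circ C$ and $k\in K$, choose $y$ with $a\in C_y\subseteq F$, and conclude $x\in C_{y+k}\subseteq F\oplus K$. The second conclusion together with $x\in C_{y+k}$ places $x$ in the right-hand side of \eqref{alt-open} for $(F\oplus K)\circ C$, which finishes Part I.

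\textbf{Part II via umbras.} For the grey-value inequality I would lift everything to umbras by the Umbra Homomorphism Theorem \ref{thm:58}: since $U[(f\circ c)\oplus k]=U[f\circ c]\oplus U[k]$ and $U[(f\oplus k)\circ c]$ can be expressed through \eqref{thm:p71} and \ref{thm:58}, it suffices to show the set-inclusion
\[
U[f\circ c]\oplus U[k]\ \subseteq\ \bigcup_{\{(x,y)\mid U[c]_{(x,y)}\subseteq U[f\oplus k]\}}U[c]_{(x,y)}.
\]
Take $p\in U[f\circ c]\oplus U[k]$, decompose $p=q+r$ with $q\in U[f\circ c]$, $r=(a,b)\in U[k]$, and use \eqref{thm:p71} to produce $(x_0,y_0)$ with $q\in U[c]_{(x_0,y_0)}\subseteq U[f]$. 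Translating by $r$ gives $p\in U[c]_{(x_0+a,\,y_0+b)}$, and from $U[c]_{(x_0,y_0)}\subseteq U[f]$ one obtains $U[c]_{(x_0+a,\,y_0+b)}\subseteq U[f]\oplus U[k]=U[f\oplus k]$. Hence $p$ lies in a translate of $U[c]$ contained in $U[f\oplus k]$, which by \eqref{thm:p71} is precisely $U[(f\oplus k)\circ c]$. Applying $T[\cdot]$ on both sides converts the umbra inclusion back to the pointwise inequality $(f\circ c)\oplus k\leq(f\oplus k)\circ c$.

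\textbf{Main obstacle.} The only delicate point I foresee is the bookkeeping on the vertical coordinate of the translation $(x_0,y_0)$: the Definition of ``Translation of an Umbra'' requires $y_0\geq 0$, so I would note that $r\in U[k]$ implies $b\geq 0$, and that the alternative grey-scale opening formula \eqref{thm:p71} is only invoked at translation indices $(x_0,y_0)$ arising as translates of elements of $U[c]\subseteq E^N\times L$, which automatically keeps the second coordinate in $[0,l]$. Beyond that, everything reduces to the purely set-theoretic Part I applied inside $E^N\times L$, so no genuinely new calculation is required.
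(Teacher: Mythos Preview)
Your argument is correct, but it proceeds along a different line from the paper. The paper's proof is purely algebraic: it expands $F\circ C=(F\ominus C)\oplus C$, uses commutativity/associativity of dilation to rewrite $(F\circ C)\oplus K=((F\ominus C)\oplus K)\oplus C$, and then invokes Lemma~\ref{thm:l4} (namely $(F\ominus C)\oplus K\subseteq(F\oplus K)\ominus C$) to get the inclusion in one stroke; Part~II is literally the same three lines with grey-value operations and Lemma~\ref{thm:l4}\,\textit{II}. Your route instead works directly with the alternative characterisation \eqref{alt-open} of opening as a union of translates and chases an element through, which is perfectly valid and has the advantage of being self-contained (you never need Lemma~\ref{thm:l4}). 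The trade-off is length: the paper's chain is four short lines, whereas your element-wise argument, especially for Part~II, carries more bookkeeping.

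One simplification worth noting for Part~II: since the Umbra Homomorphism Theorem gives $U[(f\circ c)\oplus k]=(U[f]\circ U[c])\oplus U[k]$ and $U[(f\oplus k)\circ c]=(U[f]\oplus U[k])\circ U[c]$, you can simply apply your Part~I verbatim to the umbra sets inside $E^N\times E$ using ordinary set translation, and then read off the function inequality. This bypasses the ``Translation of an Umbra'' definition entirely and removes the vertical-coordinate obstacle you flagged; the clipping to $L=[0,l]$ plays no role once you work with the binary opening \eqref{alt-open} on the umbras.
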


\begin{proof}
\textit{I.}\\
\begin{align*}
(F\circ C)\oplus K &= ((F\ominus C)\oplus C)\oplus K\\
&=((F\ominus C)\oplus K\oplus C \text{,  by \textit{Prop.\ 3} of \cite{r1}}\\
&\subseteq ((F\oplus K)\ominus C)\oplus C \text{,  by Lemma \ref{thm:l4}, \textit{I}}\\
&=(F\oplus K)\circ C
\end{align*}
\textit{II.}\\
\begin{align*}
(f\circ c)\oplus k &= ((f\ominus c)\oplus c)\oplus k\\
&= ((f\ominus c)\oplus k)\oplus c \text{,  by \textit{Prop.\ 60} of \cite{r1}}\\
&\leq ((f\oplus k)\ominus c)\oplus c  \text{,  by Lemma \ref{thm:l4}, \textit{II}}\\
&=(f\oplus k)\circ c
\end{align*}
\end{proof}

\begin{lemma}
\label{thm:l6}
\begin{enumerate}[I.]
\item  $(F\bullet C)\oplus K$ $\subseteq (F\oplus K)\bullet C$
\item  $(f\bullet c)\oplus k$ $\leq (f\oplus k)\bullet c$ 
\end{enumerate}
\end{lemma}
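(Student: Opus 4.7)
The plan is to mirror the argument used for Lemma \ref{thm:l5}, replacing the opening decomposition $F\circ C=(F\ominus C)\oplus C$ with the closing decomposition $F\bullet C=(F\oplus C)\ominus C$. The key observation is that Lemma \ref{thm:l4} already gives us the crucial commutation-style inequality $(X\ominus C)\oplus K\subseteq (X\oplus K)\ominus C$, and closing is built from a dilation followed by an erosion, so a single application of Lemma \ref{thm:l4} to the intermediate set $X:=F\oplus C$ (respectively the intermediate image $f\oplus c$) should do the job, provided we can freely re-associate and commute the dilations $\oplus C$ and $\oplus K$.

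For part I, I would write
\[
(F\bullet C)\oplus K \;=\; \bigl((F\oplus C)\ominus C\bigr)\oplus K,
\]
then apply Lemma \ref{thm:l4} I with $X=F\oplus C$ to obtain
\[
\bigl((F\oplus C)\ominus C\bigr)\oplus K \;\subseteq\; \bigl((F\oplus C)\oplus K\bigr)\ominus C.
\]
Using commutativity and associativity of binary dilation (Proposition~3 of \cite{r1}), the right-hand side equals $\bigl((F\oplus K)\oplus C\bigr)\ominus C=(F\oplus K)\bullet C$, which is the desired inclusion.

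For part II, the same sequence of rewrites goes through at the grey-value level: expand $f\bullet c=(f\oplus c)\ominus c$, apply Lemma \ref{thm:l4} II with the intermediate image $f\oplus c$ in place of $f$, and use commutativity/associativity of grey-value dilation (Proposition~60 of \cite{r1}) to reorder $\oplus c$ and $\oplus k$. Concretely,
\begin{align*}
(f\bullet c)\oplus k &= \bigl((f\oplus c)\ominus c\bigr)\oplus k\\
&\leq \bigl((f\oplus c)\oplus k\bigr)\ominus c \quad\text{(Lemma \ref{thm:l4} II)}\\
&= \bigl((f\oplus k)\oplus c\bigr)\ominus c\\
&= (f\oplus k)\bullet c.
\end{align*}

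There is no real obstacle here: both parts reduce to one invocation of Lemma \ref{thm:l4} sandwiched between the trivial decomposition of closing into $\oplus$-then-$\ominus$ and a rearrangement of dilations. The only thing that requires care is ensuring that the associativity/commutativity rules for dilation used to swap $\oplus C$ and $\oplus K$ (and analogously for $\oplus c$ and $\oplus k$) are already available, which they are from the classical references cited above. Thus the lemma follows by the same two-line template used for Lemma \ref{thm:l5}, with opening replaced throughout by closing.
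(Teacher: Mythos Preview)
Your proposal is correct and matches the paper's proof essentially line for line: expand closing as $\oplus$-then-$\ominus$, apply Lemma~\ref{thm:l4} to the intermediate object $F\oplus C$ (resp.\ $f\oplus c$), and then commute the two dilations via Proposition~3 (resp.\ Proposition~60) of \cite{r1} to recognize $(F\oplus K)\bullet C$ (resp.\ $(f\oplus k)\bullet c$). The paper presents exactly this two-line template.
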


\begin{proof}
\textit{I.}\\
\begin{align*}
(F\bullet C)\oplus K &= ((F\oplus C)\ominus C)\oplus K\\
& \subseteq ((F\oplus C)\oplus K)\ominus C  \text{,  by Lemma \ref{thm:l4}, \textit{I}}\\
& =(F\oplus K)\bullet C
\end{align*}
\textit{II.}\\
\begin{align*}
(f\bullet c)\oplus k &= ((f\oplus c)\ominus c)\oplus k\\
& \leq ((f\oplus c)\oplus k)\ominus c  \text{,  by Lemma \ref{thm:l4}, \textit{II}}\\
&=((f\oplus k)\oplus c)\ominus c   \text{,  by \textit{Prop.\ 60} of \cite{r1}}\\
\end{align*}
\end{proof}

\begin{lemma}
\label{thm:l7}
$(f\ominus c)\bullet k$ $\leq (f\bullet k)\ominus c$
\end{lemma}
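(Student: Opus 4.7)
The plan is to reduce this lemma to the already-proven Lemma \ref{thm:l4}(II) together with two standard monotonicity/associativity properties of grey-value erosion, both of which can be justified via the Umbra Homomorphism Theorem \ref{thm:58}.

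The key observation is that $(f\ominus c)\bullet k = ((f\ominus c)\oplus k)\ominus k$ by definition of closing, and $(f\bullet k)\ominus c = ((f\oplus k)\ominus k)\ominus c$. So the lemma will follow if I can transform the inner inequality $(f\ominus c)\oplus k \leq (f\oplus k)\ominus c$ (which is exactly Lemma \ref{thm:l4}(II)) into an inequality on its $k$-erosion, and then rearrange the order of the two successive erosions on the right-hand side.

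First, I would invoke Lemma \ref{thm:l4}(II) to obtain $(f\ominus c)\oplus k \leq (f\oplus k)\ominus c$. Then I would erode both sides by $k$; this step needs monotonicity of erosion in the first argument, which follows by translating the inequality to umbras (if $g\leq h$ then $U[g]\subseteq U[h]$, hence $U[g]\ominus U[k]\subseteq U[h]\ominus U[k]$), and then back via the Umbra Homomorphism Theorem \ref{thm:58}. This yields
\begin{equation*}
(f\ominus c)\bullet k \;=\; ((f\ominus c)\oplus k)\ominus k \;\leq\; ((f\oplus k)\ominus c)\ominus k.
\end{equation*}

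The second key ingredient is the chain rule for erosion, $(g\ominus a)\ominus b = g\ominus (a\oplus b)$, which for the grey-value setting follows from the corresponding binary identity applied to umbras together with Theorem \ref{thm:58}. Applying this to rearrange the successive erosions on the right,
\begin{equation*}
((f\oplus k)\ominus c)\ominus k \;=\; (f\oplus k)\ominus (c\oplus k) \;=\; (f\oplus k)\ominus (k\oplus c) \;=\; ((f\oplus k)\ominus k)\ominus c,
\end{equation*}
which is precisely $(f\bullet k)\ominus c$. Chaining the two inequalities completes the proof.

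The main obstacle is simply making sure the chain-rule step is justified in the grey-value setting (commutativity of $\oplus$ on umbras is immediate, but the chain rule on umbras has to be cited or quickly verified from the binary definition of erosion as an intersection of translates). Everything else is a one-line invocation. No set of genuinely new ideas is required beyond combining Lemma \ref{thm:l4}(II) with the Umbra Homomorphism Theorem.
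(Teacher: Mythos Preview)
Your proposal is correct and follows essentially the same route as the paper: expand closing, apply Lemma~\ref{thm:l4}(II) to the inner dilation, erode both sides by $k$ using monotonicity, and then use the chain rule $(g\ominus a)\ominus b = g\ominus(a\oplus b)$ together with commutativity of $\oplus$ to rewrite the right-hand side as $(f\bullet k)\ominus c$. The paper cites Propositions~59 and~61 of \cite{r1} for commutativity and the chain rule, whereas you justify them via the Umbra Homomorphism Theorem; both are fine and the arguments are otherwise identical.
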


\begin{proof}
\begin{align*}
(f\ominus c)\bullet k & = ((f\ominus c)\oplus k)\ominus k \\
& \leq ((f\oplus k)\ominus c)\oplus k \text{,  by Lemma \ref{thm:l4}, \textit{II}}\\
& = (f\oplus k)\ominus (c\oplus k) \text{,  by \textit{Prop.\ 61} of \cite{r1}}\\
& = (f\oplus k) \ominus (k\oplus c) \text{,  by \textit{Prop.\ 59} of \cite{r1}}\\
&= ((f\oplus k)\ominus k)\ominus c \text{,  by \textit{Prop.\ 61} of \cite{r1}}\\
& = (f\bullet k)\ominus c
\end{align*}
\end{proof}




\end{appendices}




\end{document}